\DeclareMathOperator*{\argmax}{arg\,max}
\DeclareMathOperator*{\argmin}{arg\,min}
\newtheorem{theorem}{Theorem} 
\newtheorem{lemma}[theorem]{Lemma}
\newtheorem{proposition}[theorem]{Proposition}
\newtheorem{conjecture}[theorem]{Conjecture}
\newtheorem{corollary}[theorem]{Corollary}
\theoremstyle{remark}
\theoremstyle{definition}
\newtheorem{definition}[theorem]{Definition}
\DeclareSymbolFontAlphabet{\mathbbol}{bbold}
\DeclareSymbolFontAlphabet{\mathbb}{AMSb}
\newcommand*\circled[1]{\tikz[baseline=(char.base)]{
            \node[shape=circle,draw,inner sep=2pt] (char) {#1};}}
\definecolor{daxColor}{HTML}{A020F0}
\newcommand{\red}[1]{\textcolor{red}{#1}}
\newcommand{\blue}[1]{\textcolor{blue}{#1}}
\def\RF{\mathrm{RF}}
\def\Gr{\mathrm{G}}
\def\RG{\mathrm{RG}}
\def\RSG{\mathrm{RSG}}
\def\BPSP{\mathrm{BPSP}}
\def\curry{\mathrm{curry}}
\def\dl{\#_\mathrm{dl}}
\def\Wtot{W_{\mathrm{tot}}}
\def\wt{\mathrm{wt}}
\newcommand{\qinc}{
Quantum Innovation Centre (Q.InC), Agency for Science, Technology and Research (A*STAR), 2 Fusionopolis Way, Innovis \#08-03, Singapore 138634, Republic of Singapore\looseness=-1}
\newcommand{\sutd}{Science, Mathematics and Technology Cluster, Singapore University of Technology and Design, 8 Somapah Road, Singapore 487372, Republic of Singapore\looseness=-1}
\newcommand{\ihpc}{Institute of High Performance Computing (IHPC), Agency for Science, Technology and Research (A*STAR), 1 Fusionopolis Way, \#16-16 Connexis, Singapore 138632, Republic of Singapore\looseness=-1}
\newcommand{\cqct}{Centre for Quantum Computation and Communication Technologies (CQC2T), Department of Quantum Science and Technology, Research School of Physics, Australian National University, Acton 2601, Australia\looseness=-1}
\newcommand{\cqt}{Centre for Quantum Technologies (CQT), National University of Singapore, Singapore 117543, Republic of Singapore\looseness=-1}
\newlength{\sqA}
\newlength{\sqC}
\begin{document}

\title{Classical and Quantum Heuristics for the Binary Paint Shop Problem}

\author{V Vijendran\,\orcidlink{0000-0003-3398-1821}}
\email{vjqntm@gmail.com}
\affiliation{\cqt}
\affiliation{\qinc}
\affiliation{\cqct}

\author{Dax Enshan Koh\,\orcidlink{0000-0002-8968-591X}}
\affiliation{\qinc}
\affiliation{\ihpc}
\affiliation{\sutd}

\author{Ping Koy Lam\,\orcidlink{0000-0002-4421-601X}}
\affiliation{\qinc}
\affiliation{\cqt}
\affiliation{\cqct}

\author{Syed M Assad\,\orcidlink{0000-0002-5416-7098}}
\affiliation{\qinc}
\affiliation{\cqct}


\begin{abstract}
The Binary Paint Shop Problem (BPSP) is an $\mathsf{APX}$-hard optimisation problem in automotive manufacturing: given a sequence of $2n$ cars, comprising $n$ distinct models each appearing twice, the task is to decide which of two colours to paint each car so that the two occurrences of each model are painted differently, while minimising consecutive colour swaps. The key performance metric is the paint swap ratio, the average number of colour changes per car, which directly impacts production efficiency and cost. Prior work showed that the Quantum Approximate Optimisation Algorithm (QAOA) at depth $p=7$ achieves a paint swap ratio of $0.393$, outperforming the classical Recursive Greedy (RG) heuristic with an expected ratio of $0.4$ [Phys.\ Rev.\ A 104, 012403 (2021)]. More recently, the classical Recursive Star Greedy (RSG) heuristic was conjectured to achieve an expected ratio of $0.361$. In this study, we develop the theoretical foundations for applying QAOA to BPSP through a formal reduction of BPSP to weighted MaxCut, and use this framework to benchmark two state-of-the-art low-depth QAOA variants---eXpressive QAOA (XQAOA) and Recursive QAOA (RQAOA)---at depth $p=1$ (denoted XQAOA$_1$ and RQAOA$_1$) against the strongest classical heuristics known to date. Across instances ranging from $2^7$ to $2^{12}$ cars, XQAOA$_1$ achieves an average ratio of $0.357$, surpassing RQAOA$_1$ and all known classical heuristics, including the conjectured performance of RSG. Surprisingly, RQAOA$_1$ shows diminishing performance as the problem size increases: despite employing provably optimal QAOA$_1$ parameters at each recursive step, it is outperformed by RSG on most $2^{11}$-car instances and on all $2^{12}$-car instances. To our knowledge, this is the first study to report RQAOA$_1$’s performance degradation at scale. In contrast, XQAOA$_1$ stays robust across all sizes, indicating strong potential to asymptotically surpass all known heuristics.
\end{abstract}

\maketitle

\section{Introduction}
The advent of noisy intermediate-scale quantum (NISQ) devices~\cite{preskill2018quantum,preskill2021quantum,cheng2023noisy} has opened new opportunities toward achieving near-term quantum advantage~\cite{huang2025vast}. Although constrained by limited qubit connectivity, finite coherence times, and the absence of fault-tolerant quantum error correction, such devices can nevertheless execute relatively shallow quantum circuits with reasonable fidelity~\cite{beverland2022assessing,lau2022nisq}. These limitations have spurred the development of algorithms designed to maximise the utility of scarce quantum resources, most notably variational quantum algorithms (VQAs)~\cite{cerezo2021variational,bharti2022noisy,farhi2014quantum,peruzzo2014variational}, which combine quantum state preparation with classical parameter optimisation to adapt to device imperfections while tackling classically hard problems. A prominent example is the Quantum Approximate Optimisation Algorithm (QAOA)~\cite{farhi2014quantum}, which has been extensively studied for combinatorial optimisation~\cite{farhi2014quantum,farhi2015quantum,vikstaal2020applying,harrigan2021quantum,farhi2022quantum,anschuetz2019variational,harwood2021formulating,amaro2022case,azad2022solving,fitzek2022applying,fitzek2024applying,ruan2023quantum} through numerical benchmarks \cite{zhou2020quantum,lotshaw2021empirical,golden2023numerical}, theoretical analyses~\cite{bittel2021training,hadfield2022analytical,wurtz2021maxcut,ng2024analytical,benchasattabuse2025lower}, and numerous algorithmic variants~\cite{bravyi2020obstacles,yu2022quantum,chandarana2022digitized,herrman2022multi,vijendran2023expressive,gaidai2024performance,yoshioka2023fermionic,bartschi2020grover, zhu2022multi,blekos2024review}. Beyond optimisation, QAOA is also computationally universal~\cite{lloyd2018quantum,morales2020universality} and capable of generating distributions that are classically hard to sample from~\cite{farhi2016quantum,dalzell2020how}.

One application of particular interest is the Binary Paint Shop Problem (BPSP)~\cite{epping2001some,epping2004complexity,gupta2013approximability}, an $\mathsf{APX}$-hard optimisation problem arising in automotive manufacturing. The input is a sequence of $2n$ cars comprising $n$ distinct models, each appearing twice. The objective is to paint each car in one of two colours such that the two occurrences of every model receive different colours, while minimising the number of colour transitions between consecutive cars. Performance is quantified by the paint swap ratio---defined as the average number of adjacent pairs of cars painted in different colours---which directly reflects production efficiency and material usage. Streif et al.~\cite{streif2021beating} showed that QAOA at depth $p=7$ (QAOA$_7$) achieves an expected paint swap ratio of $0.393$, surpassing the Recursive Greedy (RG) algorithm~\cite{andres2011some, andres2021greedy}, which has a proven expected ratio of $0.4$ and was, at the time, the best known classical heuristic for the BPSP. More recently, a new classical heuristic---the Recursive Star Greedy (RSG) heuristic~\cite{hanvcl2023improved}---was proposed, with a conjectured expected ratio of $0.361$. While QAOA$_7$’s ability to outperform RG highlights the potential of quantum algorithms for solving complex real-world problems, matching or exceeding the conjectured performance of RSG would require quantum circuits far deeper than current devices can sustain. Indeed, noise accumulates rapidly with circuit depth~\cite{xue2021effects,wang2021noise,marshall2020characterizing,alam2019analysis,alam2020design,streif2021quantum,anschuetz2022quantum,stilck2021limitations}, and while error mitigation can help~\cite{weidinger2023error,shaydulin2021error,botelho2022error,koh2024readout,kandala2019error,zhou2020quantum,koczor2021exponential,leymann2020bitter}, its overhead often outweighs the modest performance gains \cite{takagi2022fundamental,quek2024exponentially}.

Motivated by the limitations of deep QAOA circuits, we investigate two state-of-the-art low-depth variants---eXpressive QAOA (XQAOA)~\cite{vijendran2023expressive} and Recursive QAOA (RQAOA)~\cite{bravyi2020obstacles}---both of which have outperformed the best known classical algorithms for the maximum cut (MaxCut) problem at $p=1$. We begin by establishing a formal reduction of BPSP to the weighted MaxCut problem and subsequently to an Ising spin model. Central to this reduction is the Initial Car Colour (ICC) encoding, which we formalise based on a technique introduced by Streif et al.~\cite{streif2021beating}, that halves the number of decision variables. We further derive exact interaction strengths from BPSP constraints and prove new structural properties of the resulting graphs, including a maximum degree bound and an explicit relation between total graph weight and paint swap cost. Together, these results provide a rigorous foundation for applying QAOA and its variants to BPSP and may be of independent interest for related optimisation problems.

We benchmark RQAOA$_1$ and XQAOA$_1$ on random BPSP instances with up to $2^{12}$ cars, comparing against the best known classical heuristics. Our simulations show that XQAOA$_1$ achieves an average swap ratio of $0.357$, outperforming all tested classical heuristics—including the conjectured $0.361$ performance of RSG—as well as RQAOA$_1$ on most instances. In contrast, RQAOA$_1$, despite employing provably optimal QAOA$_1$ parameters at each recursive step, exhibits diminishing performance with problem size, falling behind RSG for $2^{11}$-car instances and larger. To our knowledge, this is the first report of RQAOA$_1$’s performance degradation at scale. The consistent robustness of XQAOA$_1$ across all tested sizes suggests strong potential to surpass all known heuristics in the asymptotic limit.

We note that, during the preparation of this manuscript, a related study by Mooney et al.~\cite{mooney2025optimization} appeared on the arXiv, also applying RQAOA to the BPSP. Their approach differs from ours in that they avoid direct optimisation of QAOA parameters within each recursion. Instead, they leverage parameter concentration~\cite{akshay2021parameter} and parameter transfer~\cite{shaydulin2023parameter}, setting QAOA parameters based on values optimised for other instances. Benchmarking RQAOA for $1 \leq p \leq 3$ on instances of up to 20 cars, they report no noticeable performance loss under parameter transfer. While these results indicate that transferred parameters can sustain RQAOA performance on small instances, our findings reveal that even with fully optimised parameters, RQAOA suffers diminishing returns as problem size increases. Beyond the shared application of RQAOA to the BPSP, the focus and conclusions of our works are distinct.

The remainder of this paper is organised as follows. \Cref{prelim} reviews the BPSP, classical heuristics, and relevant quantum optimisation algorithms. \Cref{spinglass_sec} describes the reduction of BPSP to MaxCut and its equivalent Ising formulation. \Cref{num_results_sec} reports our numerical benchmarking results, while \cref{disscussions_sec} analyses their implications and limitations. Finally, \cref{conc_sec} concludes with a summary of insights and directions for future work.

\section{Preliminaries} \label{prelim}

\subsection{The Binary Paint Shop Problem} \label{bpsp_sec}

\begin{figure*}[t]
\includegraphics[width=\textwidth]{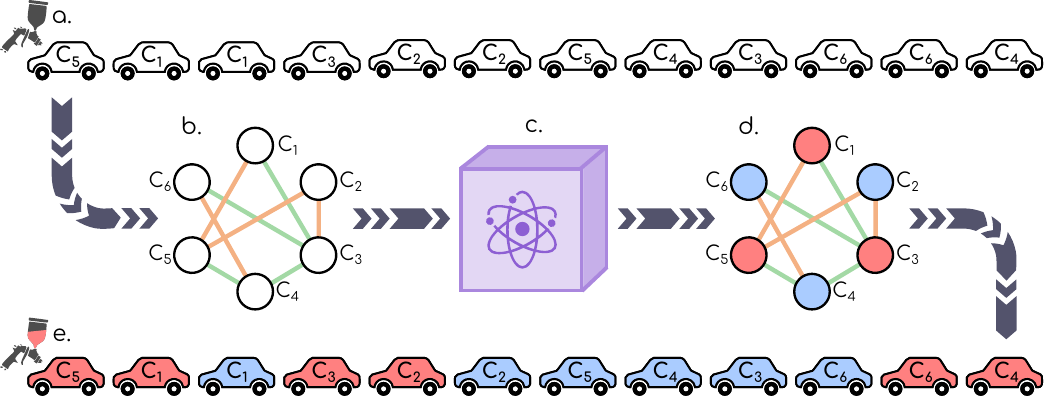}
  \captionsetup{justification=raggedright, singlelinecheck=false}
    \caption{\textbf{Solving the BPSP on a Quantum Computer Using the ICC Encoding.} This figure illustrates the process of solving the BPSP with the Initial-Car-Colour (ICC) encoding using QAOA and its variants. a) The process starts with a BPSP instance, which in this example consists of 6 cars of length 12 forming the sequence C$_5$C$_1$C$_1$C$_3$C$_2$C$_2$C$_5$C$_4$C$_3$C$_6$C$_6$C$_4$. b) Using the ICC encoding the BPSP instance with 6 cars is mapped to an Ising model with 6 spins. In the depicted graphs, orange edges symbolise an edge weight of +1, while green edges indicate an edge weight of -1. c) The Ising model is then solved on quantum computer using QAOA or its variants. d) This is the ground state solution of the Ising model, where the colour of node (based on the spin) determines the colour of the first occurrence of the car; using this colour the second occurrence of the car is painted the opposite colour. e) This is the solved BPSP instance with a total of 4 paint swaps.}
    \label{car_paints}
\end{figure*}

The binary paint shop problem (BPSP)~\cite{epping2001some,epping2004complexity,gupta2013approximability} is a combinatorial optimisation problem that arises in the paint shops of the automobile industry. A production line receives a fixed sequence of $n$ distinct car models, each appearing exactly twice in the sequence. The order is predetermined by upstream and downstream factory stages—such as the body shop before the paint shop, and trim/chassis/assembly after it—and cannot be rearranged. Each car must be painted in two different colours (e.g., red and blue), but the order in which each car receives its two colours is left to be decided. This assignment, called a \emph{colouring}, determines how colours are applied across the sequence. The objective is to choose a colouring that minimises the number of colour changes between consecutive cars, since every change requires additional setup time, increases paint waste, and reduces throughput. 

Formally, given a set of $n$ distinct cars $C=\left\{c_1, \ldots, c_n\right\}$, and a sequence of car instances $w=(w_1, \ldots, w_{2 n})$ with $w_i \in C$ where each $c \in C$ appears exactly twice, the BPSP asks for a colouring $f=\left(f_1, \ldots, f_{2 n}\right) \in\{r, b\}^{2 n}$ that minimises the total number of colour changes between consecutive cars, subject to the validity constraint that the two occurrences of each car receive opposite colours (i.e., if $w_i=w_j$ and $i \neq j$, then $f_i \neq f_j$). A convenient formulation is:
\begin{equation}
\begin{array}{ll@{}ll}
\text{Minimize}  & \displaystyle\sum\limits_{i=1}^{2n-1} \big[ f_i \neq f_{i+1} \big] &\\
\text{subject to} & f_i \in \{r,b\}, \quad \forall i \in \{1,\ldots,2n\} &\\
                  & f_i \neq f_j, \quad \forall i,j:\ w_i = w_j,\ i \neq j
\end{array}
\label{binary_paint_shop_problem}
\end{equation}
where $[\,\cdot\,]$ denotes the Iverson bracket which evaluates to 1 if its condition holds and 0 otherwise.


The BPSP is $\mathsf{NP}$-complete in its decision form~\cite{epping2004complexity} and $\mathsf{APX}$-hard in its optimisation form~\cite{epping2001some, gupta2013approximability}. In other words, unless $\mathsf{P} = \mathsf{NP}$, no polynomial-time classical algorithm can solve all instances exactly, nor approximate them arbitrarily well via a polynomial-time approximation scheme. Moreover, assuming the Unique Games Conjecture, BPSP is $\mathsf{NP}$-hard to approximate within any constant factor~\cite{gupta2013approximability}. This is in contrast to problems such as MaxCut, which admit polynomial-time classical algorithms that achieve constant-factor approximations~\cite{goemans1995improved}.

Given these complexities, several heuristic approaches have been proposed for the BPSP, each producing solutions with colour changes linear in $n$~\cite{amini2010greedy,andres2021greedy,andres2011some}. The simplest is the red-first algorithm~\cite{andres2011some}, which colours the first occurrence of every car red and the second blue. The greedy algorithm~\cite{amini2010greedy,andres2021greedy} improves on this by selecting at each step the colour that minimises immediate paint changes, switching only when necessary to satisfy the requirement that each car be painted in both colours. The recursive greedy (RG) algorithm~\cite{andres2011some,andres2021greedy} offers a further improvement by recursively removing the last-occurring car from the sequence, optimally colouring the reduced instance, and then reinserting the removed pair in the best possible way. More recently, Hančl et al.~\cite{hanvcl2023improved} proposed the recursive star greedy (RSG) algorithm, a refinement of RG that introduces a temporary ``undecided" colour (star) for car pairs that can be coloured either way without affecting the current swap count; during recursion, if placing a car next to a starred car would otherwise increase the swap count, both occurrences of that starred car are recoloured with binary colours to avoid the increase, and the final colouring is obtained by replacing any remaining stars with arbitrary legal binary assignments. Detailed explanations of these heuristics are provided in \cref{eq:coloring_schemes}. The asymptotic performances of red-first, greedy, and RG are proven, with $\mathbb{E}_{\text{RF}}(\Delta_C) = 2n/3$, $\mathbb{E}_{\text{G}}(\Delta_C) = n/2$, and $\mathbb{E}_{\text{RG}}(\Delta_C) = 2n/5$; in contrast, RSG’s performance is unproven but conjectured to be $\mathbb{E}_{\text{RSG}}(\Delta_C) = 0.361n$, an improvement over RG. Despite these advances, numerical results indicate that the optimal number of colour changes grows linearly in $n$, but with a much smaller constant factor than current heuristics achieve. A recent lower bound of $\mathbb{E}_{\text{OPT}}(\Delta_C) \geq 0.214n - o(n)$ by Hančl et al.~\cite{hanvcl2023improved} confirms that existing heuristics remain far from optimal in the asymptotic limit, leaving room for quantum algorithms to yield improvements.

\begin{table}[htpb]
\centering
\begin{tabularx}{\columnwidth}{>{\centering\arraybackslash}X >{\centering\arraybackslash}p{4.25cm} >{\centering\arraybackslash}X}
\toprule
\textbf{Heuristic} & \textbf{Solution} & \textbf{\# Swaps} \\
\midrule
Red-First & \red{C$_5$C$_1$}\blue{C$_1$}\red{C$_3$C$_2$}\blue{C$_2$C$_5$}\red{C$_4$}\blue{C$_3$}\red{C$_6$}\blue{C$_6$C$_4$} & 7 \\[0.5em]
Greedy    & \blue{C$_5$C$_1$}\red{C$_1$C$_3$C$_2$}\blue{C$_2$}\red{C$_5$C$_4$}\blue{C$_3$C$_6$}\red{C$_6$}\blue{C$_4$} & 6 \\[0.5em]
RG   & \red{C$_5$C$_1$}\blue{C$_1$C$_3$C$_2$}\red{C$_2$}\blue{C$_5$}\red{C$_4$C$_3$C$_6$}\blue{C$_6$C$_4$} & 5 \\[0.5em]
RSG   & \blue{C$_5$C$_1$}\red{C$_1$}\blue{C$_3$C$_2$}\red{C$_2$C$_5$C$_4$C$_3$C$_6$}\blue{C$_6$C$_4$} & 4 \\
\bottomrule
\end{tabularx}
\captionsetup{justification=raggedright, singlelinecheck=false}
\caption{This table compares the paint sequences and corresponding number of colour swaps for a 6-car BPSP instance, showing the solutions obtained by the classical heuristics together with their swap counts. In this example, red-first produces the highest number of swaps, while RSG achieves the fewest and matches the optimal solution.}
\label{heuristic_comparison}
\end{table}

We conclude this subsection with an illustrative example of the BPSP, shown in \cref{car_paints} with the corresponding results summarised in \cref{heuristic_comparison}. The example compares the four classical heuristics described earlier on the same instance. As expected from their asymptotic behaviour, red-first produces the highest number of swaps, while RSG achieves the fewest and, in this case, matches the optimal solution. It is worth noting that although RG generally outperforms the standard greedy algorithm---as it does in this example---there are instances where the greedy algorithm produces fewer paint swaps~\cite{andres2021greedy}.

\subsection{Quantum Approximate Optimisation Algorithm} \label{qaoa_sec}

The Quantum Approximate Optimisation Algorithm (QAOA)~\cite{farhi2014quantum} is a variational quantum algorithm for approximating solutions to combinatorial optimisation problems. The problem is encoded into a problem Hamiltonian $H_P$, whose ground state represents the optimal solution. QAOA seeks to approximate this ground state by preparing a depth-$p$ variational quantum circuit, parameterised by $2 p$ real angles, and optimising them to minimise the expectation value $\left\langle H_P\right\rangle$. Measuring the resulting quantum state in the computational basis produces candidate bit strings, from which a low-cost solution is selected.

QAOA alternates between two parameterised unitaries: one generated by the problem Hamiltonian $H_P$, which encodes the cost function and biases the state toward low-energy configurations, and one by the mixing Hamiltonian $H_M$, which drives transitions between computational basis states. For an $n$-qubit instance, we take
\begin{equation}
    H_P = \sum_{j=1}^n \sum_{k=1}^n J_{j k} Z_j Z_k, \quad H_M = \sum\limits^n_{j=1} X_j, \label{qising_eqn}
\end{equation}
where $Z_j$ and $X_j$ are Pauli operators acting on qubit $j$, and $J_{jk}$ are problem-specific couplings. 

The corresponding parameterised unitaries are
\begin{align}
    U(H_P, \gamma) &= e^{-i \gamma H_P} = \prod_{j, k}^n e^{-i \gamma J_{jk} Z_j Z_k},\\
    U(H_M, \beta) &= e^{-i \beta H_M} = \prod_{j = i}^n e^{-i \beta X_j},
\end{align}
with $\gamma \in \mathbb{R}$ and $\beta \in[0, \pi]$.

Starting from the uniform superposition $|+\rangle^{\otimes n}$, the level-$p$ QAOA (QAOA$_p$) is
\begin{equation}
   |\psi(\boldsymbol{\beta}, \boldsymbol{\gamma})\rangle=\prod_{l=1}^p U\left(H_M, \beta_l\right) U\left(H_P, \gamma_l\right)|+\rangle^{\otimes n}. 
\end{equation}
The parameters $(\boldsymbol{\beta}, \boldsymbol{\gamma})$ are optimised classically to obtain
\begin{equation}
\label{eq:QAOAoptparas}
    (\bm{\beta}^*, \bm{\gamma}^*) = \underset{\boldsymbol{\beta}, \boldsymbol{\gamma}}{\text{argmin}} \langle \psi(\boldsymbol{\beta}, \boldsymbol{\gamma}) | H_P | \psi(\boldsymbol{\beta}, \boldsymbol{\gamma}) \rangle  \; .
\end{equation}
The optimised state $\left|\psi\left(\boldsymbol{\beta}^*, \boldsymbol{\gamma}^*\right)\right\rangle$ is then prepared on a quantum computer, measured repeatedly in the computational basis, and the lowest-cost bit string obtained is reported as the solution.

\subsection{Expressive QAOA} \label{xqaoa_sec}

The Expressive QAOA (XQAOA)~\cite{vijendran2023expressive} is an overparameterised extension of standard QAOA, designed to increase ansatz expressiveness at shallow circuit depths. While standard QAOA uses a single $\gamma$ for all terms in $H_P$ and a single $\beta$ for all terms in $H_M, \mathrm{XQAOA}$ assigns independent parameters to each term of both Hamiltonians. This allows different qubits and couplings to be rotated by distinct angles, enabling a richer exploration of the solution space.

XQAOA further augments the standard mixer by introducing an additional $\boldsymbol{\alpha}$-dependent unitary acting on Pauli-$Y$ operators. Let $\boldsymbol{A}=\left(A_j\right)_{j=1}^n$ denote the collection of single-qubit Pauli-$Y$ operators with $A_j=Y_j$, and let $\boldsymbol{B}=\left(B_j\right)_{j=1}^n$ denote the corresponding Pauli-$X$ operators with $B_j=X_j$. For the problem Hamiltonian, we write $\boldsymbol{C}=\left(H_{P_{j k}}\right)_{(j, k) \in E}$, where each $H_{P_{j k}}=J_{j k} Z_j Z_k$ corresponds to the Ising interaction on edge $(j, k)$.

The mixing unitaries with parameter vectors $\boldsymbol{\alpha}=\left(\alpha_1, \ldots, \alpha_n\right)$ and $\boldsymbol{\beta}=\left(\beta_1, \ldots, \beta_n\right)$ are defined as
\begin{align}
    U(\boldsymbol{A}, \boldsymbol{\alpha}) &=e^{-i \sum_{j=1}^n \alpha_j A_j}=\prod_{j=1}^n e^{-i \alpha_j Y_j},\\
    U(\boldsymbol{B}, \boldsymbol{\beta}) &=e^{-i \sum_{j=1}^n \beta_j B_j}=\prod_{j=1}^n e^{-i \beta_j X_j},
\end{align}
where $\alpha_j, \beta_j \in[0, \pi]$ are independent angles acting locally on each qubit. The problem unitary is likewise parameterised per interaction term as
\begin{equation}
    U(\boldsymbol{C}, \boldsymbol{\gamma})  = \prod_{j,k}^n e^{-i \gamma_{jk} J_{jk}Z_jZ_k},
\end{equation}
where $\gamma_{j k} \in \mathbb{R}$ are the parameters associated with each $Z_j Z_k$ term in the problem Hamiltonian $H_P$.

At $p = 1$, XQAOA prepares the state
\begin{align}
    |\psi(\boldsymbol{\alpha}, \boldsymbol{\beta}, \boldsymbol{\gamma})\rangle  &=  U(\boldsymbol{A}, \boldsymbol{\alpha}) U(\boldsymbol{B}, \boldsymbol{\beta}) U(\boldsymbol{C}, \boldsymbol{\gamma}) |+\rangle^{\otimes n} \\
    &=  \prod_j^n e^{-i \alpha_j  Y_j} e^{-i \beta_j  X_j} \prod_{j,k}^n e^{-i \gamma_{jk} J_{jk}Z_jZ_k} |+\rangle^{\otimes n}.\nonumber
\end{align}

The XQAOA ansatz admits four notable configurations, distinguished by the constraints placed on the mixing unitary angles. The Multi-Angle QAOA (MA-QAOA)~\cite{herrman2022multi} is obtained by setting all $\alpha_j = 0$, yielding a multi-angle extension of the standard X-only QAOA mixer. The XY mixer is the most general form, allowing independent $\alpha_j$ and $\beta_j$ for each qubit. The Y mixer restricts the ansatz to $Y$-rotations by setting all $\beta_j = 0$, while the X=Y mixer applies both $X$ and $Y$ rotations with a shared angle on each qubit, enforcing $\alpha_j = \beta_j$. Numerical results indicate that the X=Y mixer outperforms the other three configurations; therefore, all references to XQAOA in this work pertain to the X=Y mixer.

\subsection{Recursive QAOA} \label{rqaoa_sec}

The Recursive QAOA (RQAOA)~\cite{bravyi2020obstacles} is a non-local variant of the QAOA designed to enhance QAOA's performance, especially at shallow circuit depths. RQAOA leverages QAOA as a subroutine to recursively reduce the problem size until it becomes tractable for brute-force methods. Specifically, RQAOA$_1$---RQAOA that uses QAOA$_1$ as its subroutine---has been shown to outperform state-of-the-art classical algorithms for MaxCut and similar problems~\cite{bravyi2020obstacles, bravyi2021classical, bravyi2022hybrid}. Additionally, RQAOA$_1$ provides performance guarantees for particular graph structures~\cite{bravyi2020obstacles, bae2024recursive, bae2024modified} and has been successfully applied to small-scale, real-world problems~\cite{gulbahar2024majority, palackal2023quantum}.

\begin{algorithm}[htpb]
\SetAlgoLined
\SetArgSty{}
\SetKwInput{KwData}{Input}
\SetKwInput{KwResult}{Output}

\KwData{Problem Hamiltonian $C_0$ with $n$ variables $V_0 = \{1, \ldots, n\}$, recursion depth $\eta$}
\KwResult{Approximate ground state of the Hamiltonian $C_0$}

\For{$t \gets 0$ \KwTo $n - \eta$}{
    Prepare QAOA state $\left|\psi_t(\boldsymbol{\gamma}, \boldsymbol{\beta})\right\rangle$ for $C_t$
    
    Optimise parameters:
    \vspace{-0.1cm}
    \[
    (\boldsymbol{\gamma}^*, \boldsymbol{\beta}^*) \gets
    \underset{\boldsymbol{\gamma}, \boldsymbol{\beta}}{\operatorname{argmin}} 
    \langle\psi_t (\boldsymbol{\gamma}, \boldsymbol{\beta})| C_t | \psi_t (\boldsymbol{\gamma}, \boldsymbol{\beta})\rangle
    \]
    \vspace{-0.25cm}
    
    Compute $M_{ij}$ for all $\{i, j\} \in V_t$:
    $$
    M_{ij} \gets \langle\psi_t (\boldsymbol{\gamma}^*, \boldsymbol{\beta}^*)| Z_i Z_j | \psi_t (\boldsymbol{\gamma}^*, \boldsymbol{\beta}^*)\rangle
    $$
    
    Identify the pair $\{u, v\} \gets \underset{\{i,j\} \in V_t}{\operatorname{argmax}} \left|M_{ij}\right|$
    
    Impose Constraint: $Z_{v} = \operatorname{sign}(M_{uv}) Z_{u}$
    
    Update Variable Set: $V_{t+1} \gets V_t \setminus \{v\}$
    
    Update Hamiltonian:
    \vspace{-0.2cm}
    \begin{equation*}
         C_{t+1} \gets \text{sign}(M_{uv}) \hspace{-0.5em} \sum_{i \in V_{t+1}} \hspace{-0.5em} J_{iv}Z_iZ_u + \hspace{-0.9em} \sum_{\{i,j\} \in V_{t+1}} \hspace{-1.25em} J_{ij}Z_iZ_j
    \end{equation*}
    \vspace{-0.5cm}
}

Solve $C_t$ via brute-force when $|V_t| \leq \eta$

Solve $C_0$ by backtracking through constraints

\Return{Approximate ground state of $C_0$}
\caption{RQAOA for Ising Models}
\label{rqaoa_procedure}
\end{algorithm}

At each recursive step, RQAOA executes QAOA on the current problem Hamiltonian $H_P$, calculating the correlation metrics $M_{i j}=\left\langle Z_i Z_j\right\rangle$ for all pairs of qubits $\{i, j\}$. The pair with the highest absolute correlation, identified as $\{u, v\}=\operatorname{argmax}_{\{i, j\}}\left|M_{i j}\right|$, is selected. If $M_{u v}>0$, the qubits $u$ and $v$ are positively correlated, and the constraint $Z_v=Z_u$ is imposed. Conversely, if $M_{u v}<0$, the qubits are anti-correlated, leading to the constraint $Z_v=-Z_u$. These constraints effectively eliminate one qubit by expressing its state in terms of the other, thereby reducing the problem size. This process is iteratively applied to the updated Hamiltonian until the problem size is sufficiently small to be solved exactly using brute-force methods. The solution to the reduced instance is then propagated back through the imposed constraints to derive the solution to the original problem. The detailed steps of RQAOA are outlined in algorithm \ref{rqaoa_procedure}.

\section{Recasting BPSP as an Ising Model} \label{spinglass_sec}

We now present a complete and provably equivalent mapping of the BPSP to an Ising model suitable for quantum optimisation. Our approach formalises the Initial-Car-Colour (ICC) encoding, inspired by Streif et al.~\cite{streif2021beating}, which reduces the number of decision variables by half. Using this encoding we construct, for any BPSP instance, a weighted graph whose MaxCut value directly corresponds to the optimal paint swap cost. In particular, we establish a polynomial-time reduction from BPSP to weighted MaxCut, which in turn, provides an exact Ising Hamiltonian whose ground state encodes the optimal solution. The full technical development, with proofs and derivations, is given in the appendix; here we present only the essential ingredients needed for the remainder of the paper.

\subsection{Initial Car Colour Encoding}

A BPSP instance is a sequence $x \in (1,\ldots,n)^{2n}$ in which every symbol occurs exactly twice. A colouring of $x$ is given by a string $f \in\{r, b\}^{2 n}$, where $r$ (red) and $b$ (blue) denote the two paint colours. The colouring $f$ is valid if, whenever two positions $i$ and $j$ in the sequence correspond to the same car (i.e. $x_i = x_j$), the assigned colours differ ($f_i \neq f_j$). This direct encoding, however, contains redundant information: once the colour of the first occurrence of each car is fixed, the second occurrence must necessarily take the opposite colour. In other words, if the first appearance of car $j$ is assigned colour $f_i$, then its second appearance is automatically $\neg f_i$, where $\neg$ flips red to blue and vice versa. This motivates the Initial-Car-Colour (ICC) encoding, which records only the colours of the first occurrences. Thus, every valid colouring can be described succinctly by an $n$-bit string instead of a $2 n$-bit one. For a rigorous treatment of this encoding, we refer the reader to \cref{sec:icc_encoding_scheme}.

To formalise this compression, we define the expansion map $\mathcal{E}_n$. Given a BPSP instance $x$ and an ICC colouring $z \in\{r, b\}^n$, the corresponding full colouring is defined as $f= \mathcal{E}_n(x, z) \in\{r, b\}^{2 n}$, with entries
\begin{equation}
    \mathcal{E}_n(x, z)_i =\neg^{\left[x_i \in\left\{x_1, \ldots, x_{i-1}\right\}\right]} z_{x_i} \quad \forall i \in [2n],
\end{equation}
where $[\,\cdot\,]$ denotes the Iverson bracket which evaluates to 1 if its condition holds and 0 otherwise. In words, $\mathcal{E}_n$ assigns to each position the colour of that car's first occurrence, flipped if the car has already appeared earlier in the sequence. For example, for the BPSP instance $x=(1,2,1,3,3,2)$ and an ICC string $z=rrb$, the expansion $\mathcal E_3(x,z)$ yields the valid BPSP colouring $rrbrbb$. The mapping from ICC strings to valid colourings is bijective (see proposition~\ref{icc_bijective_prop}), ensuring no information is lost. 

With the ICC encoding and the expansion map $\mathcal{E}_n$, the BPSP objective can be expressed directly in terms of ICC variables $z \in\{r, b\}^n$ (see proposition~\ref{icc_cost_prop}). Given a BPSP instance $x$ and an ICC colouring $z$, the number of paint swaps is
\begin{equation}
    \tilde{\xi}_n(x, z)=\sum_{i=1}^{2 n-1}\left[\neg^{\eta(x, i)} z_{x_i} \neq z_{x_{i+1}}\right],
    \label{icc_obj}
\end{equation}
where the auxiliary function $\eta(x, i)$ determines whether the pair $\left(x_i, x_{i+1}\right)$ contributes a ``same-colour" or ``opposite-colour" preference. It is defined by
\begin{equation}
    \eta(x,i) \! = \! [x_{i+1} \! \in \! \{x_1,\ldots, x_i\} ] \! \oplus \! [x_i \! \in \! \{x_1,\ldots, x_{i-1}\} ],
\end{equation}
with $\oplus$ denoting the exclusive-or operation. In words, $\eta(x, i)=1$ when the boundary $(x_i, x_{i+1})$ consists of one first occurrence and one second occurrence, and $\eta(x, i)=0$ when both are of the same type. Thus, $\eta$ encodes the local parity between neighbours, enabling the swap count to be expressed as a quadratic form in the ICC variables and setting the stage for the graph and Ising formulations.

\subsection{BPSP Graph Construction}

We associate each BPSP instance with a corresponding BPSP graph whose vertices correspond to cars, edges connect cars that appear consecutively in the sequence, and weights reflect the cost preference for colour assignments.

Given a BPSP instance $x=\left(x_1, x_2, \ldots, x_{2 n}\right)$, we associate to it a weighted graph $G_x=$ ($V_x, E_x, W_x$). The vertex set is $V_x=\{1, \ldots, n\}$, corresponding to the cars in the alphabet $[n]$. An edge $\{u, v\}$ is included in $E_x$ whenever the symbols $u$ and $v$ appear consecutively somewhere in the word $x$, provided that $u \neq v$ and the resulting weight is nonzero. The weight of an edge $e \in E_x$ aggregates the contributions of all such adjacencies in the word and is defined as
\begin{equation}
    W_x(e)=-\sum_{k=1}^{2 n-1}(-1)^{\eta(x, k)} \delta_{e,\left\{x_k, x_{k+1}\right\}},
\end{equation}
where $\delta_{e,\left\{x_k, x_{k+1}\right\}}$ equals 1 if the boundary ($x_k, x_{k+1}$) corresponds to the unordered pair $e$ and 0 otherwise. In this way, each edge weight records how many times the pair of cars appears as neighbours in the sequence, corrected by the parity term $(-1)^{\eta(x, k)}$ that enforces whether equal or opposite colours are favoured.

The BPSP graph has two particularly useful structural properties. First, we show in proposition~\ref{max_deg_prop} that its maximum degree is bounded by four, since each car can be adjacent to at most two distinct neighbours across its two occurrences. Second, as shown by Streif et al.~\cite{streif2021beating}, the edge weights take values in $\{ \pm 1,-2\}$. As $n \rightarrow \infty$, edges of weight -2 vanish in probability, while the remaining couplings occur with limiting probabilities: -1 with probability $2 / 3$ and +1 with probability $1 / 3$.

Beyond local structure, we also identify global properties of the BPSP graph. In proposition~\ref{prop:total_weight_BPSP} we show that its total weight can be expressed compactly in terms of the $\eta$-function and the number of double letters. Corollary~\ref{cor:red_first_cost_total_weight} then relates this quantity directly to the cost of the red-first colouring, revealing that even this simple heuristic is governed by a global invariant of the graph.

Together, these local and global properties establish the BPSP graph as the central combinatorial object underlying the problem. In the next subsection, we exploit this structure to obtain an exact reduction of BPSP to weighted MaxCut.

\subsection{Reduction to Weighted MaxCut}

Once the BPSP graph is defined, the connection to MaxCut becomes clear. MaxCut asks for a partition of the vertices of a weighted graph into two sets such that the total weight of the edges between them is maximised. For a given instance $x$, each ICC assignment $z \in\{r, b\}^n$ induces such a partition of the vertices of $G_x$ : two vertices lie on opposite sides of the cut whenever their ICC colours differ. The weight of this cut is obtained by summing the weights of the edges it separates, making the cut value a direct function of the ICC assignment.

In corollary~\ref{cor:cut_weight_eta} we show that for a BPSP instance $x$ with graph $G_x$, the cut weight induced by an ICC assignment $z$ can be expressed as
\begin{equation}
    \mathrm{wt}_{G_x}(z)=-\sum_{k=1}^{2 n-1}(-1)^{\eta(x, k)}\left[z_{x_k} \neq z_{x_{k+1}}\right].
    \label{cut_obj}
\end{equation}
Building on this, corollary~\ref{cor:paint_swap=red-first_minus_cut weight} establishes a direct link between cut weight and paint swaps:
\begin{equation}
    \tilde{\xi}_n(x, z)=\xi_n\left(\sigma_{\RF}(x)\right)-\wt_{G_x}(z),
    \label{swaps_rf_cut}
\end{equation}
where $\sigma_{\mathrm{RF}}$ denotes the red-first colouring $f \in\{r, b\}^{2 n}$ and $\xi_n$ counts the number of swaps in such colouring. Thus, the BPSP cost can be written as the red-first baseline minus the cut weight achieved by $z$. Using corollary~\ref{cor:red_first_cost_total_weight}, this relation can also be expressed in terms of global graph invariants such as the number of double letters and the total weight, though we omit the details here for brevity.

This reformulation shows that minimising $\tilde{\xi}_n(x, z)$ is exactly equivalent to maximising $\mathrm{wt}_{G_x}(z)$, i.e. solving weighted MaxCut on the BPSP graph. In corollary~\ref{cor:BPSP_reduces_to_MaxCut} we rigorously prove this equivalence and, in algorithm~\ref{alg:reduction_BPSP_MaxCut}, present the polynomial-time reduction from BPSP to weighted MaxCut problem.

\subsection{Ising Model Formulation}

The weighted MaxCut objective can be naturally reformulated as the ground-state energy of an Ising Hamiltonian. Using spin variables $z \in\{-1,1\}^n$, the cut weight of the BPSP graph can be written as
\begin{equation}
    \mathrm{wt}_{G_x}(z)= \frac{-1}{2} \sum_{k=1}^{2 n-1}(-1)^{\eta(x, k)}\left(1 - z_{x_k}z_{x_{k+1}}\right).
    \label{cut_obj_spin}
\end{equation}
Substituting this expression into \cref{swaps_rf_cut} and replacing the spin variables by Pauli-$Z$ operators yields a quantum Ising Hamiltonian whose ground state encodes the optimal BPSP colouring.

While this already provides a valid mapping, it is possible to obtain a more direct formulation that avoids referencing the red-first cost. By working directly with the ICC objective defined in \cref{icc_obj}, proposition~\ref{prop:Ising1} shows that for a BPSP instance $x$ and ICC assignment $z \in \{-1, 1\}^n$, the number of paint swaps can be expressed as
\begin{equation} 
    \tilde{\xi}_n(x, z)=n-\frac{1}{2}-\frac{1}{2} \sum_{i=1}^{2 n-1}(-1)^{\eta(x, i)} z_{x_i} z_{x_{i+1}}.
\end{equation}
In corollary~\ref{cor:Ising2} we further show that the ground state of this Hamiltonian corresponds exactly to the optimal BPSP colouring, i.e., the colouring with the minimal number of paint swaps.

Hence, solving BPSP with quantum heuristics such as QAOA amounts to preparing the ground state of the quantum Ising Hamiltonian
\begin{equation} 
    H_P(x)=n-\frac{1}{2}-\frac{1}{2} \sum_{i=1}^{2 n-1}(-1)^{\eta(x, i)} Z_{x_i} Z_{x_{i+1}}.
\end{equation}
where the spin variables have been promoted to Pauli-$Z$ operators.

\begin{figure*}[htpb]
  \centering
  \subfloat[Paint Swaps for Varying Heuristics]{%
    \includegraphics[width=\sqA,height=\sqA]{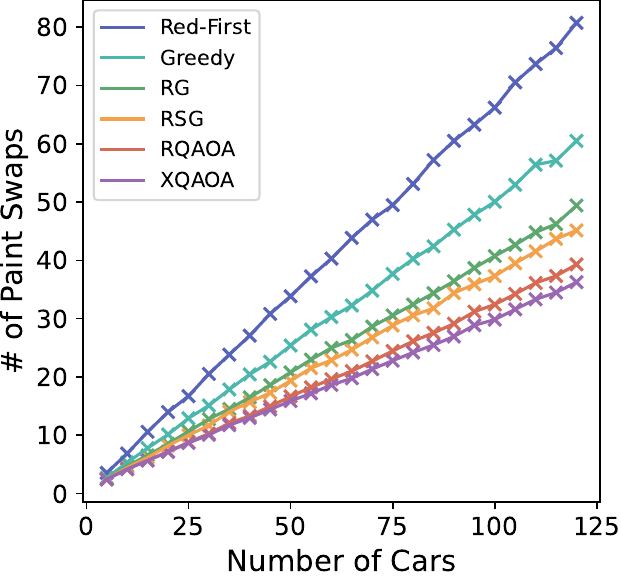}%
    \label{fig:example_a}%
  }\hfill
  \subfloat[Paint Swap Ratios for Varying Heuristics]{%
    \includegraphics[width=0.65\textwidth,height=\sqA]{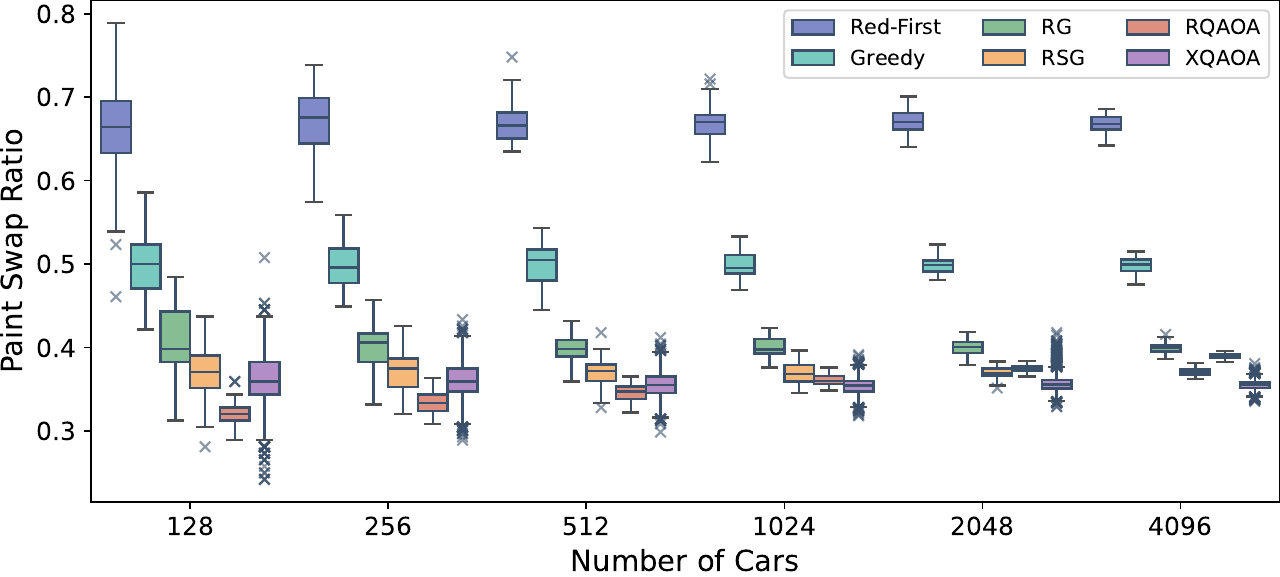}%
    \label{fig:example_b}%
  }\\[1ex]
  \subfloat[Swap Ratios for RSG vs RQAOA]{%
    \includegraphics[width=\sqA,height=\sqA]{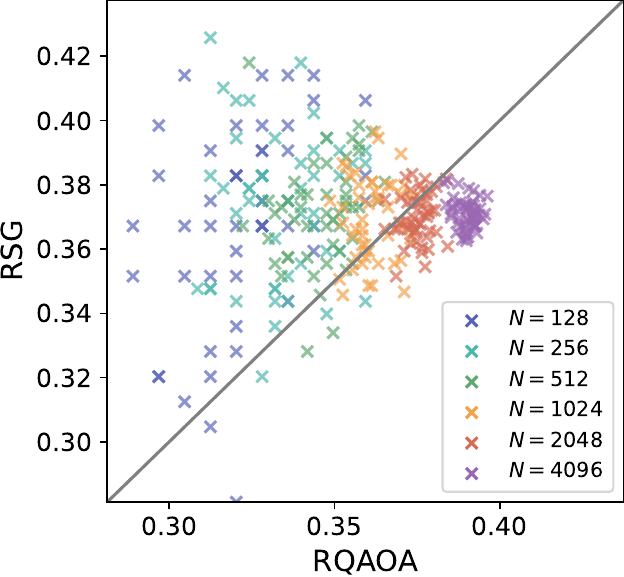}%
    \label{fig:example_c}%
  }\hfill
  \subfloat[Swap Ratios for RSG vs XQAOA]{%
    \includegraphics[width=\sqA,height=\sqA]{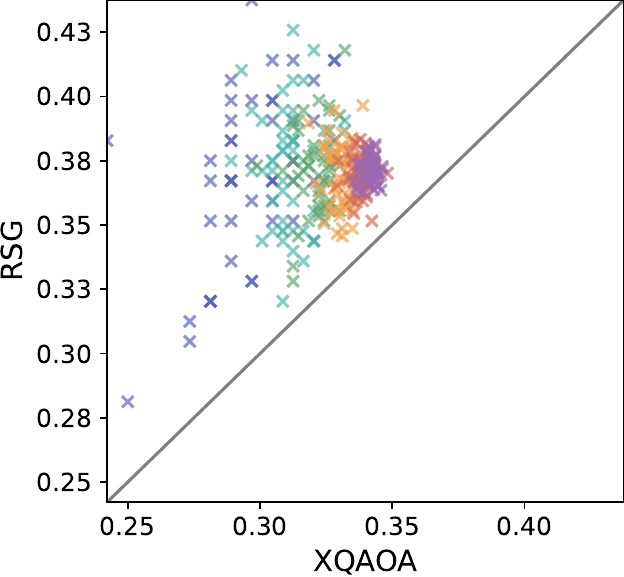}%
    \label{fig:example_d}%
  }\hfill
  \subfloat[Swap Ratios for RQAOA vs XQAOA]{%
    \includegraphics[width=\sqA,height=\sqA]{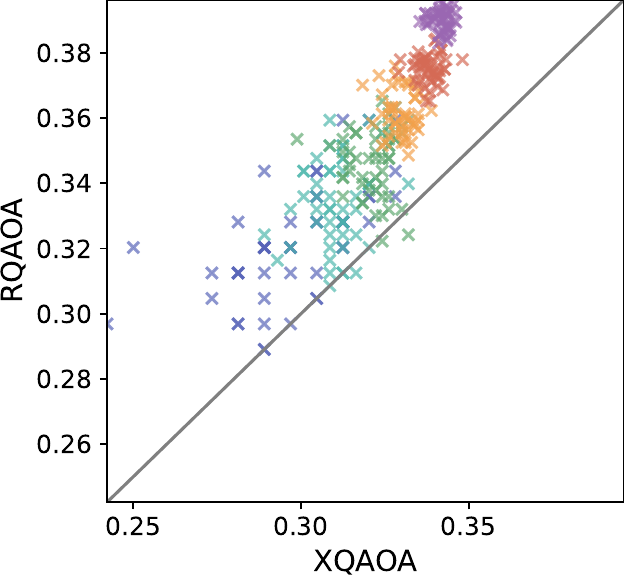}%
    \label{fig:example_e}%
  }
  \captionsetup{justification=raggedright,singlelinecheck=false}
  \caption{\textbf{Benchmarking Classical and Quantum Heuristics on BPSP Instances.} (a) Line plot showing the number of paint swaps produced by different heuristics on 200 random problems for each $n \in \{5,10,\ldots,120\}$. Each point represents the average over all instances, with XQAOA$_1$ evaluated from the best of 100 random restarts per problem. Among the classical heuristics, Red-First performs the worst, while RSG performs the best. Both quantum heuristics achieve lower swap counts, with XQAOA$_1$ consistently outperforming RQAOA$_1$. (b) Boxplot showing the distribution of paint swap ratios for larger instances comprising 50 random problems for each $n \in \{128,256,512,1024,2048,4096\}$. For XQAOA$_1$, the boxplots include all solutions from 100 restarts per instance. RSG achieves the best classical performance with an average ratio of about $0.37$, yet it is consistently surpassed by XQAOA$_1$, which attains an average ratio of roughly $0.357$. RQAOA$_1$ performs well for smaller $n$, but its advantage diminishes after $n=512$. The best solutions from XQAOA$_1$, visible as outliers below the whiskers, consistently outperform those of RQAOA$_1$. (c–e) Scatterplots comparing the relative performance of RSG, RQAOA$_1$, and XQAOA$_1$. Each plot places one algorithm on each axis, with the $y=x$ diagonal as reference: points above the diagonal indicate superior performance of the $x$-axis algorithm, while points below indicate the $y$-axis algorithm performs better. Panel (c) shows RQAOA$_1$ outperforming RSG on most instances up to $n=1024$, after which RSG dominates, particularly for $n \geq 2048$. Panel (d) shows XQAOA$_1$ consistently outperforming RSG by a wide margin across all sizes. Panel (e) shows XQAOA$_1$ outperforming RQAOA$_1$ in nearly all cases, with only a few exceptions.}
    \label{bpsp_finite}
\end{figure*}

\begin{figure*}[htpb]
    \centering
    \subfloat[Swap Ratios for RQAOA]{
        \includegraphics[height=0.30\textheight]{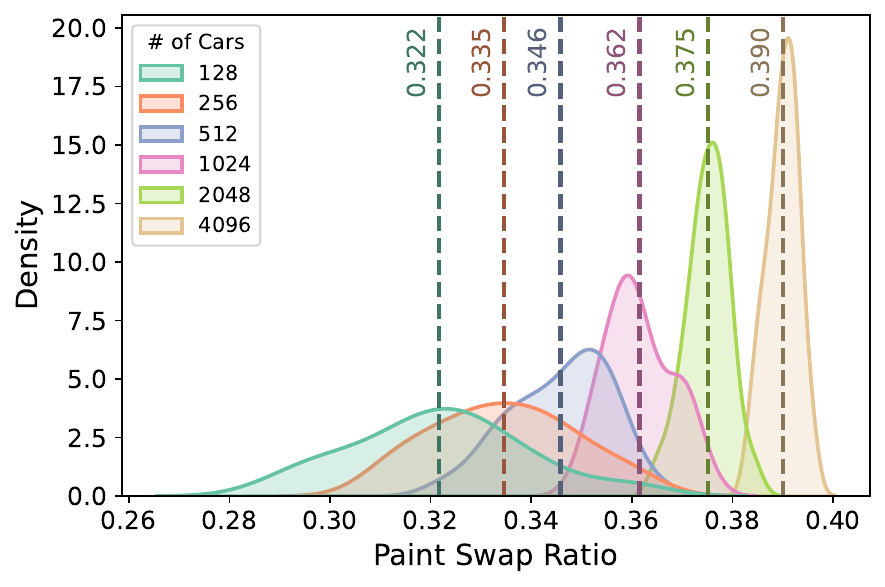}
        \label{rqaoa_swap}
    }
    \subfloat[Swap Ratios for XQAOA]{
        \includegraphics[height=0.30\textheight]{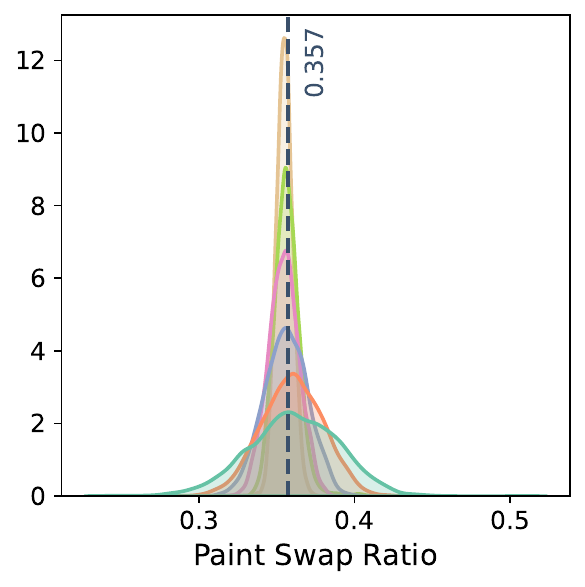}
        \label{xqaoa_swap}
    }
    \captionsetup{justification=raggedright, singlelinecheck=false}
    \caption{\textbf{Distribution of Paint Swap Ratios for RQAOA$_1$ and XQAOA$_1$.} This figure shows KDE plots of paint-swap ratio distributions for larger instances comprising 50 random problems for each $n \in \{128,256,512,1024,2048,4096\}$. Panel (a) presents RQAOA$_1$ and panel (b) XQAOA$_1$, with the latter including all solutions from 100 restarts per instance. For RQAOA$_1$, the mean of each distribution shifts steadily right as $n$ increases, indicating deteriorating performance: rising from 0.322 at $n=128$ to 0.362 at $n=1024$, where it already falls behind XQAOA$_1$’s average of 0.357. At $n=2048$, the mean increases further to 0.375, worse than RSG’s 0.37, and by $n=4096$ it reaches 0.39, with the right tail extending to 0.4, matching the performance of RG. In contrast, XQAOA$_1$ exhibits stable behaviour across all sizes, with means consistently near 0.357. As $n$ grows, its distributions sharpen and concentrate around this value, strongly suggesting that XQAOA$_1$ outputs solutions with an average swap ratio of about 0.357 in the large limit.}
    \label{bpsp_dist}
\end{figure*}

\section{Numerical Results} \label{num_results_sec}

In this section, we benchmark the performance of XQAOA$_1$ and RQAOA$_1$ on the Binary Paint Shop Problem, comparing them against the known classical heuristics: Red-First, Greedy, Recursive Greedy, and Recursive-Star Greedy. All BPSP instances were generated randomly using a quantum random number generator~\cite{symul2011real, haw2015maximization}. Since both XQAOA$_1$ and QAOA$_1$ admit closed-form expressions for their expectation values—given respectively by Theorem 3, Eq. (22), and Theorem 1, Eq. (13) of~\cite{vijendran2023expressive}—all benchmarks were performed entirely on a classical computer, without the use of quantum hardware. For XQAOA$_1$, the large number of parameters motivated the use of the JAX framework for efficient gradient computation via backpropagation, with optimisation carried out using the Limited-Memory Broyden–Fletcher–Goldfarb–Shanno (LBFGS) algorithm~\cite{liu1989limited}. To address the highly non-convex optimisation landscape, each instance was run with 100 random initialisations. For RQAOA$_1$, the pairwise expectation values required at each recursion step were computed analytically as functions of $(\beta,\gamma)$, and the variational parameters were set using the provably near-optimal values obtained by the method of Vijendran et al.~\cite{vijendran2025near}\footnote{In brief, this method applies to Ising models with arbitrary rational weights, reducing the two-dimensional optimisation over $(\beta,\gamma)$ to a one-dimensional search in $\gamma$, using the Nyquist--Shannon theorem to set the sampling resolution, and a subdivision algorithm to locate the optimal angle in linear time.
}, eliminating the need for multiple restarts; hence, each instance was simulated only once. The deterministic classical heuristics were likewise run once per instance.

We consider two benchmark sets. The first comprises 200 random instances for each $n \in \{5, 10, \ldots, 120\}$, probing how the number of paint swaps scales with problem size. The second consists of 50 random instances for each $n \in \{128, 256, 512, 1024, 2048, 4096\}$, used to compare expected swap ratios $\mathbb{E}(\Delta_C/n)$ on larger instances. Results are summarised in \cref{bpsp_finite} and \cref{bpsp_dist}. In reporting results for XQAOA$_1$, we show the best solution from 100 restarts, except in \cref{fig:example_b} and \cref{xqaoa_swap}, where the distribution across all restarts is presented.

\Cref{fig:example_a} shows the average number of paint swaps on smaller instances. Among classical heuristics, Red-First scales the worst, while Recursive-Star Greedy is the most effective. Both quantum heuristics outperform the classical heuristics, with XQAOA$_1$ consistently below RQAOA$_1$ and showing the most favourable scaling. The widening separation between methods as $n$ increases highlights how even small improvements in swap ratio compound into large advantages in the asymptotic regime.

\Cref{fig:example_b} compares paint-swap ratios for larger instances. The classical heuristics behave in line with theoretical expectations: Red-First, Greedy, and Recursive Greedy converge to asymptotic averages near $0.66$, $0.5$, and $0.4$ respectively, while the Recursive-Star Greedy achieves an average of $0.37$, slightly higher than its conjectured value of $0.361$. XQAOA$_1$ attains an average ratio of about $0.357$, outperforming all classical heuristics. In contrast, RQAOA$_1$ exhibits size-dependent performance: it is competitive on smaller instances, but its average ratio rises steadily with $n$, eventually falling behind both XQAOA$_1$ and Recursive-Star Greedy. This deterioration is evident in \cref{bpsp_dist}, where the mean of the RQAOA$_1$ distributions drifts steadily rightward, increasing from 0.322 at $n=128$ to 0.335 at $n=256$, 0.346 at $n=512$, and 0.362 at $n=1024$, already worse than the stable XQAOA$_1$ average of 0.357; by $n=2048$ the mean reaches 0.375, exceeding the Recursive-Star Greedy benchmark of 0.37, and at $n=4096$ it rises further to 0.390, with the right tail extending to 0.4, matching the performance of Recursive Greedy. Despite relying on provably near-optimal parameters for QAOA$_1$ at every recursive step, RQAOA$_1$ deteriorates consistently with system size. In contrast, XQAOA$_1$ remains strikingly stable, with its distributions sharply peaked around $0.357$ across all tested sizes, strongly supporting convergence to this value in the large-$n$ limit.

Returning to \cref{bpsp_finite}, the boxplots in \cref{fig:example_b} compress as variance shrinks with increasing size, making it difficult to distinguish the relative performance of Recursive-Star Greedy, RQAOA$_1$, and XQAOA$_1$. To address this, \cref{fig:example_c,fig:example_d,fig:example_e} present pairwise scatter plots, where each point compares the swap ratios of two algorithms and the diagonal $y=x$ line is shown for reference. Points lying above the diagonal indicate that the algorithm on the $x$-axis achieves a lower swap ratio, while points below the diagonal indicate that the $y$-axis algorithm performs better. In \cref{fig:example_c}, RQAOA$_1$ outperforms Recursive-Star Greedy up to $n=1024$, but its advantage disappears at $n=2048$ and reverses completely by $n=4096$, where Recursive-Star Greedy dominates all instances. By contrast, \cref{fig:example_d} shows that XQAOA$_1$ consistently surpasses Recursive-Star Greedy at every size, typically by a substantial margin. Finally, \cref{fig:example_e} reveals that XQAOA$_1$ also dominates RQAOA$_1$ across nearly all instances, with only a handful of small-$n$ cases lying close to the diagonal. Taken together, these comparisons reinforce the trend that XQAOA$_1$ maintains robust performance at scale, while RQAOA$_1$ progressively deteriorates and is ultimately overtaken by classical heuristics.

Across all simulations, Recursive-Star Greedy averages about 0.37, above its conjectured 0.361; RQAOA$_1$ shows diminishing returns with size, drifting toward Recursive Greedy despite near-optimal QAOA$_1$ parameters; and XQAOA$_1$ remains stable around 0.357, consistently outperforming both classical heuristics and RQAOA$_1$. Together, these results indicate that XQAOA$_1$ sustains competitiveness at scale, whereas RQAOA$_1$ does not.


\section{Discussions} \label{disscussions_sec}

In this section we discuss the key patterns that emerge from our benchmarks. We highlight the diminishing returns observed for RQAOA as problem size increases, contrast this with the stable performance of XQAOA around a paint-swap ratio of roughly 0.357, and consider what these findings imply for scalability and the choice of algorithms.

\subsection{Diminishing Returns of Level-1 RQAOA}

Our large-scale benchmarks reveal a surprising and previously unreported trend: the performance of RQAOA$_1$ deteriorates with problem size even when the QAOA$_1$ subroutine is run with provably near-optimal parameters~\cite{vijendran2025near}. This stands in sharp contrast to prior guarantees showing that RQAOA$_1$ can exactly solve unweighted cycles~\cite{bravyi2020obstacles}, complete graphs~\cite{bae2024recursive}, and weighted bipartite graphs~\cite{bae2024modified}, as well as large-scale simulations on MaxCut and Ising models without external fields, which reported stable behaviour up to hundreds of qubits with no indication of degradation~\cite{bravyi2020obstacles,vijendran2025near}. Empirically, however, our kernel density estimates in \cref{rqaoa_swap} shift monotonically toward higher paint-swap ratios as $n$ grows, providing the first clear evidence of RQAOA$_1$ deteriorating at scale. Related work~\cite{vijendran2025near} also noted instability on Ising models with external fields, where RQAOA$_1$ could perform extremely well on some instances but poorly on others.

This degradation could be detected largely because of our choice of metric. The paint-swap ratio $\mathbb{E}[\Delta_C/n]$ is scale-normalised and requires no oracle access to the optimum, enabling consistent evaluation at arbitrarily large $n$. In contrast, the approximation ratio for MaxCut requires the exact optimum, which is infeasible at scale and can mask size-dependent pathologies. While one might object that these are distinct metrics, our formal reduction of BPSP to MaxCut, together with the expression of BPSP cost in terms of MaxCut cost, suggests that degradation in BPSP implies analogous deterioration in MaxCut, and consequently in general Ising models as system size increases.

\subsection{Performance of Level-1 XQAOA as \texorpdfstring{$n \to \infty$}{n to infinity}}

In contrast to the deteriorating performance of RQAOA$_1$, XQAOA$_1$ remains stable and consistent across all tested sizes. As shown in \cref{xqaoa_swap,fig:example_b}, its distributions are sharply peaked and essentially stationary, with empirical means concentrated near $\approx0.357$ throughout the full range of $n$. This behaviour parallels that of the Red-First, Greedy, and Recursive Greedy heuristics, whose distributions are similarly stationary around their respective proven constants. Motivated by this consistency, we state the following conjecture on the asymptotic performance of XQAOA$_1$.

\begin{conjecture}
For Binary Paint Shop Problem instances encoded as Ising models, the level-1 XQAOA ansatz with parameters optimised by gradient-based methods achieves a limiting paint-swap ratio of approximately $0.357$, i.e.
    $$
    \lim _{n \rightarrow \infty} \mathbb{E}\left[\frac{\Delta_C}{n}\right] \approx 0.357.
    $$
\end{conjecture}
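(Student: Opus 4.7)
The plan is to exploit the strict locality of XQAOA$_1$ together with the local weak convergence of random BPSP graphs to reduce the asymptotic swap ratio to a finite per-edge computation. By \cref{cor:paint_swap=red-first_minus_cut weight}, we may write
\begin{equation*}
\mathbb{E}\!\left[\tfrac{\Delta_C}{n}\right] \;=\; \mathbb{E}\!\left[\tfrac{\xi_n(\sigma_{\RF}(x))}{n}\right] - \mathbb{E}\!\left[\tfrac{\wt_{G_x}(z^\star)}{n}\right],
\end{equation*}
where $z^\star$ denotes the bitstring returned by XQAOA$_1$. The Red-First baseline is already known to converge to $2/3$, so it suffices to show that $\mathbb{E}[\wt_{G_x}(z^\star)/n]$ tends to the explicit constant $\tfrac{2}{3} - 0.357 \approx 0.310$.

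Second, I would establish a local weak limit for the BPSP graph $G_x$ drawn from the uniform random word model. Since $G_x$ has maximum degree at most $4$ by \cref{max_deg_prop}, and its edge weights concentrate to $-1$ with probability $2/3$ and $+1$ with probability $1/3$ (with weight-$(-2)$ edges vanishing in density), the radius-$1$ neighbourhood of a uniformly sampled edge converges in distribution to a tractable random rooted weighted graph. Crucially, at $p=1$ the XQAOA expectation $\langle Z_{x_k} Z_{x_{k+1}} \rangle$ is a function solely of that radius-$1$ neighbourhood and the parameters attached to it, so the total cost decomposes into a sum of local functionals which may be evaluated using the closed-form expression of Theorem~3, Eq.~(22) of Vijendran et al.~\cite{vijendran2023expressive}.

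Third, I would characterise the parameter configuration selected by gradient-based optimisation of the overparameterised XQAOA$_1$ ansatz. The natural ansatz is that the optimiser lands, in the large-$n$ limit, on a translation-invariant fixed point: the triple $(\alpha_j,\beta_j,\gamma_{jk})$ depends only on the isomorphism class of the local neighbourhood around vertex $j$ or edge $(j,k)$. Under this reduction, the parameter search collapses to a finite-dimensional optimisation, one per neighbourhood type, which can be solved by setting partial derivatives of the closed-form expression to zero. Summing the resulting per-type contributions, weighted by the local-limit probabilities derived in the second step, should then yield the predicted constant $\approx 0.357$.

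The principal obstacle is justifying the translation-invariance assumption in step three. Gradient descent on the full $O(n)$-dimensional XQAOA$_1$ landscape is highly non-convex, and could plausibly lock into symmetry-broken local optima; our numerics already compensate by using $100$ random restarts. Two possible remedies are: (i) show that the Hessian at the translation-invariant fixed point is sign-definite enough to make it the generic attractor of gradient dynamics, leveraging the locality of the gradient map; or (ii) bypass uniqueness by proving that every per-neighbourhood stationary point yields the same aggregate contribution, for instance through blockwise concavity of the local objective. A subsidiary difficulty is controlling the $O(1/n)$ boundary effects, the occasional weight-$(-2)$ edges, and the rare atypical local neighbourhoods, all of which should be absorbable into a vanishing additive error via standard concentration arguments on the BPSP word model.
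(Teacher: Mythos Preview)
The statement is explicitly a \emph{conjecture} in the paper, and the authors do not attempt to prove it. Immediately following the statement they write: ``Although a formal proof remains out of reach---owing to the high dimensionality of the parameter space and the non convexity of the optimisation landscape---the conjectured value is strongly supported by numerical evidence.'' There is therefore no proof in the paper to compare against; the paper's support for the claim consists solely of the empirical concentration of the paint-swap ratio shown in \cref{fig:example_b} and \cref{xqaoa_swap}.

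Your proposal is a heuristic programme rather than a proof, and you correctly flag its principal gap: the assumption that gradient-based optimisation on the $O(n)$-dimensional XQAOA$_1$ landscape converges to a translation-invariant fixed point is unjustified, and your two suggested remedies (Hessian sign-definiteness at the invariant point, or blockwise concavity forcing all stationary points to coincide) are themselves open problems of at least comparable difficulty. This is precisely the obstruction the authors cite as preventing a proof. Even granting translation invariance, Step~2 is not established in the paper and is nontrivial: the BPSP graph arises from a uniformly random double-occurrence word, not a standard random-graph ensemble, and while marginal edge-weight frequencies are known, the joint distribution of degrees and weights over a radius-$1$ neighbourhood---which is what your per-edge computation requires---is not characterised anywhere. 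Finally, the conjecture's target ``approximately $0.357$'' is an empirical number, not an exact constant; your programme, if it could be carried through, would output a specific closed-form value, and there is no independent reason to expect it to match the observed $0.357$ beyond the same numerics that motivated the conjecture in the first place. In short, your sketch identifies a plausible line of attack but does not close any of the gaps the paper already names as the reason the statement remains a conjecture.
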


Although a formal proof remains out of reach---owing to the high dimensionality of the parameter space and the non convexity of the optimisation landscape---the conjectured value is strongly supported by numerical evidence and, being lower than the empirical $0.37$ ratio of Recursive-Star Greedy, constitutes a meaningful improvement in the BPSP literature.

\subsection{Beyond the Benchmarked Algorithms}

In this work we restricted our benchmarks to classical heuristics with known guarantees for BPSP and shallow-depth QAOA variants previously shown to outperform the best classical algorithms on MaxCut. Other algorithms, while lacking proven performance for BPSP, could in principle achieve lower empirical paint-swap ratios. A natural candidate is the Goemans–Williamson (GW) algorithm~\cite{goemans1995improved}: since BPSP reduces to a weighted MaxCut instance, GW can be applied directly. However, two caveats arise. First, the presence of negative edge weights prevents the constant-factor approximation guarantee from carrying over. Second, GW requires solving a large semidefinite program whose memory demands grow rapidly with system size, often exceeding the capacity of standard hardware. For this reason we did not include GW in our large-scale comparisons; preliminary tests on small instances suggest that it can perform marginally better than XQAOA$_1$, though its scalability is limited.

Beyond this, there exist many untested heuristics that could achieve lower empirical ratios on specific instances, including simple meta-heuristics such as randomised local search, albeit without provable guarantees. Finally, we emphasise that there is still considerable room between the conjectured asymptotic performance of XQAOA$_1$, $\mathbb{E}_{\text{XQAOA}}[\Delta_C] \approx 0.357n$, and the recently established lower bound on the optimal solution, $\mathbb{E}_{\text{OPT}}[\Delta_C] \geq 0.214n - o(n)$~~\cite{hanvcl2023improved}. This gap suggests that further algorithmic improvements---classical, quantum, or hybrid---remain possible.

\section{Conclusion} \label{conc_sec}

We studied the Binary Paint Shop Problem through a formal reduction to a weighted MaxCut instance and then to an Ising model, enabled by an Initial Car Colour encoding that halves the number of decision variables and yields exact interaction strengths together with structural graph properties. These reductions provide a rigorous foundation for applying QAOA-type quantum heuristics to BPSP.

Building on this foundation, we benchmarked two state-of-the-art QAOA variants, XQAOA$_1$ and RQAOA$_1$, against classical heuristics with proven or conjectured performance guarantees, including Red-First, Greedy, Recursive-Greedy, and Recursive-Star Greedy. Our large-scale simulations reveal three key findings. First, XQAOA$_1$ consistently outperforms all classical heuristics and RQAOA$_1$, with outcomes tightly concentrated around a stable paint-swap ratio of $\approx 0.357$. Second, RQAOA$_1$ exhibits clear diminishing returns: its average ratio drifts steadily upward from $0.322$ at $n=128$ to $0.390$ at $n=4096$, eventually being surpassed by Recursive-Star Greedy and approaching Recursive-Greedy. Third, our benchmarks revise the empirical performance of Recursive-Star Greedy on upward to $\approx 0.370$, above its conjectured value of $0.361$.

The remarkable stationarity of XQAOA$_1$’s distributions motivates the conjecture that it converges to a limiting swap ratio of $\approx 0.357$ under our Ising encoding with gradient-based parameter optimisation. Taken together, these results establish XQAOA$_1$ as the most robust method we tested at scale, while providing the first clear numerical evidence that RQAOA$_1$ deteriorates on large BPSP instances. More broadly, these findings hint that expressiveness may be a more influential factor than non-locality in shaping the performance of QAOA-type heuristics for general Ising models, though further study is required to confirm this. Extending this line of work to higher-depth variants, multi-colour paint shop problems, and other industrially motivated combinatorial tasks remains an important direction for future research.

\section*{Acknowledgements}

This research is supported by the Agency for Science, Technology and Research (A*STAR) under the Central Research Fund (CRF) Award for Use-Inspired Basic Research (UIBR), A*STAR C230917003, and the Quantum Innovation Centre (Q.InC) Strategic Research and Translational Thrust.

\section*{Code and Data Availability}

All code used to generate the numerical results, together with the datasets (including the random BPSP instances used in our benchmarks), are publicly available at \href{https://github.com/vijeycreative/XQAOA-BPSP}{\texttt{github.com/vijeycreative/XQAOA-BPSP}}.

 \vfill
\bibliography{refs}

\clearpage

\appendix
\widetext

\begin{center} \textbf{Appendices} \end{center}

In the following appendices, we delve into the necessary mathematical ingredients needed to formulate BPSP in a framework compatible with implementation using QAOA and its variants. Furthermore, we elucidate the relationship between BPSP and the weighted maximum cut (MaxCut) problem, examining key properties exhibited by the associated MaxCut graph.

\section{BPSP Formulation}
\label{sec:bpsp_formulation}

We begin by introducing some notation. Throughout the appendix, we fix $n \in \mathbb Z^+$ to be a positive integer that denotes the number of distinct symbols. Additionally, we let $r$ and $b$ be formal symbols representing the colours \textit{\red{red}} and \textit{\blue{blue}}, respectively.

\vspace{0.2cm}
\noindent\textbf{BPSP instances}: The instances of BPSP are \textit{double-occurrence} words, defined as even-length words over an alphabet $[n]= \{1,\ldots,n\}$ where each symbol $i\in [n]$ occurs exactly twice in each word. Each BPSP word of length $2n$ can be thought of as representing a sequence comprising $n$ cars, with each car appearing precisely twice within the sequence. Formally, we denote the set of such words of length $2n$ by
\begin{align}
    \Gamma_n = \left\{(x_1,x_2,\ldots,x_{2n}) \in [n]^{2n}: \forall i \in [n],
    |\{j\in [2n]: x_j=i\}|=2
    \right\}.
\label{eq:double_occurrence_words_n}
\end{align}
Note that the length of any word $x \in \Gamma_n$ is $|x|=2n$. The set of double-occurrence words, regardless of their length, is then the union of the sets in \eqref{eq:double_occurrence_words_n}:
\begin{align}
    \Gamma = \bigcup_{n \in \mathbb Z^+} \Gamma_n.
\end{align}

An example of an $n=3$ BPSP instance is
\begin{align}
\label{eq:BPSP_example}
    t = (1,2,1,3,3,2) \in \Gamma_3.
\end{align}

\noindent\textbf{BPSP colourings}:
A valid BPSP colouring of a BPSP instance $x \in \Gamma$ is one that colours the symbols of $x$ using two different colours in such a way that pairs of identical symbols are coloured differently. We let $\Xi$  denote the function that maps an instance $x \in \Gamma$ to the set of its valid colourings, i.e., 
\begin{align}
\label{eq:set_valid_colorings}
    \Xi (x) =
\left\{
f \in \{r,b\}^{|x|}: \forall i \neq j \in [|x|], x_i = x_j \implies f_i \neq f_j
\right\} .
\end{align}
We denote the restriction of $\Xi$ to the domain $\Gamma_n$ by $\Xi_n = \Xi|_{\Gamma_n}$, i.e.~if $x \in \Gamma_n$, we write $\Xi(x) = \Xi_n(x)$. For example, 
\begin{align}
    \label{eq:BPSP_example_coloring}
    rrbrbb, rbbbrr \in \Xi_3(t)
\end{align}
are valid BPSP colourings of the BPSP instance in \eqref{eq:BPSP_example}.

\vspace{0.2cm}
\noindent\textbf{BPSP cost function}:
The BPSP cost function $\xi$ measures the number of colour swaps in a given colouring of a BPSP instance. More formally, let $f\in \Xi(x)$ be a valid BPSP colouring of a BPSP instance $x \in \Gamma_n$. The BPSP cost $\xi$ of $f$ is given by
\begin{align}
\label{eq:BPSP_cost_function}
    \xi(f) = \sum_{i=1}^{2n-1} [f_i \neq f_{i+1}],
\end{align}
where $[\,\cdot\,]$ denotes the Iverson bracket which evaluates to 1 if its argument is true and 0 otherwise. When restricted to inputs of length $2n$, we write $\xi = \xi_n$, i.e. $\xi(f) = \xi_n(f)$ if $f\in\{r,b\}^{2n}$. For the valid BPSP colourings in our example \eqref{eq:BPSP_example_coloring}, we have
\begin{align}
\label{eq:BPSP_example_coloring_cost}
    \xi(rrbrbb) = 3, \quad \xi(rbbbrr) = 2. 
\end{align}
\vspace{0.2cm}
\noindent\textbf{BPSP problem}: With the above definitions, the BPSP problem \eqref{binary_paint_shop_problem} can formally be expressed as follows:
\newcounter{Problem}
\vspace{-0.2cm}
\par\noindent\rule{\textwidth}{0.5pt}
\textbf{\refstepcounter{Problem}Problem~\theProblem\label{problem:BPSP}.} \textsc{Binary Paint Shop Problem (BPSP)}
\hfill \nopagebreak\\[-6pt]\nopagebreak
\noindent\rule{\textwidth}{0.3pt} \nopagebreak\\[-10pt]
\begin{tabular}{rrl}
\textbf{Input:}     &  
\multicolumn{2}{l}{a BPSP instance $x \in \Gamma$.} \\
\textbf{Output:}
& minimise & \quad $\xi(f)$, as defined in \cref{eq:BPSP_cost_function}
\\
& subject to & \quad $f \in \Xi(x)$, where $\Xi(x)$ is defined in \cref{eq:set_valid_colorings}.
\end{tabular}
\\
\noindent\rule{\textwidth}{0.5pt}


For $x\in \Gamma$, we denote the optimal cost and the set of optimal solutions of Problem~\ref{problem:BPSP}, respectively, as follows:
\begin{align}
\label{eq:BPSP_function}
    \text{BPSP}(x) = \min_{f \in \Xi(x)} \xi(f), \\
    \label{eq:BPSP*}
    \text{BPSP}^*(x) = \argmin_{f \in \Xi(x)} \xi(f).
\end{align}
In the context of our example presented in \cref{eq:BPSP_example}, with some effort, one could demonstrate that $\BPSP(t) = 2$ and $\BPSP^*(t) \ni rbbbrr$.

\section{Fundamentals}

\subsection{Eta Function}

We begin this section by introducing the \textit{eta function}, a useful function that recurs throughout many subsequent expressions. The \textit{eta function} $\eta:\Gamma_n \times [2n-1] \to \{0,1\}$ is defined as follows:
\begin{align}
\label{eq:eta_function_def}
    \eta(x,i) = [x_{i+1}\in \{x_1,\ldots, x_i\}] \oplus [x_i \in \{x_1,\ldots, x_{i-1}\}],
\end{align}
where $\oplus$ is the \textit{exclusive or} operation defined by $a\oplus b = 0$ if $(a,b) = (0,0)$ or $(1,1)$, and $a\oplus b = 1$ otherwise. Equivalently,
\begin{align}
\eta(x, i) = 
\begin{cases} 
0, & \text{if } (x_{i+1} \in \{ x_1, \dots, x_i \} \land x_i \in \{ x_1, \dots, x_{i-1} \} )
\ \text{ or }\ (x_{i+1} \not\in \{ x_1, \dots, x_i \} \land x_i \not\in \{ x_1, \dots, x_{i-1} \}) \\
1, & \text{if } (x_{i+1} \in \{ x_1, \dots, x_i \} \land x_i \not\in \{ x_1, \dots, x_{i-1} \}) \ \text{ or }\ (x_{i+1} \not\in \{ x_1, \dots, x_i \} \land x_i \in \{ x_1, \dots, x_{i-1} \}).
\end{cases}
\end{align}

For our example word $t= (1,2,1,3,3,2)$ in \cref{eq:BPSP_example}, we have 
\begin{align}
    \eta(t,1) = \eta(t,5) = 0 \mbox{ and } \eta(t,2)=\eta(t,3)=\eta(t,4)=1.
\end{align}

We next establish a useful property of the eta function: if a BPSP instance denoted by $x\in\Gamma$ exhibits consecutive occurrences of the same symbol, with the initial occurrence positioned at index $i$, it follows that $\eta(x,i)=1$.

\begin{lemma}
\label{eq:eta_consecutive_symbols}
Let $x \in \Gamma_n$ and $i \in [2n-1]$. If $x_i = x_{i+1}$, then $\eta(x, i) = 1$.
\end{lemma}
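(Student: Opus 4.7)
The plan is to prove the lemma by unpacking the definition of $\eta(x,i)$ and evaluating each of its two Iverson brackets under the hypothesis $x_i = x_{i+1}$. Concretely, I will argue that the first bracket evaluates to $1$ while the second evaluates to $0$, so that the XOR returns $1$.

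First I would note that since $x_{i+1} = x_i$, the symbol $x_{i+1}$ trivially lies in the set $\{x_1,\ldots,x_i\}$ (it equals the last element of that set). Hence $[x_{i+1}\in \{x_1,\ldots,x_i\}] = 1$ without any further assumption on the word.

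The second bracket requires the double-occurrence structure of $x \in \Gamma_n$. Since each symbol in $[n]$ appears exactly twice in $x$, and the symbol $x_i$ already occupies positions $i$ and $i+1$ by hypothesis, it cannot appear at any earlier index. Thus $x_i \notin \{x_1,\ldots,x_{i-1}\}$, so $[x_i \in \{x_1,\ldots,x_{i-1}\}] = 0$. Combining the two evaluations, $\eta(x,i) = 1 \oplus 0 = 1$, as required.

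I do not anticipate any real obstacle: the argument is essentially a two-line check driven entirely by the definition of $\eta$ in \cref{eq:eta_function_def} and the double-occurrence property \eqref{eq:double_occurrence_words_n}. The only subtlety worth stating explicitly is that the membership of $x_i$ at position $i$ itself is excluded from the set $\{x_1,\ldots,x_{i-1}\}$, which is why the second bracket vanishes rather than equalling $1$.
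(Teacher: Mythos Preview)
Your proof is correct and follows essentially the same approach as the paper: both arguments verify that $x_{i+1}\in\{x_1,\ldots,x_i\}$ trivially (since $x_{i+1}=x_i$) and that $x_i\notin\{x_1,\ldots,x_{i-1}\}$ by the double-occurrence property, yielding $\eta(x,i)=1\oplus 0=1$. The paper's version is just terser.
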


\begin{proof}
If $x \in \Gamma_n$ and $x_i = x_{i+1}$, then $x_{i+1} \in \{x_1,\ldots, x_i\}$ and $x_i \not\in \{x_1,\ldots, x_{i-1}\}$ since the symbol $x_i$ occurs at most twice. Hence, $\eta(x,i)=1$.
\end{proof}

\subsection{Colouring Schemes}
\label{eq:coloring_schemes}

A \textit{colouring scheme} refers to an algorithm designed to assign colours to a BPSP word in accordance with a valid BPSP colouring. These schemes can be categorised as either deterministic or probabilistic. In this section, we focus on deterministic colouring schemes, which can be formally characterised by functions $\sigma: \Gamma_n \rightarrow\{r, b\}^{2 n}$ satisfying $\sigma(x) \in \Xi_n(x)$. To exemplify our definition, we discuss two deterministic colouring schemes:
\begin{enumerate}
    \item The \textit{red-first colouring scheme} is the function $\sigma_{\RF}: \Gamma_n \rightarrow\{r, b\}^{2 n}$, where $\forall x \in \Gamma_n$ and $i \in [2n]$,
\begin{align}
\sigma_{\RF}(x)_i=
\begin{cases}
    r, & x_i \notin\{x_1, \ldots, x_{i-1}\} \\
    b,  & \mbox{otherwise.}
\end{cases}
\end{align}
In words, this scheme colours the initial instance of each symbol red and the subsequent occurrence of each symbol blue.
    \item The \textit{greedy colouring scheme} is the function $\sigma_{\Gr} : \Gamma_n \to \{ r, b \}^{2n}$, where $\forall x \in \Gamma_n$, $g = \sigma_\Gr(x) \in \{ r, b \}^{2n}$ is defined recursively by 
\begin{align}
g_1 &=r \\
g_i &= \begin{cases}
g_{i-1}, & 
x_i \notin\left\{x_1, \ldots, x_{i-1}\right\}, \\
\neg g_{\min \left\{j: x_j=x_i\right\}}, & \mbox{otherwise},
\end{cases}
\end{align}
for $i\geq 2$. In essence, this scheme employs a greedy approach to colour symbols in $x$ from left to right, switching colours only when maintaining the current colour would violate the validity of the BPSP colouring.
 
\item The \textit{recursive greedy colouring scheme} is a map $\sigma_{\RG}:\Gamma_n\to\{r,b\}^{2n}$ that assigns to every BPSP instance $x$
a feasible colouring $\sigma_{\RG}(x)\in\Xi_n(x)$ of its $2n$ car occurrences. Write the input as $x^{(n)}=(x^{(n)}_1,\ldots,x^{(n)}_{2n})$, where the superscript records the number of distinct cars. The construction proceeds in two stages. \emph{Reduction} iteratively removes, from right to left, the last car in the current sequence together with its earlier partner, recording the partner’s index and thereby producing a chain of shorter instances $x^{(n)}\!\to x^{(n-1)}\!\to\cdots\to x^{(1)}$. \emph{Reconstruction} then rebuilds the colouring from length $2$ up to $2n$: at each step it copies all already fixed colours into the longer sequence and assigns colours to the two
reinserted cars using only local information. We now formalise the two stages, beginning with the reduction step.

\noindent\textbf{Reduction:} Starting from $x^{(n)}$, peel off pairs from right to left; at each step note where the partner of the last car sits so the pair can be reinserted in the correct gap later. Formally, for $j=n-1,n-2,\ldots,1$:
    \begin{enumerate}
        \item Let $m_{j+1}\in\{1,2,\ldots,2j+1\}$ be the position of the earlier occurrence of the last car of $x^{(j+1)}$, i.e.\
        $x^{(j+1)}_{m_{j+1}}=x^{(j+1)}_{2j+2}$.
        \item Delete both occurrences to obtain $x^{(j)}\in\Gamma_j$:
        \begin{equation*}
        x^{(j)}_k=
        \begin{cases}
        x^{(j+1)}_k,& k=1,2,\ldots,m_{j+1}-1,\\[2pt]
        x^{(j+1)}_{k+1},& k=m_{j+1},m_{j+1}+1,\ldots,2j.
        \end{cases}
        \end{equation*}
    \end{enumerate}

\noindent\textbf{Reconstruction:} Begin from the unique length-two colouring and, at each extension, reinsert the pair at the recorded gap, copy all previously fixed colours unchanged, and choose the two new colours so as not to introduce avoidable swaps. Concretely:
\begin{enumerate}
    \item \textit{Base Case.} Set $g^{(1)}=(r,b)\in\Xi_1(x^{(1)})$.
    \item \textit{Recursive Step.} For $j=2,3,\ldots,n$:
    \begin{enumerate}
        \item \emph{Embed the previously assigned colours} by copying the colouring from the shorter instance into $x^{(j)}$ at all positions except the two reinserted occurrences, which occupy indices $m_j$ and $2j$:
        \begin{equation*}
        g^{(j)}_k=
        \begin{cases}
        g^{(j-1)}_k,& k=1,2,\ldots,m_j-1,\\[2pt]
        g^{(j-1)}_{k-1},& k=m_j+1,m_j+2,\ldots,2j-1.
        \end{cases}
        \label{eq:rg_embed}
        \end{equation*}
        \item \emph{Colour the two reinserted cars}—the first occurrence placed at index $m_j$ and the second occurrence placed at index $2j$. Let $L:=g^{(j)}_{m_j-1}$ and $R:=g^{(j)}_{m_j+1}$ denote the colours of the cars immediately to the left and right of the gap (when they exist), and let $T:=g^{(j)}_{2j-1}$ be the colour of the last pre-existing car. Set
        \begin{align*}
        &\text{if }m_j=1: && g^{(j)}_{m_j}=R,\qquad g^{(j)}_{2j}= \neg R;\\
        &\text{if }m_j=2j-1: && g^{(j)}_{m_j}=L,\qquad g^{(j)}_{2j}= \neg L;\\
        &\text{if }1 < m_j < 2j-1 \text{ and } L=R:
        && g^{(j)}_{m_j}=L,\qquad g^{(j)}_{2j}= \neg L;\\
        &\text{if }1 < m_j < 2j-1 \text{ and } L\neq R:
        && g^{(j)}_{2j}=T,\qquad g^{(j)}_{m_j}= \neg T.
        \end{align*}
    \end{enumerate}
\end{enumerate}

\noindent
This recursion yields a unique colouring $g^{(n)}$, ensuring that the two occurrences of each car receive opposite colours (so $g^{(n)}\in\Xi_n(x^{(n)})$), and does not introduce additional colour changes beyond those that are unavoidable. We set
$\sigma_{\RG}(x^{(n)}) := g^{(n)}$.

\item The \textit{recursive star greedy} colouring is a map $\sigma_{\RSG}:\Gamma_n\to\{r,b\}^{2n}$ that returns, for every BPSP instance
$x$, a feasible colouring $\sigma_{\RSG}(x)\in\Xi_n(x)$. It augments the recursive greedy scheme by allowing a temporary symbol $*$ (``star”) during the construction to mark a car whose colour choice does not affect the current number of paint swaps. Stars are used only as placeholders: a finalisation step replaces each $*$ by $r$ or $b$ without changing the swap count and while preserving the opposite-colour constraint for the two occurrences of every car. Write the input as $x^{(n)}=(x^{(n)}_1,\ldots,x^{(n)}_{2n})$. The algorithm has two stages. \emph{Reduction} removes, from left to right, the first car of the current sequence together with its later partner, records the partner’s index, and thereby produces a chain of shorter instances $x^{(n)}\!\to x^{(n-1)}\!\to\cdots\to x^{(1)}$. \emph{Reconstruction} then rebuilds the colouring from length $2$ up to $2n$ by copying previously fixed colours and assigning colours—or, when locally indifferent, stars—to the previously reinserted cars using only local information. We now formalise these two stages, beginning with the reduction step.

\noindent\textbf{Reduction:} Starting from $x^{(n)}$, peel off pairs from left to right; at each step note where the partner of the first car sits so the pair can be reinserted in the correct gap later. Formally, for $j=n-1,n-2,\ldots,1$:
\begin{enumerate}
\item Let $m_j\in\{2,3,\ldots,2j+2\}$ be the position of the partner of the first car of $x^{(j+1)}$, i.e.\ $x^{(j+1)}_{1} = x^{(j+1)}_{m_{j}}$. 
\item Delete positions $1$ and $m_j$ to obtain $x^{(j)}\in\Gamma_j$:
    \begin{equation*}
        x^{(j)}_k=
            \begin{cases}
                x^{(j+1)}_{k+1},& k=1,2,\ldots,m_j-2,\\[2pt]
                x^{(j+1)}_{k+2},& k=m_j-1,m_j,\ldots,2j,
            \end{cases}
        \label{eq:rsg_shrink}
    \end{equation*}
and record $m_j$ (the future insertion position of the second occurrence).
\end{enumerate}

\noindent\textbf{Reconstruction:} Start from the unique length-two colouring; at each extension, reinsert the recorded pair, copy all previously fixed colours unchanged, and assign colours to the two new cars so as to avoid any unnecessary swaps. When a local tie occurs—i.e., both assignments give the same swap count—mark the indifferent choice with a star $*$. After the final step, replace every remaining $*$ arbitrarily by $r$ or $b$, choosing the replacements so that the two occurrences of each car have opposite colours and the swap count is unchanged. Concretely:
\begin{enumerate}
\item \textit{Base Case.} Set $h^{(1)}=(r,b)\in\Xi_1(x^{(1)})$.
\item \textit{Recursive Step.} For $j=2,3,\ldots,n$:
\begin{enumerate}
\item \emph{Embed the previously assigned symbols} by copying the colouring from the shorter instance into $x^{(j)}$ at all positions except the two reinserted occurrences, which occupy indices $1$ and $m_{j-1}$:
\begin{equation*}
h^{(j)}_k=
\begin{cases}
h^{(j-1)}_{k-1},& k=2,3,\ldots,m_{j-1}-1,\\[2pt]
h^{(j-1)}_{k-2},& k=m_{j-1}+1,\ldots,2j.
\end{cases}
\label{eq:rsg_embed}
\end{equation*}

\item \emph{Colour the two reinserted cars}—the first occurrence at index $1$ and the second at index $m_{j-1}$. Let $L:=h^{(j)}_{m_{j-1}-1}$ and $R:=h^{(j)}_{m_{j-1}+1}$ be the colours immediately to the left and right of the car at $m_{j-1}$ (when they exist), and let $N:=h^{(j)}_{2}$ be the colour of the second car. If exactly one of $L$ or $R$ is a star, write $s\in\{-1,+1\}$ for the side of that star (so $h^{(j)}_{m_{j-1}+s}=*$) and set $U:=h^{(j)}_{m_{j-1}-s}$ for its second (non-starred) neighbour. Apply the following local rules:
\begin{minipage}{\dimexpr\textwidth-2cm}
\begin{align*}
    &\text{if }m_{j-1}=2: && h^{(j)}_{m_{j-1}}=R,\qquad h^{(j)}_{1}= \neg R;\\
    &\text{if }m_{j-1}=2j: && h^{(j)}_{m_{j-1}}=b,\qquad h^{(j)}_{1}= r;\\
    &\text{if }2<m_{j-1}<2j \text{ and } L=R: && h^{(j)}_{m_{j-1}}=L,\qquad h^{(j)}_{1}= \neg L;\\
    &\text{if }2<m_{j-1}<2j \text{ and } L\neq R: && h^{(j)}_{1}=N,\qquad h^{(j)}_{m_{j-1}}= \neg N;\\
    &\text{if }2<m_{j-1}<2j \text{ and exactly one of } \{L,R\} \text{ is } * \text{ and } U\neq N: && h^{(j)}_{1}=N,\qquad h^{(j)}_{m_{j-1}}= \neg N;\\
    &\text{if }2<m_{j-1}<2j \text{ and exactly one of } \{L,R\} \text{ is } * \text{ and } U= N: && h^{(j)}_{1}=N,\qquad h^{(j)}_{m_{j-1}} = h^{(j)}_{m_{j-1}+s} = \neg N.
\end{align*}
\end{minipage}

\item \emph{Star Creation.} Let $b$ be the index of the second occurrence of the second car in the sequence $x^{(j)}$. Define the adjacent colours as follows: 
\begin{equation*}
L_1:=h_{1}^{(j)}, \quad R_1:=h_{3}^{(j)}, \quad L_2:=h_{b-1}^{(j)}, \quad R_2:=h_{b+1}^{(j)} .
\end{equation*}
Apply the following rules:
\begin{align*}
        &\text{if }L_1 = R_1 \text{ and } L_2 = R_2: && h^{(j)}_{2} = h^{(j)}_{b} = *;\\
        &\text{if }L_1 \neq R_1 \text{ and } L_2 \neq R_2: && h^{(j)}_{2} = h^{(j)}_{b} = *.
\end{align*}
\end{enumerate}
\item \textit{Breaking Ties.} After all pairs have been reinserted, replace each remaining $*$ arbitrarily with $r$ or $b$, subject to the opposite-colour constraint: the two occurrences of any car must receive different colours. This choice does not affect the total number of paint swaps.
\end{enumerate}

\noindent 
The recursion together with the star finalisation step determines a colouring $h^{(n)}$. It is feasible, so $h^{(n)}\in\Xi_n(x^{(n)})$, and by construction no additional paint swaps are created unless they are forced by the local neighbourhood. We set $\sigma_{\RSG}(x^{(n)}) := h^{(n)}$. The recursive star greedy method is a local strategy: resolving ties with stars and committing based only on current information can block better future choices. Consequently, there are instances where recursive star greedy performs worse than recursive greedy, and in some cases even worse than the basic greedy heuristic. 

\end{enumerate}
Note that in all colouring schemes, we consistently follow the convention of colouring the first symbol of $x$ in red.

\subsection{BPSP Costs of Red-First colouring} \label{redfirst_cost_app}

In this section, we focus on the red-first colouring scheme and show that the BPSP cost of this colouring can be expressed in terms of the eta function.

\begin{proposition}
    Let $x \in \Gamma_n$. Then the BPSP cost of the red-first colouring of $x$ is
    \begin{align}
    \label{eq:red-first-cost}
    \xi_n\left(\sigma_{\RF}(x)\right)=\sum_{i=1}^{2 n-1} \eta(x, i)
    = n - \frac{1}{2} - \frac{1}{2} \sum_{i=1}^{2n-1} (-1)^{\eta(x,i)} 
    .
    \end{align}
\end{proposition}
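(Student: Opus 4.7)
The plan is to prove the two equalities separately: first a combinatorial identity relating the red-first swap count to the eta function, then a purely algebraic manipulation recasting that sum in signed form.

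For the first equality $\xi_n(\sigma_{\RF}(x)) = \sum_{i=1}^{2n-1}\eta(x,i)$, I would show term-by-term that the Iverson bracket $[\sigma_{\RF}(x)_i \neq \sigma_{\RF}(x)_{i+1}]$ in the definition of $\xi_n$ is equal to $\eta(x,i)$ for every $i \in [2n-1]$. By the definition of $\sigma_{\RF}$, the colour at position $i$ is $r$ exactly when $x_i$ is a first occurrence (i.e.~$x_i \notin \{x_1,\ldots,x_{i-1}\}$) and $b$ otherwise. Thus $[\sigma_{\RF}(x)_i \neq \sigma_{\RF}(x)_{i+1}] = 1$ iff exactly one of $x_i, x_{i+1}$ is a second occurrence at its position. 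The plan is to split into two cases. If $x_i \neq x_{i+1}$, then $[x_{i+1} \in \{x_1,\ldots,x_i\}] = [x_{i+1} \in \{x_1,\ldots,x_{i-1}\}]$, so $\eta(x,i)$ reduces directly to the XOR of the two second-occurrence indicators, matching the bracket. If $x_i = x_{i+1}$, then $x_i$ must be a first occurrence and $x_{i+1}$ a second occurrence (since each symbol appears at most twice), so the bracket evaluates to $1$; on the other hand, Lemma~\ref{eq:eta_consecutive_symbols} gives $\eta(x,i) = 1$ as well. In both cases the two sides agree, so summing over $i$ yields the first equality.

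For the second equality, the approach is purely arithmetic: since $\eta(x,i) \in \{0,1\}$, the identity $(-1)^{\eta(x,i)} = 1 - 2\eta(x,i)$ holds. Summing this over $i = 1, \ldots, 2n-1$ gives
\begin{equation*}
\sum_{i=1}^{2n-1}(-1)^{\eta(x,i)} = (2n-1) - 2\sum_{i=1}^{2n-1}\eta(x,i),
\end{equation*}
and solving for the eta sum produces the claimed expression $n - \tfrac{1}{2} - \tfrac{1}{2}\sum_{i=1}^{2n-1}(-1)^{\eta(x,i)}$.

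I do not anticipate a genuine obstacle in this proof. The only delicate point is the case $x_i = x_{i+1}$, where the naive identification of $[x_{i+1} \in \{x_1,\ldots,x_i\}]$ with the indicator that $x_{i+1}$ is a second occurrence breaks down; this is precisely what Lemma~\ref{eq:eta_consecutive_symbols} was proved to handle, so invoking it cleanly resolves that subtlety. The second equality is a one-line algebraic rewrite.
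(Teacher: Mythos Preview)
Your proposal is correct and follows essentially the same approach as the paper: a term-by-term identification of $[\sigma_{\RF}(x)_i \neq \sigma_{\RF}(x)_{i+1}]$ with $\eta(x,i)$ for the first equality, followed by the binary identity $(-1)^z = 1-2z$ (equivalently $z = \tfrac{1-(-1)^z}{2}$) for the second. Your case split on $x_i = x_{i+1}$ is in fact unnecessary---the paper observes directly that the colour-change bracket equals $[x_i \in \{x_1,\ldots,x_{i-1}\}] \oplus [x_{i+1} \in \{x_1,\ldots,x_i\}]$, which is $\eta(x,i)$ by definition, with no special case required---but your more cautious treatment is harmless.
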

\begin{proof}
We start by proving the first equality. Let $f=\sigma_{\RF}(x)$. Then, we have that
\begin{align}
f_i & = \begin{cases}r, & x_i \notin\left\{x_1, \ldots, x_{i-1}\right\} \\
b, & x_i \in\left\{x_1, \ldots, x_{i-1}\right\}\end{cases} \\
\mbox{and }
f_{i+1} & = \begin{cases}r, & x_{i+1} \notin\left\{x_1, \ldots, x_i\right\} \\
b, & x_{i+1} \in\left\{x_1, \ldots, x_i\right\}.\end{cases}
\end{align}
Hence, it follows that:
\begin{align}
f_i \neq f_{i+1} & \iff (f_i = r \land f_{i+1} = b) \vee (f_i = b \land f_{i+1} = r) \nonumber \\
                 & \iff (x_i \notin \{x_1, \ldots, x_{i-1}\} \land x_{i+1} \in \{x_1, \ldots, x_i\}) \nonumber \vee (x_i \in \{x_1, \ldots, x_{i-1}\} \land x_{i+1} \notin \{x_1, \ldots, x_i\}) \nonumber \\
                 & \iff [x_{i+1} \in \{x_1, \ldots, x_i\}] \oplus [x_i \in \{x_1, \ldots, x_{i-1}\}] = 1 \nonumber \\
                 & \iff \eta(x, i) = 1 .
\end{align}
Therefore, the sum $\xi_n(f)$ can be calculated as:
\begin{align}
\xi_n(f) &= \sum_{i=1}^{2n-1} [f_i \neq f_{i+1}] \\
       &= \sum_{i=1}^{2n-1} [\eta(x_i, i) = 1] \\
       &= \sum_{i=1}^{2n-1} \eta(x_i, i), \quad \text{since } \eta(x_i) \in \{0,1\}.
\end{align}

To prove the second equality, we use the fact that if $z \in\{0,1\}$, then $ z = \dfrac{1 - (-1)^z}{2}$. This allows us to rewrite the above expression as follows:
\begin{align*}
\xi_n (f) &= \sum_{i=1}^{2n-1} \eta(x,i) = \sum_{i=1}^{2n-1} \frac{1 - (-1)^{\eta(x,i)}}{2} 
    = n - \frac{1}{2} - \frac{1}{2} \sum_{i=1}^{2n-1} (-1)^{\eta(x,i)}.
\end{align*}

\end{proof}






\section{Initial-Car-Colour (ICC) Encoding} \label{icc_encoding_scheme_app}

\subsection{ICC Encoding Scheme}
\label{sec:icc_encoding_scheme}


In \cref{eq:set_valid_colorings}, we defined a valid BPSP colouring of a BPSP instance $x \in \Gamma_n$ using a $2n$-bit string $f \in \{r,b\}^{2n}$. However, this representation carries excess information. Merely indicating the colouring of the initial occurrence of each symbol in $x$ is adequate for reconstructing the valid colouring. In other words, a succinct description of the colouring requires only $n$ bits. To formally express the relationship between these two representations, we introduce the function $\mathcal{E}_n \colon \Gamma_n \times \{r, b\}^n  \to \{r, b\}^{2n}$ defined as:
\begin{align}
\label{eq:En_function}
    \mathcal{E}_n(x, z)_i &=\neg^{\left[x_i \in\left\{x_1, \ldots, x_{i-1}\right\}\right]} z_{x_i} \\
    &= \begin{cases}
z_{x_i}, & \text{if } x_i \notin \{x_1, \ldots, x_{i-1}\}, \\
\bar{z}_{x_i}, & \text{if } x_i \in \{x_1, \ldots, x_{i-1}\},
\end{cases} \quad \text{for } i \in [2n].
\end{align}
In words, $\mathcal{E}_n$ takes as input a BPSP instance and the succinct $n$-bit description of a BPSP colouring and outputs the $2n$-bit description used in 
\eqref{eq:set_valid_colorings}. Also, we define $\mathcal{E} = \bigcup_{n \in \mathbb{N}} \mathcal{E}_n$ to be the union of the functions $\mathcal E_n$ (viewed as a binary relation), i.e., $\mathcal E(x,z) = \mathcal E_n(x,z)$ if $x \in \Gamma_n$ and $z\in \{r,b\}^n$. For convenience, we shall refer to the elements $f \in \Xi(x)$ defined in  \cref{eq:set_valid_colorings} as BPSP colourings in the \textit{standard encoding} and refer to the succinct colouring $\mathcal E(x,z)$ as BPSP colourings in the \textit{ICC encoding}.

Now, note that if we fix the first argument of $\mathcal E_n$, we get a function $\{r,b\}^n \rightarrow \{r,b\}^{2n}$ that maps the succinct $n$-bit description to the original $2n$-bit description. Formally, this may be described by the \textit{currying} function, which gives:
\begin{align}
\label{eq:curry_E_n}
    \mathrm{curry}(\mathcal{E}_n) \colon \Gamma_n \to \left(\{r, b\}^n \to \{r, b\}^{2n}
    \right),
\end{align}
where
\begin{align}
    \mathrm{curry}(\mathcal{E}_n)(x)(z) = \mathcal{E}_n(x, z).
\end{align}
Note that for $x\in \Gamma_n$, the function $\mathrm{curry}(\mathcal{E}_n)(x):\{r, b\}^{n}\rightarrow \{r, b\}^{2n}$ is not surjective, since not all $2n$-bit colourings are valid BPSP colourings. For simplicity, we will henceforth restrict the codomain of $\mathrm{curry}(\mathcal{E}_n)(x)$ to its range. Our next proposition establishes this notion and confirms that the range of $\curry(\mathcal E_n)(x)$ is $\Xi_n(x)$.
\begin{proposition} \label{icc_bijective_prop}
Let \( x \in \Gamma_n \). Then, the function
\begin{equation}
    \mathrm{curry}(\mathcal{E}_n)(x) \colon \{r, b\}^n \rightarrow \Xi_n(x)
\end{equation}
is invertible, with the inverse given by
\begin{align}
    \mathrm{curry}(\mathcal{E}_n)(x)^{-1} \colon \Xi_n(x) \rightarrow \{r, b\}^n,
\end{align}
where
\begin{equation}
    \mathrm{curry}(\mathcal{E}_n)(x)^{-1}(f)_t = f_{\min\{k \mid x_k = t\}}, \qquad \text{for } t \in [n].
\end{equation}
In other words, for all \( x \in \Gamma_n \), we have the following:
\begin{enumerate}
    \item[(i)] For all \( z \in \{r, b\}^n \), \( \mathcal{E}_n(x, z) = \mathrm{curry}(\mathcal{E}_n)(x)(z) \in \Xi_n(x) \),
    \item[(ii)] Let \( f \in \Xi_n(x) \) and \( z = \left(f_{\min\{k \mid x_k = t\}}\right)_{t=1}^n \). Then
    \begin{align}
        \mathcal{E}_n(x, z) := \mathrm{curry}(\mathcal{E}_n)(x)(z) = f,
    \end{align}
\end{enumerate}
That is, one could write
    \begin{align}
        \Xi_n(x) = \{ \mathcal{E}_n(x, z)  \mid z \in \{r, b\}^n \}.
    \end{align}
\end{proposition}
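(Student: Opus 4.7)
The plan is to establish the three claims in turn—well-definedness (i), the recovery identity (ii), and injectivity—from which the invertibility statement and the explicit inverse formula both follow immediately.

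First I would verify (i) by fixing $z \in \{r,b\}^n$ and setting $f := \mathcal{E}_n(x, z)$. The validity of $f$ reduces to checking the unique pair of positions $(i,j)$ with $x_i = x_j$: at the first occurrence the Iverson exponent $[x_i \in \{x_1,\ldots,x_{i-1}\}]$ vanishes and so $f_i = z_{x_i}$, while at the second occurrence that exponent equals $1$ and $f_j = \neg z_{x_j} = \neg z_{x_i}$. Hence $f_i \neq f_j$, which is exactly the condition defining $\Xi_n(x)$ in \cref{eq:set_valid_colorings}.

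Next I would prove (ii) by taking any $f \in \Xi_n(x)$, defining $z_t := f_{\min\{k \mid x_k = t\}}$ for every $t \in [n]$, and comparing $f' := \mathcal{E}_n(x,z)$ with $f$ position-by-position. At the first occurrence of each symbol the two strings agree by the definition of $z$, and at the second occurrence the validity of $f$ forces the entry to equal the complement of its first-occurrence value, namely $\neg z_t$, which is precisely what $\mathcal{E}_n(x,z)$ outputs there. This simultaneously gives surjectivity of $\curry(\mathcal{E}_n)(x)$ onto $\Xi_n(x)$ and confirms the claimed inverse formula. Injectivity is then automatic: any $z$ can be read back off its image at the first-occurrence positions, so distinct $z$'s produce distinct outputs. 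Combining these, $\curry(\mathcal{E}_n)(x)$ is a bijection with the stated inverse, and the set equality $\Xi_n(x) = \{\mathcal{E}_n(x,z) \mid z \in \{r,b\}^n\}$ follows.

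The argument is essentially bookkeeping, so no step should be a genuine obstacle; the one place that deserves care is keeping the case analysis (first vs.\ second occurrence) aligned with the Iverson-bracket exponent in the definition of $\mathcal{E}_n$, and being explicit about where the validity hypothesis on $f$ is invoked—namely, in the reverse direction of (ii) to guarantee that the second-occurrence entry of $f$ really is the complement of its first-occurrence entry, without which the proposed inverse would fail to be well-defined.
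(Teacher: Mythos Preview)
Your proposal is correct and follows essentially the same approach as the paper's proof: both establish (i) via the observation that the Iverson exponent in the definition of $\mathcal{E}_n$ evaluates to $0$ at a first occurrence and $1$ at a second occurrence, and both prove (ii) by the same position-by-position case analysis on first versus second occurrences. The only cosmetic difference is that the paper casts (i) as a proof by contradiction (supposing $\mathcal{E}_n(x,z)_i = \mathcal{E}_n(x,z)_j$ and deriving that the symbol would then appear three times or not at all before index $j$), whereas you argue directly; your route is arguably cleaner, but the underlying content is identical.
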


\begin{proof} 
We prove the two statements (i) and (ii) of the proposition separately:

\begin{enumerate}
    \item[(i)] 
Let $x \in \Gamma_n$ and  $z \in\{r, b\}^n$. Choose distinct indices $i<j \in[2 n]$ such that $x_i=x_j$. In order to demonstrate that $\mathcal{E}_n(x, z) \in \Xi_n(x)$, we employ a proof by contradiction. Suppose that $\mathcal{E}_n(x, z)_i = \mathcal{E}_n(x, z)_j$. Then, the following implications hold
\begin{align*}
    & \neg^{\left[x_i \in\left\{x_1, \ldots, x_{i-1}\right\}\right]} z_{x_i}=\neg^{\left[x_j \in\left\{x_1, \ldots, x_{j-1}\right\}\right]} z_{x_j} \\
    &\implies \neg^{\left[t \in\left\{x_1, \ldots, x_{i-1}\right\}\right]} z_t=\neg^{\left[t \in\left\{x_1, \ldots, x_{j-1}\right\}\right]} z_t \\
    &\implies [t \in \{x_1,\ldots,x_{i-1}\}] = [t \in \{x_1,\ldots,x_{j-1}\}] \\
    &\implies \underbrace{t \in \{x_1,\ldots,x_{i-1}\} \land t \in \{x_1,\ldots,x_{j-1}\}}_{\circled{ \scriptsize 1}}  \quad \text{or} \quad \underbrace{t \notin \{x_1,\ldots,x_{i-1}\} \land t \notin \{x_1,\ldots,x_{j-1}\}}_{\circled{ \scriptsize 2}}.
\end{align*}

Now, case \circled{ \scriptsize 1} implies that there exists 
$k \in\{1, \ldots, i-1\}$ such that $t=x_k$. But this would then mean that $x_k = x_i = x_j = t$ for distinct $k<i<j$, which then implies that $\left|\left\{j: x_j=t\right\}\right| \geqslant 3$, which contradicts our assumption that $x \in \Gamma_n$. On the other hand, case \circled{ \scriptsize 2} implies that $t \not\in \{x_1,\ldots,x_{j-1}\}$. But since $i<j$, we have $\{x_1,\ldots,x_{j-1} \}\ni x_i = t$, which contradicts \circled{ \scriptsize 2}.

Since both cases lead to a contradiction, $\mathcal E_n(x,z)_i = \mathcal E_n(x,z)_j $, which contradicts our supposition. Therefore, $\mathcal{E}_n(x, z) \in \Xi_n(x)$.


\item[(ii)] Let $x \in \Gamma_n$ represent a BPSP instance, and let $f \in \Xi_n(x)$ denote a valid BPSP colouring. For $t \in [n]$, define $z_t=f_{\min \left\{k: x_k=t\right\}}$.

We aim to show that 
    $\mathcal E_n(x, z)=f$. To establish this, we note that for $i \in [2n]$,
\begin{align}
    \mathcal{E}_n(x, z)_i &= \neg^{\left[x_i \in\left\{x_1, \ldots, x_{i-1}\right\}\right]} z_{x_i}  \\
    &= \neg^{\left[x_i \in\left\{x_1, \ldots, x_{i-1}\right\}\right]} f_{\min \left\{k: x_k=x_i\right\}}. \label{eqn_star}
\end{align}
Since $x \in \Gamma_n$ we have that $|\{k \colon x_k = x_i\}| = 2$. Hence, let $l$ be the unique element in $\{k \colon x_k = x_i\}$ that is not $i$. In other words, $\{k \colon x_k = x_i\} = \{i, l\}$, where $l \neq i$. We now consider the following two cases.

\vspace{0.2cm}
\noindent\underline{Case 1: $i < l$}. In this case, 
we have $x_i \notin \{x_1, \ldots, x_{i-1}\}$ (otherwise $\exists k < i \colon x_k = x_i$), and \( \min\{k \colon x_k = x_i\} = i \).


Hence, by \cref{eqn_star} it follows that
\begin{align}
    \mathcal{E}_n(x,z)_i 
    = \neg^0 f_i = f_i.
\end{align}

\underline{Case 2: \( i > l \)}: In this case, \( x_i \in \{x_1, \ldots, x_l, \ldots, x_{i-1}\} \) since \( l \in \{1, \ldots, i-1\} \). Furthermore, \( \min\{k \colon x_k = x_i\} = l \). Hence, by \cref{eqn_star} we have that
\begin{align}
    \mathcal{E}_n(x, z)_i 
    = \neg f_l
    = f_i.
\end{align}

Since $\mathcal{E}_n(x,z)_i = f_i$ for all $ i \in [2n]$, it follows that $\mathcal{E}_n(x, z) = f$.
\end{enumerate}
This completes our proof of the proposition.
\end{proof}

\subsection{BPSP Cost in ICC Encoding} \label{sec:bpsp_cost_icc}

In \cref{eq:BPSP_cost_function}, we expressed the BPSP cost function $\xi$ as a function of the BPSP colouring $f \in \Xi(x)$. With the introduction of the ICC encoding scheme in \cref{sec:icc_encoding_scheme}, our objective in this section is to express the BPSP cost function directly in terms of the ICC-encoded colouring $\mathcal E(x,z)$. To this end, we define
the function $\tilde{\xi}_n : \Gamma_n \times \{r, b\}^n   \rightarrow \mathbb{N}$ as \begin{align}
\label{eq:xi_tilde}
    \tilde{\xi}_n = \xi_n \circ \mathcal E_n
\end{align}
and $\tilde \xi = \bigcup_{n\in \mathbb N} \tilde{\xi}_n$. Expressed in words, $\tilde \xi_n(x,z)$ denotes the tally of colour swaps when the BPSP instance $x$ is coloured using the colouring specified by $z$. The proof of the following proposition derives an explicit expression for $\tilde \xi_n(x,z)$.

\begin{proposition} \label{icc_cost_prop}
    Let $z \in \{r,b\}^n$ and $x \in \Gamma_n$. Then,
    \begin{align}
    \label{eq:tildexin}
        \tilde{\xi}_n (x, z) = \sum_{i=1}^{2n-1} \left[ \neg^{\eta(x,i)} z_{x_i} \neq z_{x_{i+1}} \right].
    \end{align}
\end{proposition}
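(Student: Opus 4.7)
The plan is to expand $\tilde{\xi}_n(x,z)$ directly from its definition as $\xi_n \circ \mathcal{E}_n$, and then simplify each summand by applying a short XOR identity that will expose the eta function. The entire argument is essentially an unfolding of definitions combined with the bijectivity of negation.

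First, I would unfold the definitions. By \cref{eq:xi_tilde} and \cref{eq:BPSP_cost_function},
\begin{equation*}
\tilde{\xi}_n(x,z) = \xi_n\bigl(\mathcal{E}_n(x,z)\bigr) = \sum_{i=1}^{2n-1} \bigl[\mathcal{E}_n(x,z)_i \neq \mathcal{E}_n(x,z)_{i+1}\bigr].
\end{equation*}
Introducing the shorthand $a_i := [x_i \in \{x_1,\ldots,x_{i-1}\}]$ and $b_i := [x_{i+1} \in \{x_1,\ldots,x_i\}]$, the formula \cref{eq:En_function} gives $\mathcal{E}_n(x,z)_i = \neg^{a_i} z_{x_i}$ and $\mathcal{E}_n(x,z)_{i+1} = \neg^{b_i} z_{x_{i+1}}$, so each summand becomes $[\neg^{a_i} z_{x_i} \neq \neg^{b_i} z_{x_{i+1}}]$.

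Second, I would invoke the elementary identity that applying the same negation to both sides of an inequality preserves it: for any $u,v \in \{r,b\}$ and any $c \in \{0,1\}$, $[\neg^c u \neq \neg^c v] = [u \neq v]$, which holds because $\neg$ is a bijection on $\{r,b\}$. Applying this with $c = b_i$ inside the summand, and using $\neg^{b_i} \neg^{a_i} = \neg^{a_i \oplus b_i}$ (since $\neg^2$ is the identity), yields
\begin{equation*}
[\neg^{a_i} z_{x_i} \neq \neg^{b_i} z_{x_{i+1}}] = [\neg^{a_i \oplus b_i} z_{x_i} \neq z_{x_{i+1}}].
\end{equation*}

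Finally, by commutativity of XOR and the definition \cref{eq:eta_function_def}, we have $a_i \oplus b_i = [x_i \in \{x_1,\ldots,x_{i-1}\}] \oplus [x_{i+1} \in \{x_1,\ldots,x_i\}] = \eta(x,i)$. Substituting back into the sum gives the claimed expression \cref{eq:tildexin}. The argument is mechanical and I do not anticipate any real obstacle; the only point requiring care is matching the asymmetric indices in the two Iverson brackets (namely $\{x_1,\ldots,x_{i-1}\}$ versus $\{x_1,\ldots,x_i\}$) precisely with the two terms appearing in $\eta(x,i)$.
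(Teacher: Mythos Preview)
Your proof is correct and follows essentially the same route as the paper's own argument: both unfold the definitions of $\tilde\xi_n$ and $\mathcal E_n$, then use the involution property $\neg^a\neg^b=\neg^{a\oplus b}$ (equivalently, that applying the same $\neg^{b_i}$ to both sides preserves inequality) to collapse the two exponents into $\eta(x,i)$. The only difference is cosmetic---you introduce the shorthand $a_i,b_i$ and justify the cancellation via bijectivity of $\neg$, whereas the paper cites $\neg^a\neg^b=\neg^{a\oplus b}$ and $\neg^0=I$ directly.
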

\begin{proof}
Through a straightforward calculation, we determine:
    \begin{align*}
\tilde{\xi}_n (x, z) &= \xi_n (\mathcal{E}_n(x,z)) \\
&= \sum_{i=1}^{2n-1} \left[ \mathcal{E}_n(x,z)_i \neq \mathcal{E}_n(x,z)_{i+1} \right] \\
&= \sum_{i=1}^{2n-1} \left[ \neg^{[x_i \in \{x_1, \ldots, x_{i-1}\}]} z_{x_i} \neq \neg^{[x_{i+1} \in \{x_1, \ldots, x_{i}\}]} z_{x_{i+1}} \right]
\\
&= \sum_{i=1}^{2n-1} \left[ \neg^{[x_{i+1} \in \{x_1, \ldots, x_{i}\}] \oplus [x_i \in \{x_1, \ldots, x_{i-1}\}]} z_{x_i} \neq z_{x_{i+1}} \right] \\
&= \sum_{i=1}^{2n-1} \left[ \neg^{\eta(x,i)} z_{x_i} \neq z_{x_{i+1}} \right],
\end{align*}
where the first line is derived from 
\cref{eq:xi_tilde}, the second line from \cref{eq:BPSP_cost_function}, the third line from \cref{eq:En_function}, and the fifth line from the definition of the eta function as specified in \cref{{eq:eta_function_def}}. The fourth line follows from the properties $\neg^a \neg^b = \neg^{a\oplus b}$ and $\neg^0 = I$. This completes the proof of the proposition.
\end{proof}

\subsection{BPSP in ICC Encoding} \label{sec:bpsp_icc_encoding}

We now formulate the BPSP problem, as defined in Problem~\ref{problem:BPSP}, using the ICC encoding. We begin by observing the equivalence
$\{ \tilde{\xi}_n(x,z) : z \in \{r, b\}^n \} 
= \{ \xi_n (f) : f \in \Xi_n(x) \}$,
which implies that for any BPSP instance $x \in \Gamma_n$,
    \begin{align}
        \min_{f \in \Xi_n(x)} \xi_n(f) = \min_{z \in \{r, b\}^n} \tilde{\xi}_n (x, z).
    \end{align}




Hence, in parallel to the formulation presented in Problem~\ref{problem:BPSP}, the \textit{ICC formulation} of the BPSP problem can be expressed as
\begin{align}
\begin{array}{rll}
\text{Input:} & x \in \Gamma \\[0.2cm]
\text{Output:} &
\text{Minimize}  & \widetilde{\xi}(x, z) \\ &
\text{subject to} & z \in\{r, b\}^{|x| / 2}.
\end{array}
\end{align}
Also, \cref{eq:BPSP_function} can be written as
\begin{align}
\textsc{BPSP}(x)=\min _{f \in \Xi(\omega)} \xi(f)=\min _{z \in \{r, b\}^{|x| / 2}} \widetilde{\xi}(x, z).
\end{align}
Similarly to \cref{eq:BPSP*}, we define $\widetilde{\textsc{BPSP}^*}(x)$ as 
\begin{align}
\widetilde{\textsc{BPSP}^*}(x) := \underset{z \in\left\{r, b\right\}^{|x| / 2}}\argmin \tilde{\xi}(x, z) \quad \subseteq\{r, b\}^{|x| / 2}.
\end{align}


\subsection{ICC Encoding of Colouring Schemes} \label{sec:icc_encoding_colouring}


Our next objective is to reformulate the results of \cref{eq:coloring_schemes} for colouring schemes using the ICC encoding. This can be achieved as follows: Let $\sigma: \Gamma_n \rightarrow\{r, b\}^{2 n}$ be a BPSP colouring scheme. We define the \textit{ICC encoding} of $\sigma$ as $\operatorname{ICC}(\sigma): \Gamma_n \rightarrow\{r, b\}^n$, where
\begin{align}
\operatorname{ICC}(\sigma)(x)=\left(\operatorname{curry}(\mathcal{E}_n)(x)\right)^{-1}(\sigma(x)), \qquad \text { for } x \in \Gamma_n.
\end{align}
Equivalently,
\begin{align}
    \sigma(x)=\operatorname{curry}(\mathcal{E}_n)(x)\left(\operatorname{ICC}(\sigma)(x)\right)=\mathcal{E}_n\left(x,\operatorname{ICC}(\sigma )(x)\right).
\end{align}
Therefore,
\begin{align}
\xi(\sigma(x)) = \widetilde{\xi}(x,\operatorname{ICC}(\sigma)(x)).
\end{align}




Our next proposition demonstrates that the ICC encoding of the red-first colouring scheme maps any length-$n$ BPSP instance to $r^n$, aligning with the prescribed rule of colouring the initial occurrence of each symbol in red.


\begin{proposition}
    Let $x \in \Gamma_n$. The ICC encoding of the red-first colouring $\operatorname{RF}(x)$ of $x$ is
    \begin{align}
    \operatorname{ICC}\left(\sigma_{\operatorname{RF}}\right)(x)=r^n.
    \end{align}
\end{proposition}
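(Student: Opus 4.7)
The plan is to unpack the definition of the ICC encoding of a colouring scheme and substitute the explicit form of $\sigma_{\RF}$. By definition, $\operatorname{ICC}(\sigma_{\RF})(x) = (\curry(\mathcal{E}_n)(x))^{-1}(\sigma_{\RF}(x))$, and Proposition~\ref{icc_bijective_prop} gives the inverse explicitly as $(\curry(\mathcal{E}_n)(x))^{-1}(f)_t = f_{\min\{k \mid x_k = t\}}$ for each $t \in [n]$. So it suffices to show that for every $t \in [n]$, the red-first colouring assigns the colour $r$ to the position of the first occurrence of $t$ in $x$.

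First I would fix $t \in [n]$ and set $i^\star := \min\{k \in [2n] \mid x_k = t\}$; this minimum is well defined since $x \in \Gamma_n$ and $t$ appears in $x$. Next, I would observe that by the minimality of $i^\star$, the symbol $t = x_{i^\star}$ cannot occur in $x_1, \ldots, x_{i^\star - 1}$, hence $x_{i^\star} \notin \{x_1, \ldots, x_{i^\star - 1}\}$. Applying the definition of the red-first scheme then yields $\sigma_{\RF}(x)_{i^\star} = r$, and substituting back gives $\operatorname{ICC}(\sigma_{\RF})(x)_t = r$. Since $t$ was arbitrary, the entire ICC string is $r^n$.

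This argument is essentially a one-line unpacking of the definitions, so there is no real obstacle; the only subtlety is being careful that the inverse formula of Proposition~\ref{icc_bijective_prop} applies, which is guaranteed because Proposition~\ref{icc_bijective_prop}(i) ensures $\sigma_{\RF}(x) \in \Xi_n(x)$, which is the domain of $(\curry(\mathcal{E}_n)(x))^{-1}$. I would also briefly remark on the sanity check: applying $\mathcal{E}_n$ to $(x, r^n)$ and the formula \eqref{eq:En_function} recovers $r$ on first occurrences and $b = \neg r$ on second occurrences, which is precisely the red-first colouring, confirming consistency with Proposition~\ref{icc_bijective_prop}(ii).
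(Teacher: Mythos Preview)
Your proof is correct and follows essentially the same approach as the paper: both unpack the definition of $\operatorname{ICC}(\sigma_{\RF})(x)_t$ via the inverse formula from Proposition~\ref{icc_bijective_prop}, observe that the index $i^\star = \min\{k \mid x_k = t\}$ is a first occurrence so that $x_{i^\star} \notin \{x_1,\ldots,x_{i^\star-1}\}$, and conclude $\sigma_{\RF}(x)_{i^\star} = r$. Your additional remarks (verifying that $\sigma_{\RF}(x) \in \Xi_n(x)$ so the inverse applies, and the consistency check via $\mathcal{E}_n(x,r^n)$) are sound extras not present in the paper's proof.
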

\begin{proof}
    For $t \in [n]$, we have
\begin{align}
\operatorname{ICC}(\sigma_{\operatorname{RF}})(x)_t & =\operatorname{curry}(\mathcal E_n)(x)^{-1}\left(\sigma_{\mathrm{RF}}(x)\right)_t \nonumber\\
& =\sigma_{\mathrm{RF}}(x)_{\min \left\{k: x_k=t\right\}}.
\end{align}
Now, let $i = \min\{ k : x_k = t \}$. Then, $x_j = x_i \Rightarrow j > i$, which implies that
\begin{align*}
    & x_i \notin \{x_1, \ldots, x_{i-1}\}\\
    \implies \quad &  \sigma_{\mathrm{RF}}(x_i) = r \\
    \implies \quad & \operatorname{ICC}(\sigma_{\mathrm{RF}})(x)_t = r \quad \forall t \in [n] \\
    \implies \quad &
    \operatorname{ICC}\left(\sigma_{\operatorname{RF}}\right)(x)=r^n,
\end{align*}
completing the proof of the proposition.
\end{proof}

\section{BPSP Graph}
\label{sec:BPSP_graph}

In this appendix, we will demonstrate how every BPSP instance $x$ can be associated with a graph $G_x$, referred to as the BPSP graph. Through this association, we will establish that computing $\BPSP(x)$ from $x$ can be efficiently reduced to finding the maximum cut of $G_x$.

\subsection{BPSP Graph: Definition}

We begin by establishing some notation. For the BPSP instance $x=\left(x_1, x_2, \ldots, x_{2 n}\right) \in \Gamma_n$, we define the function $
\theta_x: 2^{[n]}  \to \mathbb{R}$, which maps subsets of $[n]$ to real numbers as follows:
\begin{align}
\label{eq:theta_x}
\theta_x(e)  =-\sum_{i=1}^{2 n-1}(-1)^{\eta(x, i)} \delta_{e,\left\{x_i, x_{i+1}\right\}},
\end{align}
where $2^{[n]}$ denotes the power set of $[n]$ and $\delta$ denotes the Kronecker delta. Now, we define the BPSP graph.

\renewcommand{\labelitemi}{\scriptsize{$\bullet$}}
\begin{definition}[BPSP graph]
\label{def:bpsp_graph}
Let $x=\left(x_1, x_2, \ldots, x_{2 n}\right) \in \Gamma_n \subseteq[n]^{2 n}$ be a BPSP instance. The \textit{BPSP graph} of $x$ is the weighted graph $G_x=\left(V_x, E_x, W_x\right)$, where
\begin{itemize}
    \item $V_x=[n]$ are the vertices,
    \item $E_x=\big\{\left\{x_{i}, x_{i+1} \right\}: i \in[2 n-1],\  x_i \neq x_{i+1},\ \theta_x\left(\left\{x_{i}, x_{i+1}\right\}\right) \neq 0\big\}$ are the edges,
    \item $W_x=\left.\theta_x\right|_{E_x}: E_x \to \mathbb R$ is the weight function assigning edges to real numbers.
\end{itemize}
\label{def:BPSP_graph}
\end{definition}

In essence, the BPSP graph associated with a length-$2n$ BPSP instance $x$ is constructed with vertices representing the symbols of the alphabet $[n]$. In this graph, edges connect symbols that are adjacent in $x$, with the exception of consecutive identical symbols and the zeroes of the $\theta_x$ function. According to Definition~\ref{def:BPSP_graph} and \cref{eq:theta_x}, the weight of an edge $e \in E_x \subset [n]$ in the graph is given by
\begin{align}
W_x(e)=-\sum_{i=1}^{2 n-1}(-1)^{\eta\left(x,i\right)} \delta_{e,\left\{x_i, x_{i+1}\right\}}.
\end{align}

\subsection{Properties of BPSP Graph}

\subsubsection{Possible degrees of BPSP graph}

We begin by establishing some properties of BPSP graphs. To initiate, we show that the vertex degrees of a BPSP graph cannot exceed 4. Denoting the degree of a vertex $v$ in $G$ by
\begin{align}
    \operatorname{deg}_G(v) = \left| \{ u \in V: \{u,v\} \in E\} \right|,
\end{align}
we prove the following proposition.

\begin{proposition} \label{max_deg_prop}
    Let $x \in \Gamma_n$ be a BPSP instance. Then, for all vertices $v \in V_x$ in the BPSP graph of $x$, 
    \begin{align}
        \operatorname{deg}_{G_x}(v) \leq 4.
    \end{align} 
\end{proposition}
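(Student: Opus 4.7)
The plan is to bound the number of distinct neighbours of a vertex $v\in V_x$ by exploiting the fact that $v$ corresponds to a symbol that occurs \emph{exactly twice} in the word $x$. Concretely, fix $v\in V_x=[n]$ and let $p_1<p_2$ denote the two positions in $[2n]$ at which $v$ appears, i.e.\ $x_{p_1}=x_{p_2}=v$ and $x_j\neq v$ for every other index $j$. This is guaranteed by the definition of $\Gamma_n$ in \eqref{eq:double_occurrence_words_n}.

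Next, I would argue that any neighbour $u$ of $v$ in $G_x$ must arise from an adjacency in the word that involves $v$. By Definition~\ref{def:bpsp_graph}, an edge $\{u,v\}\in E_x$ requires the existence of some $i\in[2n-1]$ with $\{x_i,x_{i+1}\}=\{u,v\}$. Since $v$ occurs only at positions $p_1$ and $p_2$, such an index $i$ must lie in the set
\begin{equation}
S_v \;=\; \bigl\{\, p_1-1,\ p_1,\ p_2-1,\ p_2 \,\bigr\}\cap[2n-1],
\end{equation}
and each such $i$ determines at most one candidate neighbour, namely the unique element of $\{x_i,x_{i+1}\}\setminus\{v\}$ (provided $x_i\neq x_{i+1}$, otherwise no edge is contributed at all by Definition~\ref{def:bpsp_graph}).

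Finally, I would observe that $|S_v|\le 4$ trivially, and that the neighbour set of $v$ is contained in the image of $S_v$ under the map $i\mapsto (\{x_i,x_{i+1}\}\setminus\{v\})$. Since this image has cardinality at most $|S_v|\le 4$, we conclude $\deg_{G_x}(v)\le 4$. The only subtlety---and the one place I would be careful---is the bookkeeping of boundary and degenerate cases: when $p_1=1$ or $p_2=2n$ one or two of the four candidate indices drop out; when $p_2=p_1+1$ the index $p_1$ (equivalently $p_2-1$) gives $x_i=x_{i+1}=v$, which contributes no edge by Definition~\ref{def:bpsp_graph}; and when two of the candidate positions happen to carry the same symbol, they collapse to a single neighbour in $E_x$. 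All of these cases only decrease the count, so the bound $\deg_{G_x}(v)\le 4$ is preserved, completing the proof.
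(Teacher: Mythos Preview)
Your proof is correct and follows essentially the same approach as the paper: both arguments exploit the fact that $v$ occurs exactly twice in $x$, so any neighbour must be one of the at most four symbols adjacent to those two occurrences. Your version is in fact slightly more careful, as you explicitly address the boundary and degenerate cases that the paper's proof leaves implicit.
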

\begin{proof}
    Consider $v \in V_x = [2n]$. Given that $x \in \Gamma_n$, it follows that the set of indices $
    \left\{j \in[2 n]: x_j=v\right\}
    $ contains precisely two elements, which we denote by $\alpha$ and $\beta$. By construction, $\alpha \neq \beta$, $x_\alpha=x_\beta=v$, and for any $\gamma$ such that $x_\gamma=v$, we have $\gamma \in\{\alpha, \beta\}$.

    Let $u \in V$ satisfy $\{u, v\} \in E$, i.e. $u$ and $v$ are neighbours in the BPSP graph. Then we have that 
    \begin{align}
    & \quad \left\{u, x_\alpha\right\} \in E \ \mbox { or }\ \left\{u, x_\beta\right\} \in E \nonumber\\
        \implies & \quad u=x_{\alpha-1} \ \mbox { or }\ u=x_{\alpha+1} \ \mbox { or }\ u=x_{\beta-1} \ \mbox { or }\ u = x_{\beta+1} \nonumber\\
        \implies & \quad u \in\left\{x_{\alpha-1}, x_{\alpha+1}, x_{\beta-1}, x_{\beta+1}\right\}.
\end{align}
Hence, $\{u \in V:\{u, v\} \in E\} \subseteq\left\{x_{\alpha-1}, x_{\alpha+1}, x_{\beta-1}, x_{\beta+1}\right\}$, which implies that
\begin{align}
    \operatorname{deg}_{G_x}(v) &=
        \left|\{u \in V:\{u, v\} \in E\} \right|
        \nonumber\\
        &\leq\left|\left\{x_{\alpha-1}, x_{\alpha+1}, x_{\beta-1}, x_{\beta+1}\right\}\right| \nonumber\\
        &\leq 4,
    \end{align}
    which completes the proof of the proposition.
\end{proof}

\subsubsection{Total weight of BPSP graph}

Next, for any weighted graph $G=(V,E,W)$ (where $V$ is the set of vertices, $E$ is the set of edges, and $W:E \to \mathbb R$ is a weight function), we define the total weight of $G$ as the sum of the weights of all its edges:
\begin{align}
\label{eq:total_weight_def}
    \Wtot(G) =
    \sum_{\{i,j\}\in E} W\left(\{i,j\}\right).
\end{align}

Additionally, we introduce the concept of the \textit{double letter count} of a BPSP instance $x = (x_1, x_2, \ldots, x_{2n}) \in \Gamma_n \subseteq [n]^{2n}$, which is defined as the number of consecutive identical symbols in $x$, denoted as:
\begin{align}
\label{eq:dl}
    \dl(x) = \left|
        \{i \in [2n-1]: x_i = x_{i+1}
        \}
    \right|.
\end{align}

Our next proposition expresses the total weight of a BPSP graph $G_x$ using \cref{eq:dl} and the eta function defined in \cref{eq:eta_function_def}.



\begin{proposition}
    Given a BPSP instance $x \in \Gamma_n$, the total weight of the BPSP graph $G_x=\left(V_x, E_x, W_x\right)$ associated with $x$ is given by 
    \begin{align}
    \Wtot (G_x) = -\sum_{\substack{k=1 \\ x_k \neq x_{k+1}}}^{2 n-1}(-1)^{\eta(x, k)} =
    -\dl(x)
    -\sum_{k=1}^{2 n-1}(-1)^{\eta(x, k)}.
    \label{eq:total_weight_BPSP}
    \end{align}
    \label{prop:total_weight_BPSP}
\end{proposition}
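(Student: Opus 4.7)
The plan is to unpack the definition of $\Wtot(G_x)$, interchange the two summations it contains, and then partition the resulting sum according to whether consecutive symbols of $x$ agree. To begin, I would expand
\begin{equation*}
    \Wtot(G_x) \;=\; \sum_{e \in E_x} W_x(e) \;=\; -\sum_{e \in E_x}\sum_{i=1}^{2n-1}(-1)^{\eta(x,i)}\,\delta_{e,\{x_i,x_{i+1}\}},
\end{equation*}
and swap the order of summation to obtain $\Wtot(G_x) = -\sum_{i=1}^{2n-1}(-1)^{\eta(x,i)} N_i$, where $N_i := |\{e \in E_x : e = \{x_i,x_{i+1}\}\}| \in \{0,1\}$.

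The next step is to determine exactly which indices contribute. If $x_i = x_{i+1}$, then $\{x_i,x_{i+1}\}$ is a singleton and therefore not an edge, so $N_i = 0$. If $x_i \neq x_{i+1}$ and $\{x_i,x_{i+1}\} \in E_x$, then $N_i = 1$. The subtle case is when $x_i \neq x_{i+1}$ but $\{x_i,x_{i+1}\} \notin E_x$, which by \cref{def:bpsp_graph} happens precisely when $\theta_x(\{x_i,x_{i+1}\}) = 0$. To handle this cleanly, I would group the $i$-indices by their underlying unordered pair $e = \{x_i,x_{i+1}\}$ with $x_i \neq x_{i+1}$: the indices mapped to $e$ jointly contribute $-\sum_{i:\{x_i,x_{i+1}\}=e}(-1)^{\eta(x,i)} = \theta_x(e)$, which vanishes exactly for the excluded zero-weight pairs. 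Thus the contribution from the "hidden" pairs is identically zero, and one obtains
\begin{equation*}
    \Wtot(G_x) \;=\; -\sum_{\substack{i=1\\ x_i \neq x_{i+1}}}^{2n-1}(-1)^{\eta(x,i)},
\end{equation*}
which is the first claimed equality.

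For the second equality, I would split the unrestricted sum $\sum_{k=1}^{2n-1}(-1)^{\eta(x,k)}$ according to whether $x_k = x_{k+1}$. By \cref{eq:eta_consecutive_symbols}, whenever $x_k = x_{k+1}$ one has $\eta(x,k) = 1$, so $(-1)^{\eta(x,k)} = -1$, and summing over such indices yields $-\dl(x)$. Rearranging gives $\sum_{k:\,x_k \neq x_{k+1}}(-1)^{\eta(x,k)} = \dl(x) + \sum_{k=1}^{2n-1}(-1)^{\eta(x,k)}$, and substituting this into the first equality produces the second.

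The only non-routine step is the bookkeeping argument that "hidden" adjacencies (pairs $\{x_i,x_{i+1}\}$ with $x_i\neq x_{i+1}$ but $\theta_x=0$) contribute nothing when the outer sum runs over $E_x$ rather than over all consecutive distinct pairs. Once this cancellation is justified via the grouping trick above, the remainder is an immediate application of the earlier lemma on consecutive identical symbols.
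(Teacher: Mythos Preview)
Your proof is correct. Both your argument and the paper's hinge on swapping the order of summation in $\sum_{e\in E_x}\theta_x(e)$ and then reconciling the index range $[2n-1]$ with the edge set $E_x$; the second equality is handled identically in both via Lemma~\ref{eq:eta_consecutive_symbols}. The paper's bookkeeping for the first equality is different: it rewrites the sum over $E_x$ as a sum over indices $i\in[2n-1]$ divided by the multiplicity $|\{j:\{x_j,x_{j+1}\}=\{x_i,x_{i+1}\}\}|$, substitutes the definition of $\theta_x$, swaps sums, and then cancels the multiplicity against the inner $\delta$-sum. Your route is more direct: after swapping, you observe that $\sum_{e\in E_x}\delta_{e,\{x_i,x_{i+1}\}}$ is simply the indicator $N_i=[\{x_i,x_{i+1}\}\in E_x]$, and you dispose of the ``hidden'' zero-weight pairs by regrouping the target sum by $e$ and noting each such group contributes $\theta_x(e)=0$. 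The two arguments are logically equivalent, but yours avoids the multiplicity-counting detour.
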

\begin{proof} To establish the first equality presented in \cref{eq:total_weight_BPSP}, we proceed with the following calculation:

\begingroup
\allowdisplaybreaks
\begin{align}
    W_{\text {tot}}\left(G_x\right) 
    &=\sum_{\{a, b\} \in E_x} W_x(\{a, b\}) \nonumber\\
    &=\sum_{\{a, b\} \in E_x} \theta_x(\{a, b\}) \nonumber\\
    &=\sum_{\substack{i \in\{2 n-1] \\ x_i \neq x_{i+1} \\ \theta_x\left(\left\{x_i, x_{i+1}\right\}\right) \neq 0}} \frac{\theta_x\left(\left\{x_i, x_{i+1}\right\}\right)}{\mid\left\{j \in[2 n-1]:\left\{x_j, x_{j+1}\right\}=\left\{x_i, x_{i+1}\right\} \mid\right.} \nonumber\\
    &=\sum_{\substack{i \in[2 n-1] \\ x_i \neq x_{i+1}}} \frac{\theta_x\left(\left\{x_i, x_{i+1}\right\}\right)}{\mid\left\{j \in[2 n-1]:\left\{x_j, x_{j+1}\right\}=\left\{x_i, x_{i+1}\right\} \mid\right.} \nonumber\\
    &=\sum_{\substack{i \in [2 n-1] \\ x_i \neq x_{i+1}}} \frac{
    -\sum_{k=1}^{2 n-1}(-1)^{\eta\left(x, k\right)} \delta_{\left\{x_i, x_{i+1}\right\},\left\{x_k, x_{k+1}\right\}}
    }{\mid\left\{j \in\{2 n-1]:\left\{x_j, x_{j+1}\right\}=\left\{x_i, x_{i+1}\right\} \mid\right.} \nonumber\\
    &=-\sum_{k=1}^{2 n-1}(-1)^{\eta(x, k)} \sum_{\substack{i \in[2 n-1] \\ x_i \neq x_{i+1}}} \frac{\delta_{\left\{x_{i}, x_{i+1}\right\},\left\{x_k, x_{k+1}\right\}}}{\mid\left\{j \in[2 n-1]:\left\{x_j, x_{j+1}\right\}=\left\{x_i, x_{i+1}\right\} \mid\right.} \nonumber\\
    &=-\sum_{k=1}^{2 n-1}(-1)^{\eta(x, k)} \sum_{i \in[2 n-1]} [x_i \neq x_{i+1}] \frac{\delta_{\left\{x_i, x_{i+1}\right\},\left\{x_k, x_{k+1}\right\}}}{\mid\left\{j \in[2 n-1]:\left\{x_j, x_{j+1}\right\}=\left\{x_i, x_{i+1}\right\} \mid\right.} \nonumber\\
    &=-\sum_{k=1}^{2 n-1}(-1)^{\eta(x, k)} \frac{[x_k \neq x_{k+1}]}{\mid\left\{j \in[2 n-1]:\left\{x_j, x_{j+1}\right\}=\left\{x_k, x_{k+1}\right\} \mid\right.} \sum_{i \in[2 n-1]} \delta_{\left\{x_{i}, x_{i+1}\right\},\left\{x_k, x_{k+1}\right\}} \nonumber\\
    &=-\sum_{k=1}^{2 n-1}(-1)^{\eta(x, k)} \frac{
    [x_k \neq x_{k+1}]}{\mid\left\{j \in[2 n-1]:\left\{x_j, x_{j+1}\right\}=\left\{x_k, x_{k+1}\right\} \mid\right.} \left| \left\{i \in[2 n+1]:\left\{x_{i,}, x_{i+1}\right\}=\left\{x_k=x_{k+1}\right\} \right| \right. \nonumber\\
    &=-\sum_{\substack{k=1 \\ x_k \neq x_{k+1}}}^{2 n-1}(-1)^{\eta(x, k)}.
    \label{eq:cal_total_weight}
\end{align}
\endgroup
The first line above follows directly from the definition of total weight given in \cref{eq:total_weight_def}. The second line stems from the relationship $W_x = \theta_x|_{E_x}$, as defined in Definition~\ref{def:bpsp_graph}. The third line follows from the observation that if a finite set $T = \{a_i: i \in M\}$ (where $M$ is also a finite set) contains elements $a_i$, with the possibility of some elements being identical, then for any function $f:T\to\mathbb R$:
\begin{align}
    \sum_{t\in T} f(t) = \sum_{i\in M} \frac{f(a_i)}{|\{j:a_i = a_j\}|}.
\end{align}
This indicates that when summing over a set with potentially repeated elements, each distinct element contributes its value divided by the number of times it appears in the set. The fourth line arises from the observation that the summand in the expression vanishes whenever $\theta_x(\{x_i,x_{i+1}\})=0$. The fifth line follows from the definition of $\theta_x$ provided in \cref{eq:theta_x}.
The sixth line is attained by rearranging the terms presented on the fifth line. The seventh line ensues from the property that if $S \subseteq T$ are any two indexing sets, then the following sums are equivalent:
\begin{align}
\label{eq:sum_Iverson}
    \sum_{i \in S} (\ \cdot\ ) = \sum_{i\in T} [i \in S] (\ \cdot \ ).
\end{align}
This enables us to represent the sum over the potentially smaller set $S$ as the sum over its superset $T$ by incorporating an indicator function. The eighth line follows from the observation that the sum in the expression vanishes whenever the sets $\{x_i,x_{i+1}\}$ and $\{x_k,x_{k+1}\}$ are not equal. The ninth line stems from the property that
\begin{align*}
    \sum_i \delta_{a_i a_j} f(a_j) = f(a_j)
    |\{i:a_i=a_j\}|.
\end{align*}
Finally, the tenth line is derived by applying \cref{eq:sum_Iverson} once more.
This concludes the proof of the first equality.

To show the second equality of \cref{eq:total_weight_BPSP}, consider:

\begingroup
\allowdisplaybreaks
\begin{align}
    \sum_{\substack{k=1 \\ x_k \neq x_{k+1}}}^{2 n-1}(-1)^{\eta(x, k)} &=\left(\sum_{k=1}^{2 n-1}-\sum_{\substack{k=1 \\ x_k=x_{k+1}}}^{2 n-1}\right)(-1)^{\eta(x, k)} \nonumber\\
    &= \sum_{k=1}^{2 n-1}(-1)^{\eta(x, k)}-\sum_{\substack{k=1 \\ x_k=x_{k+1}}}^{2 n-1}(-1)^{\eta(x,k)} \nonumber\\
    &= \sum_{k=1}^{2 n-1}(-1)^{\eta(x, k)}+\left|\left\{i \in[2 n-1]: x_i=x_{i+1}\right\}\right| \nonumber\\
    &= \sum_{k=1}^{2 n-1}(-1)^{\eta(x, k)}+\dl(x).
    \label{eq:second_equality}
\end{align}
\endgroup
Negating the equalities in \cref{eq:second_equality} yields the second equality in
\cref{eq:total_weight_BPSP}.

The first line above arises from the observation that the indices $k$ satisfying $x_k = x_{k+1}$ and those satisfying $x_k \neq x_{k+1}$ form a partition of the index set $[2n-1]$. The second line follows from the distributive property. The third line is deduced from the observation that the second summation is exclusively over terms where $x_k = x_{k+1}$, thereby implying $\eta(x,k)= 1$, as shown by Lemma~\ref{eq:eta_consecutive_symbols}. Finally, the fourth line follows from the definition of the double letter count as defined by \cref{eq:dl}.
\end{proof}

\subsubsection{Relating total weight of BPSP graph to BPSP cost of red-first colouring}

Our next corollary expresses the BPSP cost of the red-first colouring, calculated in \cref{eq:red-first-cost}, in terms of the double letter count as defined in \cref{eq:dl} and the total weight of the BPSP graph defined in \cref{eq:total_weight_def}.

\begin{corollary} \label{cor:red_first_cost_total_weight}
Let $x \in \Gamma_n$. Then the BPSP cost of the red-first colouring of $x$ is
\begin{align}
\label{eq:red_first_cost_total_weight}
\xi_n\left(\sigma_{\RF}(x)\right)=n-\frac{1}{2}+\frac{1}{2} \dl(x)+\frac{1}{2} W_{\mathrm{tot}}\left(G_x\right).
\end{align}
\end{corollary}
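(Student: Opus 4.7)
The plan is to combine two results already proved earlier in the paper. First, the proposition on the red-first cost established the identity
\begin{equation}
\xi_n\!\left(\sigma_{\RF}(x)\right) \;=\; n - \tfrac{1}{2} - \tfrac{1}{2}\sum_{i=1}^{2n-1} (-1)^{\eta(x,i)},
\label{eq:plan_rfcost}
\end{equation}
which expresses the red-first cost purely in terms of the $\eta$-function. Second, Proposition~\ref{prop:total_weight_BPSP} gave
\begin{equation}
W_{\mathrm{tot}}(G_x) \;=\; -\dl(x) - \sum_{k=1}^{2n-1} (-1)^{\eta(x,k)},
\label{eq:plan_wtot}
\end{equation}
relating the total graph weight, the double-letter count, and the same $\eta$-sum.

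The strategy is simply to eliminate the common $\eta$-sum between \eqref{eq:plan_rfcost} and \eqref{eq:plan_wtot}. Concretely, I would solve \eqref{eq:plan_wtot} for $\sum_{k=1}^{2n-1}(-1)^{\eta(x,k)}$, obtaining $-\dl(x) - W_{\mathrm{tot}}(G_x)$, and then substitute the resulting expression into \eqref{eq:plan_rfcost}. After distributing the $-\tfrac{1}{2}$ over the two terms, the signs flip and one lands directly on the claimed formula
\begin{equation*}
\xi_n\!\left(\sigma_{\RF}(x)\right) \;=\; n - \tfrac{1}{2} + \tfrac{1}{2}\dl(x) + \tfrac{1}{2}W_{\mathrm{tot}}(G_x).
\end{equation*}

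There is no genuine obstacle here: the corollary is a one-line algebraic consequence of two results already in hand, and the whole content of the proof is the observation that the $\eta$-sum appearing in the red-first cost is, up to sign and an additive $\dl(x)$ term, exactly the total weight of the BPSP graph. If anything, the only care required is bookkeeping the signs correctly when substituting, since both \eqref{eq:plan_rfcost} and \eqref{eq:plan_wtot} carry negative prefactors in front of the $\eta$-sum.
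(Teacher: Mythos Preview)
Your proposal is correct and matches the paper's own proof essentially line for line: the paper also starts from $\xi_n(\sigma_{\RF}(x)) = n - \tfrac{1}{2} - \tfrac{1}{2}\sum_{i}(-1)^{\eta(x,i)}$, replaces the $\eta$-sum by $-\dl(x) - W_{\mathrm{tot}}(G_x)$ via Proposition~\ref{prop:total_weight_BPSP}, and simplifies. There is nothing to add.
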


\begin{proof}
From \cref{eq:red-first-cost}, we derive the BPSP cost of the red-first colouring of $x$ as follows:
    \begin{align*}
        \xi_n\left(\sigma_{\RF}(x)\right) =n-\frac{1}{2}-\frac{1}{2} \underbrace{\sum_{i=1}^{2 n-1}(-1)^{\eta(x, i)}}_{= -\dl(x)-W_{\mathrm{tot}}\left(G_x\right)} =n-\frac{1}{2}+\frac{1}{2} \dl(x)+\frac{1}{2} W_{\mathrm{tot}}\left(G_x\right),
    \end{align*}
   where the equality within the underbraces above is obtained from \cref{eq:total_weight_BPSP}.
\end{proof}

\subsection{Reducing BPSP to Weighted MaxCut}

In this section, we present a polynomial-time reduction from the BPSP problem to the weighted maximum cut (MaxCut) problem. We begin by introducing the necessary terminology for defining the MaxCut problem:
Consider a weighted graph $G = (V,E,W)$, where $V = [n]$ for simplicity and $W:E\to\mathbb R$ represents the edge weights. A \textit{cut} of $G$ is defined as a binary string $z \in \{r,b\}^n$. The \textit{weight} of a cut $z \in \{r,b\}^n$ in $G$ is defined as
\begin{align}
\label{eq:weight_cut_def}
    \wt_G(z) = \sum_{\{i,j\} \in E} W(\{i,j\}) [z_i \neq z_j].
\end{align}

The MaxCut problem, whose goal is to identify the cut that maximises the total weight of the cut edges in the graph, can be expressed as follows:

\par\noindent\rule{\textwidth}{0.5pt}
\textbf{\refstepcounter{Problem}Problem~\theProblem\label{problem:MaxCut}.} \textsc{Weighted Maximum Cut (MaxCut)} 
\hfill \nopagebreak\\[-6pt]\nopagebreak
\noindent\rule{\textwidth}{0.3pt} \nopagebreak\\[-10pt]
\begin{tabular}{rrl}
\textbf{Input:}     &  
\multicolumn{2}{l}{a weighted graph $G = (V, E, W)$, where $V = [n]$ and $W:E \to \mathbb R$.} \\
\textbf{Output:}
& maximize & \quad $\wt_G(z)$, as defined in \cref{eq:weight_cut_def}
\\
& subject to & \quad $z \in \{r,b\}^n$.
\end{tabular}
\\
\noindent\rule{\textwidth}{0.5pt}








Also, we denote the optimal cost and the set of optimal solutions of Problem~\ref{problem:MaxCut}, respectively, as follows:
\begin{align}
\textsc{MaxCut}(G) &=  \max_{z \in \{r,b\}^n} \wt_G(z) \\
\textsc{MaxCut}^*(G) &=  \argmax_{z \in \{r,b\}^n} \ \wt_G(z).
\end{align}

Our next proposition provides a formula for computing the weight of a cut of a BPSP graph $G_x$ for a given BPSP instance $x$, expressed in terms of the eta function.

\begin{corollary}\label{cor:cut_weight_eta}
    Let $x \in \Gamma_n$ be a BPSP instance, and denote its BPSP graph as $G_x=\left(V_x, E_x, W_x\right)$. For a cut $z \in \{r, b\}^n$ of $G_x$, the weight of $z$ in $G_x$ is 
    \begin{align}
    \label{eq:weight_cut_BPSP_sum}
        \wt_{G_x}(z)=-\sum_{k=1}^{2 n-1}(-1)^{\eta(x, k)}\left[z_{x_k} \neq z_{x_{k+1}}\right].
    \end{align}
\end{corollary}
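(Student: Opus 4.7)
The plan is to mirror, almost verbatim, the calculation used in the proof of Proposition~\ref{prop:total_weight_BPSP}, since the only difference between $\Wtot(G_x)$ and $\wt_{G_x}(z)$ is the extra factor $[z_i\neq z_j]$ attached to each edge $\{i,j\}$. First, I would start from the definition in \cref{eq:weight_cut_def}, use $W_x = \theta_x|_{E_x}$ from \cref{def:bpsp_graph}, and rewrite the sum over edges as a sum over adjacencies in the word. Concretely, I would express
\begin{equation}
    \wt_{G_x}(z) \;=\; \sum_{\{i,j\}\in E_x}\theta_x(\{i,j\})\,[z_i\neq z_j] \;=\; \sum_{\substack{i\in[2n-1]\\ x_i\neq x_{i+1}}} \frac{\theta_x(\{x_i,x_{i+1}\})\,[z_{x_i}\neq z_{x_{i+1}}]}{|\{j\in[2n-1]:\{x_j,x_{j+1}\}=\{x_i,x_{i+1}\}\}|},
\end{equation}
exactly as in the first four lines of the computation in \cref{eq:cal_total_weight}. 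The restriction $\theta_x(\{x_i,x_{i+1}\})\neq 0$ can be dropped because zero-weight terms vanish anyway, and consecutive doubles ($x_i=x_{i+1}$) need not be added back in because $[z_{x_i}\neq z_{x_{i+1}}]$ kills them.

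Next, I would substitute the definition of $\theta_x$ from \cref{eq:theta_x} and swap the order of summation to pull the sum over $k$ outside. The key observation that makes the Iverson factor cooperate with this swap is that the Kronecker delta $\delta_{\{x_i,x_{i+1}\},\{x_k,x_{k+1}\}}$ is nonzero only when $\{x_i,x_{i+1}\}=\{x_k,x_{k+1}\}$, in which case $[z_{x_i}\neq z_{x_{i+1}}]=[z_{x_k}\neq z_{x_{k+1}}]$; so the bracket depending on $i$ may be replaced by the corresponding bracket in $k$ and factored out of the inner sum. This yields
\begin{equation}
    \wt_{G_x}(z) \;=\; -\sum_{k=1}^{2n-1}(-1)^{\eta(x,k)}\,[z_{x_k}\neq z_{x_{k+1}}] \sum_{\substack{i\in[2n-1]\\ x_i\neq x_{i+1}}} \frac{\delta_{\{x_i,x_{i+1}\},\{x_k,x_{k+1}\}}}{|\{j\in[2n-1]:\{x_j,x_{j+1}\}=\{x_i,x_{i+1}\}\}|}.
\end{equation}

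Finally, I would evaluate the inner sum by noting that whenever $x_k\neq x_{k+1}$ the numerator is $1$ for exactly $N_k := |\{j\in[2n-1]:\{x_j,x_{j+1}\}=\{x_k,x_{k+1}\}\}|$ values of $i$ (all of which automatically satisfy $x_i\neq x_{i+1}$), while the denominator is also $N_k$, so the inner sum collapses to $1$. The terms with $x_k=x_{k+1}$ are automatically annihilated by the Iverson bracket $[z_{x_k}\neq z_{x_{k+1}}]$, so they can be included without harm. Putting everything together gives \cref{eq:weight_cut_BPSP_sum}. I do not expect a serious obstacle here: the only subtle step is the simultaneous handling of the edge-multiplicity denominator and the Iverson factor, and Proposition~\ref{prop:total_weight_BPSP} already shows that this bookkeeping goes through cleanly.
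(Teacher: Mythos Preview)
Your proposal is correct and follows essentially the same route as the paper's own proof: both mirror the computation of Proposition~\ref{prop:total_weight_BPSP} with the extra Iverson factor $[z_{x_i}\neq z_{x_{i+1}}]$ carried along, use the Kronecker delta to replace it by $[z_{x_k}\neq z_{x_{k+1}}]$, collapse the multiplicity quotient to $1$, and finally drop the $x_k\neq x_{k+1}$ restriction because the Iverson bracket annihilates those terms.
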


\begin{proof}
To begin, we note the similarity between the formulae for the weight of a cut specified in \cref{eq:weight_cut_def} and the total weight of a graph given in \cref{eq:total_weight_def}, where the sole difference is the presence of the indicator function $[z_i \neq z_j]$ in each term of the sum in the former formula. Accordingly, our proof parallels that of Proposition~\ref{prop:total_weight_BPSP}:

\begingroup
\allowdisplaybreaks
    \begin{align*}
    \wt_{G_x}(z) &=\sum_{\{a, b\} \in E_x} W_x(\{a, b\}) \left[z_a \neq z_b\right] \\
    &=\sum_{\{a, b\} \in E_x} \theta_x(\{a, b\}) \left[z_a \neq z_b\right] \\
    &=\sum_{\substack{i \in\{2 n-1] \\ x_i \neq x_{i+1} \\ \theta_x\left(\left\{x_i, x_{i+1}\right\}\right) \neq 0}} \frac{\theta_x\left(\left\{x_i, x_{i+1}\right\}\right) \left[z_{x_i} \neq z_{x_{i+1}}\right]}{\mid\left\{j \in[2 n-1]:\left\{x_j, x_{j+1}\right\}=\left\{x_i, x_{i+1}\right\} \mid\right.} \\
    &=\sum_{\substack{i \in[2 n-1] \\ x_i \neq x_{i+1}}} \frac{\theta_x\left(\left\{x_i, x_{i+1}\right\}\right) \left[z_{x_i} \neq z_{x_{i+1}}\right]}{\mid\left\{j \in[2 n-1]:\left\{x_j, x_{j+1}\right\}=\left\{x_i, x_{i+1}\right\} \mid\right.} \\
    &=\sum_{\substack{i \in\{2 n-1] \\ x_i \neq x_{i+1}}} \frac{1}{\mid\left\{j \in\{2 n-1]:\left\{x_j, x_{j+1}\right\}=\left\{x_i, x_{i+1}\right\} \mid\right.}\left\{-\sum_{k=1}^{2 n-1}(-1)^{\eta\left(x, k\right)} \delta_{\left\{x_i, x_{i+1}\right\},\left\{x_k, x_{k+1}\right\}}\right\} \left[z_{x_i} \neq z_{x_{i+1}}\right] \\
    &=-\sum_{k=1}^{2 n-1}(-1)^{\eta(x, k)} \sum_{\substack{i \in[2 n-1] \\ x_i \neq x_{i+1}}} \frac{\delta_{\left\{x_{i}, x_{i+1}\right\},\left\{x_k, x_{k+1}\right\}}}{\mid\left\{j \in[2 n-1]:\left\{x_j, x_{j+1}\right\}=\left\{x_i, x_{i+1}\right\} \mid\right.} \left[z_{x_i} \neq z_{x_{i+1}}\right]\\
    &=-\sum_{k=1}^{2 n-1}(-1)^{\eta(x, k)} \sum_{i \in[2 n-1]} [x_i \neq x_{i+1}] \frac{\delta_{\left\{x_i, x_{i+1}\right\},\left\{x_k, x_{k+1}\right\}}}{\mid\left\{j \in[2 n-1]:\left\{x_j, x_{j+1}\right\}=\left\{x_i, x_{i+1}\right\} \mid\right.} \left[z_{x_i} \neq z_{x_{i+1}}\right]\\
    &=-\sum_{k=1}^{2 n-1}(-1)^{\eta(x, k)} \frac{[x_k \neq x_{k+1}] \left[z_{x_k} \neq z_{x_{k+1}}\right]}{\mid\left\{j \in[2 n-1]:\left\{x_j, x_{j+1}\right\}=\left\{x_k, x_{k+1}\right\} \mid\right.} \sum_{i \in[2 n-1]} \delta_{\left\{x_{i}, x_{i+1}\right\},\left\{x_k, x_{k+1}\right\}} \\
    &=-\sum_{k=1}^{2 n-1}(-1)^{\eta(x, k)} \frac{[x_k \neq x_{k+1}] \left[z_{x_k} \neq z_{x_{k+1}}\right]}{\mid\left\{j \in[2 n-1]:\left\{x_j, x_{j+1}\right\}=\left\{x_k, x_{k+1}\right\} \mid\right.} \mid\left\{i \in[2 n+1]:\left\{x_{i,}, x_{i+1}\right\}=\left\{x_k,x_{k+1}\right\} \mid\right. \\
    &=-\sum_{\substack{k=1 \\ x_k \neq x_{k+1}}}^{2 n-1}(-1)^{\eta(x, k)} \left[z_{x_k} \neq z_{x_{k+1}}\right] \\
    &=-\sum_{k=1}^{2 n-1} (-1)^{\eta(x, k)} \left[z_{x_k} \neq z_{x_{k+1}}\right].
\end{align*}
\endgroup
The first line above stems from the definition of the weight of a cut, as provided in \cref{eq:weight_cut_def}. Subsequent lines 2 through 7, and 9 through 10, are derived from the same line of reasoning as the chain of equalities presented in \cref{eq:cal_total_weight}. 

To show that the eighth line follows from the seventh, it suffices to prove that if
\begin{align}
\label{eq:seventh_to_eighth_supposition}
    \{x_i,x_{i+1}\} = \{x_k,x_{k+1}\},
\end{align} then \begin{align}
\label{eq:seventh_to_eighth}
    [x_i \neq x_{i+1}][z_{x_i}\neq z_{x_{i+1}}] = [x_k \neq x_{k+1}][z_{x_k}\neq z_{x_{k+1}}].
\end{align}
To prove this, let us assume that \cref{eq:seventh_to_eighth_supposition} holds. Then the ordered pair $(x_i, x_{i+1})$ is equal to either $(x_k,x_{k+1})$ or 
$(x_{k+1},x_k)$. In either case, $x_i = x_{i+1}$ holds if and only if $x_k = x_{k+1}$ holds. Hence,
\begin{align}
\label{eq:seventh_to_eighth_part1} [x_i \neq x_{i+1}] = [x_k \neq x_{k+1}].
\end{align}
Similarly, \cref{eq:seventh_to_eighth_supposition} implies that the ordered pair $(z_{x_i},z_{x_{i+1}})$ is equal to either
$(z_{x_k},z_{x_{k+1}})$ or $(z_{x_{k+1}},z_{x_k})$. In either case, $z_{x_i} \neq z_{x_{i+1}}$ holds if and only if $z_{x_k} \neq z_{x_{k+1}}$ holds. Hence, 
\begin{align}
\label{eq:seventh_to_eighth_part2}
    [z_{x_i}\neq z_{x_{i+1}}] = [z_{x_k}\neq z_{x_{k+1}}].
\end{align}
Combining \cref{eq:seventh_to_eighth_part1} and \cref{eq:seventh_to_eighth_part2} yields 
\cref{eq:seventh_to_eighth}. This concludes our proof that \cref{eq:seventh_to_eighth_supposition} implies \cref{eq:seventh_to_eighth}, thereby completing the derivation of the eighth line.

Finally, the derivation of the eleventh line follows from the following observation: when $x_k = x_{k+1}$, it follows that $z_{x_k} = z_{x_{k+1}}$, yielding $[z_{x_k} \neq z_{x_{k+1}}] = 0$. This implies that the terms where $x_k \neq x_{k+1}$ make no contribution to the sum. Consequently, whether these terms are included or not has no bearing on the sum, affirming the equivalence of the tenth and eleventh lines. 
\end{proof}

Our next objective is to establish a formula that connects $\tilde \xi_n(x,z)$---the measure of colour swaps required in colouring the BPSP instance $x$ with the colouring $z$, defined in \cref{eq:xi_tilde}---to the weight of the cut $z$ in the BPSP graph $G_x$. To achieve this, we note that the expression for $\tilde \xi_n(x,z)$ that was derived in \cref{eq:tildexin} contains terms of the form $\left[\neg^a u \neq v\right]$. To simplify these terms, our subsequent lemma establishes an identity that re-expresses them in a more useful form:
\begin{lemma}
\label{lem:useful_equality}
    For any $u, v \in\{r, b\}$ and $a \in\{0,1\}$, the following equality holds: 
    \begin{align}
    \label{eq:useful_equality}
    \left[\neg^a u \neq v\right]=a+(-1)^a[u \neq v].
    \end{align}
\end{lemma}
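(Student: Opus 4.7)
The plan is to prove the identity by a direct case analysis on $a\in\{0,1\}$, since both sides are functions of the finitely many inputs $(u,v,a)\in\{r,b\}^2\times\{0,1\}$. The key observation is that the negation $\neg$ is an involution on $\{r,b\}$, so $\neg u \neq v$ holds precisely when $u = v$, i.e. $[\neg u \neq v] = 1 - [u \neq v]$.

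In the case $a=0$, we use $\neg^0 = \mathrm{id}$, so the left-hand side reduces to $[u \neq v]$, while the right-hand side becomes $0 + (-1)^0[u\neq v] = [u\neq v]$, matching. In the case $a=1$, we have $\neg^1 = \neg$, so the left-hand side is $[\neg u \neq v] = 1 - [u\neq v]$, while the right-hand side is $1 + (-1)^1[u\neq v] = 1 - [u\neq v]$, again matching. This exhausts all possibilities for $a$ and hence establishes the identity for all admissible $(u,v,a)$.

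There is no real obstacle here: the only subtlety is justifying $[\neg u \neq v] = 1 - [u \neq v]$, which follows from the fact that $\{r,b\}$ has exactly two elements so $\neg u \neq v \iff u = v$. Since the statement is quantified over only eight tuples $(u,v,a)$, the case split is exhaustive and the proof is essentially a one-line verification in each case; no induction, no appeal to earlier propositions, and no auxiliary lemmas are required.
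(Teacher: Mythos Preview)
Your proof is correct and follows essentially the same approach as the paper: a two-case split on $a\in\{0,1\}$, using in the $a=1$ case that $[\neg u \neq v] = [u=v] = 1 - [u\neq v]$. The paper's argument is identical in structure and content.
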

\begin{proof}
    We analyse the cases $a=0$ and $a=1$ separately, establishing the equivalence between the expressions on the left-hand side (LHS) and the right-hand side (RHS) of  \cref{eq:useful_equality}: When $a = 0$, 
        \begin{align}
            \mathrm{RHS} =0+(-1)^0[u \neq v]=[u \neq v]=\mathrm{LHS}.
        \end{align}
When $a = 1$,
        \begin{align}
        \operatorname{RHS} & =1+(-1)[u \neq v]=1-[u \neq v] 
        =[u=v]=[\neg u \neq v]=\mathrm{LHS}.
        \end{align}
In both cases, we observe that the RHS equals the LHS, thus establishing the validity of the lemma.
\end{proof}

We are now poised to leverage Lemma~\ref{lem:useful_equality} to express the number of colour swaps $\tilde{\xi}_n(x, z)$ in terms of the cut weight
$\wt_{G_x}(z)$. Strikingly, we find that $\tilde{\xi}_n(z, x)$ is precisely the difference between $\xi_n\left(\sigma_{\RF}(x)\right)$---the number of colour swaps in the red-first encoding---and the cut weight $\wt_{G_x}(z)$:

\begin{proposition} \label{cor:paint_swap=red-first_minus_cut weight}
    For any $z \in\{r, b\}^n$ and $x \in \Gamma_n$,
    \begin{align}
    \label{eq:paint_swap=red-first_minus_cut weight}
\tilde{\xi}_n(x, z)=\xi_n\left(\sigma_{\RF}(x)\right)-\wt_{G_x}(z).
\end{align}
\end{proposition}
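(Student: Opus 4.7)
The plan is to expand $\tilde{\xi}_n(x,z)$ using its explicit ICC formula from Proposition~\ref{icc_cost_prop}, apply the Iverson-bracket identity from Lemma~\ref{lem:useful_equality} termwise to strip off the negation, and then recognise the two resulting sums as exactly $\xi_n(\sigma_{\RF}(x))$ and $-\wt_{G_x}(z)$ respectively.

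More concretely, I would start from
\begin{equation*}
    \tilde{\xi}_n(x,z) = \sum_{i=1}^{2n-1}\bigl[\neg^{\eta(x,i)} z_{x_i} \neq z_{x_{i+1}}\bigr],
\end{equation*}
which is \cref{eq:tildexin}. Applying Lemma~\ref{lem:useful_equality} with $a = \eta(x,i)$, $u = z_{x_i}$, and $v = z_{x_{i+1}}$ rewrites each summand as $\eta(x,i) + (-1)^{\eta(x,i)}[z_{x_i} \neq z_{x_{i+1}}]$. Splitting the sum into its two pieces yields
\begin{equation*}
    \tilde{\xi}_n(x,z) = \sum_{i=1}^{2n-1} \eta(x,i) + \sum_{i=1}^{2n-1} (-1)^{\eta(x,i)}\bigl[z_{x_i} \neq z_{x_{i+1}}\bigr].
\end{equation*}

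The first sum is precisely $\xi_n(\sigma_{\RF}(x))$ by the first equality in \cref{eq:red-first-cost}. The second sum, up to an overall sign, is exactly the expression for $\wt_{G_x}(z)$ given in \cref{eq:weight_cut_BPSP_sum} of Corollary~\ref{cor:cut_weight_eta}, so it equals $-\wt_{G_x}(z)$. Combining these two identifications gives the claimed formula $\tilde{\xi}_n(x,z) = \xi_n(\sigma_{\RF}(x)) - \wt_{G_x}(z)$.

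There is no substantive obstacle here: all the conceptual work has already been done in the preceding propositions and corollaries, which express each of the three quantities as a weighted sum over the same $2n-1$ adjacent pairs indexed by the $\eta$-function. The proof is essentially a bookkeeping step that leverages Lemma~\ref{lem:useful_equality} to separate the ``baseline'' (red-first) contribution from the ``correction'' (cut-weight) contribution. The only point worth double-checking is the sign convention: the minus sign in the definition of $W_x$ (and hence in \cref{eq:weight_cut_BPSP_sum}) is what turns the $+\sum (-1)^{\eta(x,i)}[\cdot]$ term into $-\wt_{G_x}(z)$, and this is exactly what makes the identity come out with the cut weight entering subtractively, so that maximising the cut minimises the swaps, as needed for the reduction to MaxCut.
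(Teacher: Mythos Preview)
Your proposal is correct and follows essentially the same approach as the paper: start from \cref{eq:tildexin}, apply Lemma~\ref{lem:useful_equality} termwise, split the resulting sum, and identify the two pieces via \cref{eq:red-first-cost} and \cref{eq:weight_cut_BPSP_sum}. The paper's proof is line-for-line the same argument.
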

\begin{proof}
To prove the proposition, we carry out the following calculation:
\begingroup
\allowdisplaybreaks
    \begin{align*}
\tilde{\xi}_n(x, z) & =\sum_{i=1}^{2 n-1}\left[\neg^{\eta(x, i)} z_{x_i} \neq z_{x_{i+1}}\right] \\
& =\sum_{i=1}^{2 n-1}\left\{\eta(x, i)+(-1)^{\eta(x, i)}\left[z_{x_i} \neq z_{x_{i+1}}\right]\right\} \\
& =\sum_{i=1}^{2 n-1} \eta(x, i)+\sum_{i=1}^{2 n-1}(-1)^{\eta(x, i)}\left[z_{x_i} \neq z_{x_{i+1}}\right] \\
& =\xi_n\left(\sigma_{\RF}(x)\right)-\wt_{G_x}(z).
\end{align*}
\endgroup
The first line above corresponds to \cref{eq:tildexin}. The second line follows from utilising Lemma~\ref{lem:useful_equality} to rewrite the indicator function as a sum of two terms. The third line rearranges the second line into a sum of two terms, ensuring that the first term equals $\xi_n\left(\sigma_{\RF}(x)\right)$ by \cref{eq:red-first-cost}, and the second term equals $-\wt_{G_x}(z)$ by \cref{eq:weight_cut_BPSP_sum}. From this rearrangement, we observe that the fourth line ensues.
\end{proof}

Next, note that \cref{eq:red_first_cost_total_weight} provides an equivalent expression for the number of colour swaps $\xi_n\left(\sigma_{\RF}(x)\right)$ in the red-first encoding. By substituting \cref{eq:red_first_cost_total_weight} into \cref{eq:paint_swap=red-first_minus_cut weight}, we obtain the following corollary:
\begin{corollary}\label{cor:xinxz}
    For any $z \in\{r, b\}^n$ and $x \in \Gamma_n$,
    \begin{align}
    \tilde{\xi}_n(x,z)=n-\frac{1}{2}+\frac{1}{2} \dl(x)+\frac{1}{2} W_{\mathrm{tot}}\left(G_x\right)-\wt_{G_x}(z).
    \label{eq:xinxz}
    \end{align}
\end{corollary}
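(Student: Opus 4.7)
The plan is to derive the claimed identity by chaining together two results that have already been established in the appendix: Corollary~\ref{cor:red_first_cost_total_weight}, which expresses the red-first cost in terms of the double letter count and the total weight of the BPSP graph, and Proposition~\ref{cor:paint_swap=red-first_minus_cut weight}, which expresses $\tilde{\xi}_n(x,z)$ as the difference between the red-first cost and the cut weight $\wt_{G_x}(z)$. Since both ingredients are already proved, the corollary follows by a direct substitution and is essentially a book-keeping exercise.

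Concretely, I would start from \cref{eq:paint_swap=red-first_minus_cut weight}, namely
\begin{equation}
    \tilde{\xi}_n(x,z) = \xi_n\bigl(\sigma_{\RF}(x)\bigr) - \wt_{G_x}(z),
\end{equation}
and then replace $\xi_n(\sigma_{\RF}(x))$ by the right-hand side of \cref{eq:red_first_cost_total_weight},
\begin{equation}
    \xi_n\bigl(\sigma_{\RF}(x)\bigr) = n - \tfrac{1}{2} + \tfrac{1}{2}\dl(x) + \tfrac{1}{2}W_{\mathrm{tot}}(G_x).
\end{equation}
Combining the two displays yields the desired formula for $\tilde{\xi}_n(x,z)$ in a single line, with no further manipulation required.

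There is no real obstacle here: the conceptual work has already been done in establishing the red-first cost formula (which in turn rested on Proposition~\ref{prop:total_weight_BPSP} relating $W_{\mathrm{tot}}(G_x)$ to the sum $\sum_k(-1)^{\eta(x,k)}$ and the double letter count) and in proving that the penalty of any ICC colouring $z$ relative to the red-first baseline is exactly $-\wt_{G_x}(z)$. The only thing to check is that the two equalities being composed are valid for all $x \in \Gamma_n$ and all $z \in \{r,b\}^n$, which they are as stated, so the substitution is unconditional. Hence the corollary is proved as an immediate consequence of the two preceding results, and there is nothing further to verify.
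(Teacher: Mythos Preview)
Your proof is correct and matches the paper's approach exactly: the paper derives \cref{eq:xinxz} by substituting \cref{eq:red_first_cost_total_weight} into \cref{eq:paint_swap=red-first_minus_cut weight}, precisely as you propose.
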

The above equation expresses the number of colour swaps involved when $x$ is coloured with $z$ in terms of the double-letter count of $x$, the total weight of the BPSP graph $G_x$, and the weight of the cut $z$ in $G_x$.


Next, by optimising \cref{eq:xinxz} over the cuts $z\in \{r,b\}^n$, we obtain the following corollary:  

\begin{corollary}
\label{cor:BPSP_reduces_to_MaxCut}
    Let $x \in \Gamma_n$. Then, we have that
    \begin{align}
    \label{eq:BPSP_maxcut}
        \textsc{BPSP}(x) &=\xi_n\left(\sigma_{\mathrm{RF}}(x)\right)-\textsc{MaxCut}(G_x) = n-\frac{1}{2}+\frac{1}{2} \dl(x)+\frac{1}{2} W_{\mathrm{tot}}\left(G_x\right)-\textsc{MaxCut}\left(G_x\right),
        \\
        \label{eq:BPSP_maxcut*}
        \widetilde{\textsc{BPSP}^*}(x)&=\textsc{MaxCut}^*\left(G_x\right).
    \end{align}
\end{corollary}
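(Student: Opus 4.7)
The plan is to derive both equalities as direct consequences of results already established in the excerpt---specifically Proposition~\ref{cor:paint_swap=red-first_minus_cut weight}, which gives $\tilde{\xi}_n(x,z) = \xi_n(\sigma_{\RF}(x)) - \wt_{G_x}(z)$, and Corollary~\ref{cor:red_first_cost_total_weight}, which provides the explicit expansion of $\xi_n(\sigma_{\RF}(x))$. The structure of the argument is almost entirely bookkeeping: once the BPSP cost is written as a constant (in $z$) minus the cut weight, minimisation of the former over $z$ becomes maximisation of the latter, yielding a MaxCut instance on $G_x$.

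First, I would invoke Proposition~\ref{icc_bijective_prop}, which ensures that $z \mapsto \mathcal{E}_n(x,z)$ is a bijection from $\{r,b\}^n$ onto $\Xi_n(x)$. Combined with \cref{eq:xi_tilde}, this gives
\[
\textsc{BPSP}(x) \;=\; \min_{f \in \Xi_n(x)} \xi_n(f) \;=\; \min_{z \in \{r,b\}^n} \tilde{\xi}_n(x,z),
\]
reducing the problem to minimising $\tilde{\xi}_n(x,\cdot)$ over $n$-bit ICC strings.

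Next, I would apply Proposition~\ref{cor:paint_swap=red-first_minus_cut weight} termwise. Since $\xi_n(\sigma_{\RF}(x))$ is independent of $z$, we obtain
\[
\min_{z} \tilde{\xi}_n(x,z) \;=\; \xi_n(\sigma_{\RF}(x)) \;-\; \max_{z} \wt_{G_x}(z) \;=\; \xi_n(\sigma_{\RF}(x)) \;-\; \textsc{MaxCut}(G_x),
\]
which is the first equality of \eqref{eq:BPSP_maxcut}. Substituting the explicit formula from Corollary~\ref{cor:red_first_cost_total_weight} for $\xi_n(\sigma_{\RF}(x))$ (or equivalently reading off from Corollary~\ref{cor:xinxz}) then yields the second equality. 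For \eqref{eq:BPSP_maxcut*}, the same decomposition gives
\[
\widetilde{\textsc{BPSP}^*}(x) \;=\; \argmin_{z \in \{r,b\}^n} \bigl[\xi_n(\sigma_{\RF}(x)) - \wt_{G_x}(z)\bigr] \;=\; \argmax_{z \in \{r,b\}^n} \wt_{G_x}(z) \;=\; \textsc{MaxCut}^*(G_x),
\]
since subtracting a $z$-independent constant and negating converts argmin into argmax without altering the optimising set.

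There is no real technical obstacle---the heavy lifting has been done in Corollaries~\ref{cor:red_first_cost_total_weight}, \ref{cor:cut_weight_eta}, and Proposition~\ref{cor:paint_swap=red-first_minus_cut weight}. The only point requiring care is the passage from the standard-encoding minimum over $\Xi_n(x)$ to the ICC-encoding minimum over $\{r,b\}^n$, which relies on the bijectivity established in Proposition~\ref{icc_bijective_prop}; this ensures that no valid colouring is lost or double-counted when switching encodings.
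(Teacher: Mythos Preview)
Your proposal is correct and follows essentially the same route as the paper's proof: both start from $\textsc{BPSP}(x)=\min_z\tilde{\xi}_n(x,z)$, apply Proposition~\ref{cor:paint_swap=red-first_minus_cut weight} to write this as a $z$-independent constant minus $\wt_{G_x}(z)$, convert the minimum into a maximum to obtain $\textsc{MaxCut}(G_x)$, and then substitute Corollary~\ref{cor:red_first_cost_total_weight} for the second equality (with the $\argmin\to\argmax$ argument handling \eqref{eq:BPSP_maxcut*}). Your explicit invocation of Proposition~\ref{icc_bijective_prop} to justify the passage from $\min_{f\in\Xi_n(x)}$ to $\min_{z\in\{r,b\}^n}$ is a point the paper had already recorded earlier and simply uses without re-citing.
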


\begin{proof}
    Let $x \in \Gamma_n \subset \Gamma$. Then \cref{eq:BPSP_maxcut} holds because
    \begingroup
\allowdisplaybreaks
    \begin{align*}
    \textsc{BPSP}(x) & =\min_{z \in \{r, b\}^n} \tilde{\xi}_n(x, z) 
    =\min_{z \in\{r, b\}^n} \left[\xi_n(\sigma_{\mathrm{RF}}(x))- \wt_{G_x}(z)\right] \\ &=\xi_n(\sigma_{\mathrm{RF}}(x))-\max_{z \in\{r, b\}^n} \wt_{G_x}(z) \\ &=\xi_n(\sigma_{\mathrm{RF}}(x))-\textsc{MaxCut}(G_x) = n-\frac{1}{2}+\frac{1}{2} \#_{d l}(x)+\frac{1}{2} W_{\mathrm{tot}}\left(G_x\right)-\textsc{MaxCut}\left(G_x\right),
    \end{align*}
    \endgroup
    where the last equality follows from \cref{eq:red_first_cost_total_weight}. Similarly, \cref{eq:BPSP_maxcut*} holds because
 \begingroup
\allowdisplaybreaks
     \begin{align*}
    \widetilde{\textsc{BPSP}^*}(x) & =\argmin_{z \in \{r, b\}^n} \tilde{\xi}_n(x, z) 
    =\argmin_{z \in\{r, b\}^n} \left[\xi_n(\sigma_{\mathrm{RF}}(x))- \wt_{G_x}(z)\right] \\ &=\argmin_{z \in\{r, b\}^n} \left(-\wt_{G_x}(z)\right) = \argmax_{z \in\{r, b\}^n} \left(\wt_{G_x}(z)\right) =\textsc{MaxCut}^*(G_x).
    \end{align*}
    \endgroup
    This completes the proof of the corollary.
\end{proof}

Finally, we note that Corollary~\ref{cor:BPSP_reduces_to_MaxCut} provides us with a polynomial-time reduction from the BPSP problem stated in Problem~\ref{problem:BPSP} to the weighted MaxCut problem presented in Problem~\ref{problem:MaxCut}. Specifically, one use of an oracle for Problem~\ref{problem:MaxCut} enables us to solve Problem~\ref{problem:BPSP} in polynomial time, as we demonstrate in Algorithm~\ref{alg:reduction_BPSP_MaxCut}.


\begin{algorithm}
\DontPrintSemicolon
\KwIn{$x \in \Gamma$}
\KwOut{$\textsc{BPSP}(x)$ and $\textsc{BPSP}^*(x)$}

Construct $G_x$, the \textsc{BPSP} graph of $x$, utilising Definition~\ref{def:bpsp_graph}.\;
Execute the weighted $\textsc{MaxCut}$ oracle for Problem~\ref{problem:MaxCut} on $G_x$ to obtain $\textsc{MaxCut}(G_x)$ and $z^* \in \textsc{MaxCut}^*(G_x)$.\;
Compute $\xi_n(\sigma_{\mathrm{RF}}(x))$ using \cref{eq:red_first_cost_total_weight}.\;
Compute 
$f^* = \mathcal E(x,z^*)$ using \cref{eq:En_function}.
\;
\Return $\textsc{BPSP}(x) = \xi_n(\sigma_\textsc{RF}(x)) - \textsc{MaxCut}(G_x)$ and $\textsc{BPSP}^*(x) = f^*$.
\caption{Polynomial-time reduction from \textsc{BPSP} (Problem~\ref{problem:BPSP}) to \textsc{MaxCut}
(Problem~\ref{problem:MaxCut})
}
\label{alg:reduction_BPSP_MaxCut}
\end{algorithm}

\section{Ising Formulation of BPSP}
\label{sec:Ising_formulation_BPSP}

In the Ising formulation, we assign numerical values to the formal symbols $r$ and $b$: 
\begin{align}
    r \equiv 1, \qquad b \equiv -1.
\end{align}

In the following lemma, we present several identities that will prove useful for translating formulas in \cref{sec:BPSP_graph} into an algebraic form:

\begin{lemma}
\label{lem:algebraic_identities}
    Let $y, z \in \{r, b\}$, where $r=1$ and $b=-1$, and $a \in \{0,1\}$. Then, we have:
    \begin{enumerate}[label=\normalfont{(\roman*)}]
        \item $\neg z=-z$.
        \item $\neg^a z=(-1)^a z$.
        \item $[y=z]=\frac{1}{2}(1+y z)$.
        \item $[y \neq z]=\frac{1}{2}(1-y z)$.
        \item \label{item:negazy} $\left[\neg^a z \neq y\right]=\frac{1}{2}-\frac{1}{2}(-1)^a z y$.
    \end{enumerate}
\end{lemma}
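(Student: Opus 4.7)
The lemma is essentially a dictionary between the Boolean operations on $\{r,b\}$ and their arithmetic counterparts on $\{+1,-1\}$. My plan is to establish identities (i) and (iii) directly, then derive (ii), (iv), (v) as immediate consequences, so that at most two small case checks are ever carried out.

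For (i), I would simply observe that by definition $\neg$ maps $r \mapsto b$ and $b \mapsto r$. Under the identification $r \equiv 1$, $b \equiv -1$, this becomes the map $1 \mapsto -1$ and $-1 \mapsto 1$, which is precisely multiplication by $-1$. For (ii), I would proceed by induction on $a \in \{0,1\}$: the base case $a=0$ gives $\neg^0 z = z = (-1)^0 z$, and the case $a=1$ is (i). Equivalently, one can write $(-1)^a = 1$ if $a=0$ and $-1$ if $a=1$, which matches $\neg^a z$ termwise.

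For (iii), since $y,z \in \{+1,-1\}$, the product $yz$ lies in $\{+1,-1\}$, and $yz = +1$ exactly when $y=z$ while $yz=-1$ when $y\neq z$. Hence $\tfrac{1}{2}(1+yz)$ equals $1$ iff $y=z$ and $0$ otherwise, which is the Iverson bracket $[y=z]$. For (iv), note $[y\neq z] = 1 - [y=z]$, so substituting (iii) gives $[y\neq z] = 1 - \tfrac{1}{2}(1+yz) = \tfrac{1}{2}(1-yz)$.

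Finally, for (v), I would combine (ii) and (iv): applying (iv) with the first argument replaced by $\neg^a z$ yields
\begin{equation}
[\neg^a z \neq y] = \tfrac{1}{2}\bigl(1 - (\neg^a z)\, y\bigr),
\end{equation}
and substituting $\neg^a z = (-1)^a z$ from (ii) gives $\tfrac{1}{2} - \tfrac{1}{2}(-1)^a z y$, as claimed. None of these steps is a genuine obstacle; the only care needed is to keep the $\{r,b\} \leftrightarrow \{+1,-1\}$ identification explicit so that statements like $\neg z = -z$ (which mixes a formal symbol operation with an arithmetic one) are meaningful. Everything else is a one-line calculation or case check.
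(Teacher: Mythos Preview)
Your proof is correct and follows essentially the same approach as the paper: both verify (i) and (iii) by direct case inspection on $\{+1,-1\}$, derive (ii) by the two cases $a=0,1$, obtain (iv) from (iii) via $[y\neq z]=1-[y=z]$, and then get (v) by substituting (ii) into (iv). The only cosmetic difference is that you phrase the derivation of (v) as ``apply (iv) then substitute (ii)'' whereas the paper writes the chain $[\neg^a z \neq y]=[(-1)^a z \neq y]=\tfrac12(1-(-1)^a zy)$ directly, which is the same computation.
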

\begin{proof}  To prove these identities, it suffices to check that equality holds for each of the possible discrete binary values of $y$, $z$, and $a$. Indeed, (i) holds since $\neg r = b = -1 = -r$ and $\neg b = r = 1 = -b$. (ii) holds because $a=0$ implies that $\neg^a z=z=(-1)^0 z$ and $a=1$ implies that $\neg^a z=\neg z=-z=(-1)^1 z$. (iii) holds because
\begin{align}
    yz= \begin{cases}
1, & y=z \\ -1, & y \neq z
\end{cases}\quad
\implies \quad
    \frac{1}{2}(1+yz) = \begin{cases}
1, & y=z \\ 0, & y \neq z
\end{cases}\
= \  [y=z].
\end{align}
(iv) holds since $[y \neq z]=1-[y=z]=1-\frac{1}{2}(1+y z)=\frac{1}{2}(1-y z)$. Finally, (v) holds because ${\left[\neg^a z \neq y\right] }  =\left[(-1)^a z \neq y\right] =\frac{1}{2}\left(1-(-1)^a z y\right) = \frac{1}{2}-\frac{1}{2}(-1)^a z y$.
\end{proof}

\subsection{Ising Expressions for BPSP Cost Function}

We now present a formula for the colour swap tally $\tilde \xi_n(x,z)$ from \cref{eq:tildexin} in terms of the Ising formulation.

\begin{proposition} \label{prop:Ising1}
    Consider $z \in\{-1,1\}^n$ and $x \in \Gamma_n$. Then, we can express the number of colour swaps required when $x$ is coloured with $z$ as
    \begin{align} \label{eq:xi_Ising1}
\tilde{\xi}_n(x, z)&=n-\frac{1}{2}-\frac{1}{2} \sum_{i=1}^{2 n-1}(-1)^{\eta(x, i)} z_{x_i} z_{x_{i+1}}
\\
\label{eq:xi_Ising2}
&=n-\frac{1}{2}+\frac{1}{2} \dl(x)-\frac{1}{2} \sum_{\substack{i=1 \\ x_i \neq x_{i+1}}}^{2 n-1}(-1)^{\eta(x, i)} z_{x_i} z_{x_{i+1}}.
\end{align}
\end{proposition}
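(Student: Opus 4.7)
The plan is to derive both equalities directly from \cref{eq:tildexin} by promoting the Iverson bracket expression for $\tilde{\xi}_n(x,z)$ to an algebraic form using the identities established in Lemma~\ref{lem:algebraic_identities}. Specifically, the starting point is
\begin{equation*}
\tilde{\xi}_n(x,z) = \sum_{i=1}^{2n-1}\bigl[\neg^{\eta(x,i)} z_{x_i} \neq z_{x_{i+1}}\bigr],
\end{equation*}
and the key observation is that item~\ref{item:negazy} of Lemma~\ref{lem:algebraic_identities} gives, on setting $a = \eta(x,i)$, $z = z_{x_i}$, $y = z_{x_{i+1}}$, the pointwise identity $[\neg^{\eta(x,i)} z_{x_i} \neq z_{x_{i+1}}] = \tfrac{1}{2} - \tfrac{1}{2}(-1)^{\eta(x,i)} z_{x_i} z_{x_{i+1}}$.

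Substituting this into the sum and pulling constants out yields $\tilde{\xi}_n(x,z) = \tfrac{2n-1}{2} - \tfrac{1}{2}\sum_{i=1}^{2n-1}(-1)^{\eta(x,i)} z_{x_i} z_{x_{i+1}}$, which simplifies to the first claimed expression \cref{eq:xi_Ising1} after writing $\tfrac{2n-1}{2} = n - \tfrac{1}{2}$. This step is essentially a one-line application of the lemma.

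For the second equality \cref{eq:xi_Ising2}, the plan is to partition the summation range $\{1,\ldots,2n-1\}$ according to whether $x_i = x_{i+1}$ or $x_i \neq x_{i+1}$. On indices with $x_i = x_{i+1}$, two simplifications occur simultaneously: first, $z_{x_i} z_{x_{i+1}} = z_{x_i}^2 = 1$ since $z \in \{-1,1\}^n$; second, Lemma~\ref{eq:eta_consecutive_symbols} guarantees $\eta(x,i) = 1$, so $(-1)^{\eta(x,i)} = -1$. Hence each such index contributes $-1$ to the sum, and together these indices contribute $-\dl(x)$ by the definition in \cref{eq:dl}. Pulling this contribution out of the full sum and substituting into the expression obtained in the previous paragraph immediately produces \cref{eq:xi_Ising2}.

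No serious obstacle is anticipated: the proposition is a mechanical translation of the Iverson-bracket form of $\tilde{\xi}_n$ into a spin-product form using identities already assembled in Lemma~\ref{lem:algebraic_identities} and Lemma~\ref{eq:eta_consecutive_symbols}. The only place where care is needed is in the bookkeeping of signs when splitting the sum for the second equality, where one must track that the sign flip from $(-1)^{\eta(x,i)} = -1$ combines with the outer $-\tfrac{1}{2}$ prefactor to produce a $+\tfrac{1}{2}\dl(x)$ term rather than a $-\tfrac{1}{2}\dl(x)$ term.
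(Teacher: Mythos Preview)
Your proposal is correct and follows essentially the same approach as the paper: apply Lemma~\ref{lem:algebraic_identities}\ref{item:negazy} termwise to \cref{eq:tildexin} to obtain \cref{eq:xi_Ising1}, then split the sum according to whether $x_i=x_{i+1}$ and use Lemma~\ref{eq:eta_consecutive_symbols} together with $z_{x_i}^2=1$ to extract the $\tfrac{1}{2}\dl(x)$ term for \cref{eq:xi_Ising2}. Your sign bookkeeping is exactly right.
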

\begin{proof} 
By applying Lemma~\ref{lem:algebraic_identities}\ref{item:negazy} to \cref{eq:tildexin} and simplifying, we obtain:
\begingroup
\allowdisplaybreaks
    \begin{align*}
    \widetilde{\xi}_n(x, z) & =\sum_{i=1}^{2 n-1}\left[\neg^{\eta(x, i)} z_{x_i} \neq z_{x_{i+1}}\right] \\
    & =\sum_{i=1}^{2 n-1} \frac{1}{2}\left(1-(-1)^{\eta(x, i)} z_{x_i} z_{x_{i+1}}\right) \\
    & =n-\frac{1}{2}-\frac{1}{2} \sum_{i=1}^{2 n-1}(-1)^{\eta(x, i)} z_{x_i} z_{x_{i+1}} \\
    &= n-\frac{1}{2}-\frac{1}{2}\left(\sum_{\substack{i=1 \\
    x_i=x_{i+1}}}^{2 n-1}+\sum_{\substack{i = 1 \\
    x_i \neq x_{i+1}}}^{2 n-1}\right)(-1)^{\eta(x, i)} z_{x_i} z_{x_{i+1}} \\
    &=n-\frac{1}{2}-\frac{1}{2} \sum_{\substack{i=1 \\ x_i \neq x_{i+1}}}^{2 n-1}(-1)^{\eta(x, i)} z_{x_i} z_{x_{i+1}}-\frac{1}{2} \sum_{\substack{i=1 \\ x_i=x_{i+1}}}^{2 n-1}(-1)^{\eta(x, i)} z_{x_i} z_{x_{i+1}}\\
    &=n-\frac{1}{2}-\frac{1}{2} \sum_{\substack{i=1 \\ x_i \neq x_{i+1}}}^{2 n-1}(-1)^{\eta\left(x,i\right)} z_{x_i} z_{x_{i+1}}+\frac{1}{2}\left|\left\{i \in[2 n-1]: x_i=x_{i+1}\right\}\right| \\
    &=n-\frac{1}{2}-\frac{1}{2} \sum_{\substack{i=1 \\ x_i \neq x_{i+1}}}^{2 n-1}(-1)^{\eta(x, i)} z_{x_i} z_{x_{i+1}}+\frac{1}{2} \dl(x),
    \end{align*}
    \endgroup
where the sixth line follows from the observation that when $x_i = x_{i+1}$, we have $\eta(x,i) = 1$ (by Lemma~\ref{eq:eta_consecutive_symbols}) and $z_{x_i}z_{x_{i+1}} = z_{x_i}^2 = 1$. The third and last lines provide the desired expressions \cref{eq:xi_Ising1} and \cref{eq:xi_Ising2}, respectively.
\end{proof}

Finally, through optimisation of \cref{eq:xi_Ising1} and  \cref{eq:xi_Ising2} over the cuts $z\in \{-1,1\}^n$, and noting that for $n=|x|/2$ for $x-\Gamma_n$, we arrive at the following corollary:  

\begin{corollary}  \label{cor:Ising2}
    Let $x \in \Gamma$ be a BPSP instance. Then the optimal cost and the set of optimal solutions of the BPSP problem are, respectively,
\begingroup
\allowdisplaybreaks
    \begin{align}
    \textsc{BPSP}(x) & =\frac{1}{2}\left\{|x|-1-\max_{z \in \{-1,1\}^{|x|/2}} \sum_{i=1}^{|x|-1}(-1)^{\eta(x, i)} z_{x_i} z_{x_{i+1}}\right\} \\
    & =\frac{1}{2}\left\{|x|-1+\dl(x)-\max_{z \in \{-1,1\}^{|x|/2}} \sum_{\substack{i=1 \\ x_i \neq x_{i+1}}}^{|x|-1} (-1)^{\eta(x, i)} z_{x_i} z_{x_{i+1}}\right\}
\\    \widetilde{\textsc{BPSP}^*}(x) & =\argmax_{z \in\{-1,1\}^{|x| / 2}} \sum_{i=1}^{|x|-1}(-1)^{\eta(x, i)} z_{x_i} z_{x_{i+1}} \\
& =\argmax_{z \in\{-1,1\}^{|x| / 2}} \sum_{\substack{i=1 \\
x_i \neq x_{i+1}}}^{|x|-1}(-1)^{\eta(x, i)} z_{x_i} z_{x_{i+1}}.
\end{align}
\endgroup
\end{corollary}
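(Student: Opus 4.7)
The plan is to derive the corollary as a direct consequence of Proposition~\ref{prop:Ising1} combined with the ICC reformulation of the BPSP objective introduced in Section~\ref{sec:bpsp_icc_encoding}. Specifically, I would start by setting $n = |x|/2$ (so that $2n-1 = |x|-1$) and invoking the identities $\textsc{BPSP}(x) = \min_{z \in \{-1,1\}^{n}} \tilde{\xi}_n(x,z)$ and $\widetilde{\textsc{BPSP}^*}(x) = \argmin_{z \in \{-1,1\}^{n}} \tilde{\xi}_n(x,z)$, both of which were established earlier for $z\in\{r,b\}^n$ and carry over verbatim once $r$ and $b$ are identified with $+1$ and $-1$.

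Next, I would substitute the two equivalent Ising expressions for $\tilde{\xi}_n(x,z)$ from Proposition~\ref{prop:Ising1}. Because the constant terms $n-\tfrac{1}{2}$ and $n-\tfrac{1}{2}+\tfrac{1}{2}\dl(x)$ do not depend on $z$, minimising $\tilde{\xi}_n(x,z)$ is equivalent to maximising the quadratic form $\sum_{i=1}^{2n-1}(-1)^{\eta(x,i)} z_{x_i} z_{x_{i+1}}$, and likewise for the restricted sum over $i$ with $x_i \neq x_{i+1}$. Applying the identity $\min_z (C - \tfrac{1}{2} S(z)) = C - \tfrac{1}{2}\max_z S(z)$, multiplying through by $2$, and using $2n = |x|$, yields both forms of $\textsc{BPSP}(x)$ claimed in the corollary.

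For the argmin/argmax statements, I would note that the transformation $\tilde{\xi}_n(x,z) \mapsto -\tfrac{1}{2} S(z) + \text{const}$ is affine with negative leading coefficient, so $\argmin_z \tilde{\xi}_n(x,z) = \argmax_z S(z)$ as sets. The equivalence of the two argmax formulations (full sum vs.\ sum restricted to $x_i \neq x_{i+1}$) follows because the omitted terms with $x_i = x_{i+1}$ contribute $(-1)^{\eta(x,i)} z_{x_i}^2 = -1$ (using $\eta(x,i)=1$ from Lemma~\ref{eq:eta_consecutive_symbols} and $z_{x_i}^2=1$), which is a $z$-independent constant and thus does not affect the argmax.

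There is no real obstacle here: the corollary is essentially a repackaging of Proposition~\ref{prop:Ising1} after taking minima/argmins. The only mild care needed is the bookkeeping between $n$ and $|x|$, and the observation that the dropped diagonal terms contribute a constant so that the argmax remains unchanged under the restriction $x_i \neq x_{i+1}$.
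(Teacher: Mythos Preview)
Your proposal is correct and mirrors the paper's own derivation: the corollary is obtained simply by minimising the two Ising expressions for $\tilde{\xi}_n(x,z)$ from Proposition~\ref{prop:Ising1} over $z\in\{-1,1\}^n$, writing $n=|x|/2$, and observing that the $z$-independent constants pull out so that the $\min$ becomes a $\max$ (and $\argmin$ becomes $\argmax$). Your remark that the diagonal terms with $x_i=x_{i+1}$ contribute a fixed constant (via Lemma~\ref{eq:eta_consecutive_symbols}) is exactly the mechanism behind the equivalence of the two argmax formulations, and is implicit in the paper's passage from \eqref{eq:xi_Ising1} to \eqref{eq:xi_Ising2}.
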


In essence, these expressions unveil the underlying connection between solving the BPSP and finding the ground state energy of the associated Ising Hamiltonian:
\begin{align}
    H(z_1,\ldots, z_n) = -\sum_{\substack{i=1 \\
x_i \neq x_{i+1}}}^{2n-1}(-1)^{\eta(x, i)} z_{x_i} z_{x_{i+1}}.
\end{align}

\end{document}